\def\BibTeX{{\rm B\kern-.05em{\sc i\kern-.025em b}\kern-.08em T\kern-.1667em\lower.7ex\hbox{E}\kern-.125emX}}
\newtheorem{proposition}{Proposition}
\newtheorem{lemma}[proposition]{Lemma}
\newtheorem{theorem}[proposition]{Theorem}
\newtheorem{corollary}{Corollary}[proposition]
\newtheorem{claim}{Claim}[proposition]
\theoremstyle{definition}
\newtheorem{assumption}[proposition]{Assumption}
\theoremstyle{remark}
\newtheorem{remark}{Remark}
\renewcommand{\qedsymbol}{\rule{.7em}{.7em}}
\newcommand{\oprocend}{\relax\ifmmode\else\unskip\hfill\fi\qedsymbol}
\newcommand{\col}{\mathrm{col}}
\newcommand{\Diag}{\mathrm{Diag}}
\newcommand{\mat}[1]{\begin{bmatrix}#1\end{bmatrix}}
\newcommand{\neatmat}[1]{\text{\small$\mat{#1}$}}
\newcommand{\compactmat}[1]{\text{\footnotesize$\mat{#1}$}}
\newcommand{\smallmat}[1]{\begin{bsmallmatrix}#1\end{bsmallmatrix}}
\newcommand{\compact}[1]{\text{\footnotesize$#1$}}
\newcommand{\Hank}{\mathcal{H}}
\newcommand{\real}{\mathbb{R}}
\newcommand{\integer}{\mathbb{Z}}
\renewcommand{\natural}{\mathbb{N}}
\newcommand{\complex}{\mathbb{C}}
\newcommand{\symmetric}{\mathbb{S}}
\newcommand{\normal}{\mathcal{N}}
\newcommand{\simiid}{\overset{\rm i.i.d.}{\sim}}
\newcommand{\minus}{{\scalebox{1.2}[1] -}}
\newcommand{\plus}{{\scalebox{.7} +}}
\newcommand{\transpose}{{\sf T}}
\newcommand{\initial}{{\sf ini}}
\newcommand{\auxiliary}{{\sf aux}}
\newcommand{\original}{{\sf orig}}
\newcommand{\orthogonal}{{\sf orth}}
\newcommand{\data}{{\rm d}}
\newcommand\thickbar[1]{\accentset{\rule{.4em}{.8pt}}{#1}}
\DeclareMathOperator*{\minimize}{minimize}
\DeclareMathOperator*{\argmin}{argmin}
\newcommand{\tb}{\color{blue}}
\renewcommand{\tb}{}
\newif\ifversionA
\newif\ifversionB
\ifversionA\newcommand{\Versions}[2]{#1}\fi
\ifversionB\newcommand{\Versions}[2]{#2}\fi
\begin{document}

\title{Stochastic Data-Driven Predictive Control \\ with Equivalence to Stochastic MPC}

\author{Ruiqi Li,  \IEEEmembership{Student Member, IEEE}, John W. Simpson-Porco, \IEEEmembership{Senior Member, IEEE}, and Stephen L.\ Smith,  \IEEEmembership{Senior Member, IEEE}% <-this % stops a space
\thanks{This research is supported in part by the Natural Sciences and Engineering Research Council of Canada (NSERC).}%
\thanks{Ruiqi Li and Stephen L. Smith are with the Electrical and Computer Engineering at the University of Waterloo, Waterloo, ON, Canada
{\tt\small \{r298li,stephen.smith\}@uwaterloo.ca}}%
\thanks{John W. Simpson-Porco is with the Department of Electrical and Computer Engineering at the University of Toronto, Toronto, ON, Canada
{\tt\small jwsimpson@ece.utoronto.ca}}
}

\maketitle
\thispagestyle{empty}
\pagestyle{empty}

\begin{abstract}
    We propose a data-driven receding-horizon control method dealing with the chance-constrained output-tracking problem of unknown stochastic linear time-invariant (LTI) systems.
    The proposed method takes into account the statistics of the process noise, the measurement noise and the uncertain initial condition, following an analogous framework to Stochastic Model Predictive Control (SMPC), but does not rely on the use of a parametric system model. 
    As such, our receding-horizon algorithm produces a sequence of closed-loop control policies for predicted time steps, as opposed to a sequence of open-loop control actions.
    Under certain conditions, we establish that our proposed data-driven control method produces identical control inputs as that produced by the associated model-based SMPC. 
    Simulation results on a grid-connected power converter are provided to illustrate the performance benefits of our methodology.
\end{abstract}

\section{Introduction}

Model predictive control (MPC) is a widely used multi-variable control technique \cite{MPC:mayne2014}, capable of handling hard constraints on inputs, states, and outputs, along with complex performance criteria. Constraints can model actuator saturations or encode safety constraints in safety-critical applications. As the name suggests, MPC uses a system model, obtained either from first-principles modelling or from identification, to predict how inputs will influence the system evolution. MPC is therefore an \emph{indirect} design method, since one goes from data to a controller through an intermediate modelling step \cite{DDC:markovsky2021, DDC:markovsky2022}. In contrast, \emph{direct} methods, or data-driven methods, seek to compute controllers directly from input-output data. Data-driven methods show promise for systems that are complex or difficult to model \cite{DDC:dorfler2023, DDC:Hou2013}.

For stochastic systems, work on \emph{Stochastic MPC (SMPC)} \cite{SMPC:mesbah2016, SMPC:heirung2018, SMPC:farina2016} has focused on modelling the uncertainty in systems probabilistically.
SMPC methods optimize over feedback control policies rather than control actions, resulting in performance benefits when compared to the naive use of deterministic MPC \cite{SMPC_MPC:kumar2019}. Additionally, SMPC allows the use of probabilistic constraints, useful for computing risk-aware controllers. Another MPC method dealing with uncertainty is \emph{Robust MPC (RMPC)} \cite{RMPC:bemporad2007}, which attempts to conservatively guard against the worst-case deterministic uncertainty; our focus here is on the stochastic case.

The best achievable performance obtainable via data-driven control is clearly that of model-based control with a perfect model.
For deterministic linear time-invariant (LTI) systems, recent work has demonstrated that the data-driven control methods can indeed produce controls that are equivalent to their model-based counterparts \cite{SPC:huang2008, DeePC:coulson2019a}. 
However, for stochastic systems, equivalence between a data-based and model-based method have not been established, except in a few special cases which will be discussed shortly. 
Thus, the focus of this work is to develop a stochastic data-driven control framework with provable equivalence to its model-based SMPC counterpart.

\emph{Related Work:} Although data-driven control has been developed for decades, early work on data-driven methods did not adequately account for constraints on input and output; see examples in \cite{DDC:Hou2013}.  This observation led to the development of \emph{Data-Driven Predictive Control (DDPC)} as data-driven control methods incorporating input and output constraints.
Two of the best known DDPC methods are Data-enabled Predictive Control (DeePC) \cite{DeePC:coulson2019a, DeePC:coulson2019b, DeePC:coulson2021} and Subspace Predictive Control (SPC) \cite{SPC:huang2008}, both of which have been applied in multiple experiments with reliable results \cite{DeePCApp:elokda2021quadcopters, DeePCApp:carlet2020motorDrives, 
DeePCApp:mahdavipour2022combinedCycle, DeePCApp:huang2019gridConnected, DeePCApp:huang2021oscillationDamping, DeePCApp:zhao2021frequencyRegulation}.
On the theoretical side, for \emph{deterministic} LTI systems, both DeePC and SPC yield equivalent control actions to MPC, which is their model-based counterpart \cite{DeePC:coulson2019a, SPC:huang2008}.

Beyond the idealized case with deterministic linear systems, real-world systems are often stochastic and non-linear, and real-life data typically are perturbed by noise.
Hence, data-driven methods in practice need to adapt to data that is subject to these perturbations.
Most classical data-driven control methods are designed in robust ways \cite{DDC:Hou2013}, so their control performances are not sensitive to noisy data.
In application of SPC with noisy data, a predictor matrix is often computed with denoising methods, such as prediction error methods \cite{DeePCApp:huang2019gridConnected, DeePCApp:huang2021oscillationDamping} and truncated singular value decomposition \cite{DeePCApp:carlet2020motorDrives}.

Robust versions of DeePC have also been developed with stochastic systems in mind, such as norm-based regularized DeePC \cite{DeePC:coulson2019a, DeePC:coulson2019b} in which the regularization can be interpreted as a result of worst-case robust optimization \cite{DeePC:huang2021, DeePC:huang2023}, as well as distributionally robust DeePC \cite{DeePC:coulson2019b, DeePC:coulson2021}. Some other variations of DeePC were designed in purpose of ensuring closed-loop stability \cite{DDMPC:berberich2020a, DDMPC:berberich2020b, DDMPC:berberich2020c, DDMPC:berberich2021}, robustness to nonlinear systems \cite{DeePC:huang2023RoK} etc.
Although the stochastic adaptations of DeePC and SPC were validated through experiments, these stochastic data-driven methods do not possess an analogous theoretical equivalence to any Stochastic MPC or model-based method. 
Other related works include a tube-based \cite{kerz2023}, a sampling-based \cite{teutsch2024}, an innovation-based \cite{wang2025} and a constraint-tightening \cite{yin2024} stochastic DDPC scheme.
Again, however, no equivalence in performance was established between these methods and model-based MPC methods.

This disconnect between data-driven and model-based methods in the stochastic case has been noticed by some researchers, and some recent DDPC methods were developed for stochastic systems that have provable equivalence to model-based MPC methods. 
The works in \cite{PCE:pan2022a, PCE:pan2022b, PCE:pan2023a, PCE:pan2023b} proposed data-driven control frameworks for stochastic systems applying Polynomial Chaos Expansion (PCE); the use of PCE enables modeling of arbitrary random variables of finite mean and variance. Their methods have equivalent performance to SMPC when disturbances are known and when stochastic signals are exactly represented by finite PCE terms \cite[Thm. 1]{PCE:pan2022a} \cite[Cor. 1]{PCE:pan2022b}. 
In practice, disturbances should be estimated using input-output data, which requires heavier computation with larger amounts of data.
Their frameworks have considered systems without sensor noise and systems in the Auto-Regressive form with eXogenous input (ARX), which are special cases of systems in the state-space representation.
Thus, the gap addressed in this paper is to develop an alternative data-driven stochastic control method that has provably equivalent performance to the model-based SMPC, where we only estimate a fixed number of parameters regardless of data amount, and we consider general systems in the state-space form with separate process and sensor noise.

\emph{Contributions:} We develop a DDPC control method for stochastic LTI systems. Our technical approach is based on the construction of an auxiliary state model directly parameterized by input-output data. Mirroring SMPC, we formulate a stochastic control problem using this data-based auxiliary model, and establish equivalence between the proposed data-driven approach and its model-based SMPC counterpart. Our approach preserves three key features and benefits of SMPC. 
First, our formulation includes both process noise and measurement noise, so one can study the effect of different noise magnitudes on the control performance.
Second, we produce a feedback control policy at each time step, so that the control inputs are decided after real-time measurements in a closed-loop manner. 
Third, our control method incorporates safety chance constraints, which are consistent with the SMPC framework that we investigate.
Our data-driven method is established with symbolic analogy to SMPC, which enables us to adapt to data-driven counterparts of other SMPC settings, such as distributionally robust SMPC and correlated-noise SMPC.

\emph{Organization:} The rest of the paper is organized as follows.
Section \ref{SECTION:problem} shows the formal problem statement, with a brief overview of SMPC in Section \ref{SECTION:problem:SMPC}.
Our control method is introduced in Section \ref{SECTION:DDSMPC}, where we show the formulation and the theoretical performance guarantee, i.e., equivalence to SMPC.
Simulation results are displayed in Section \ref{SECTION:simulations}, comparing our proposed method and some benchmark control methods, and Section \ref{SECTION:conclusion} is the conclusion.

\emph{Notation:} 
Let $M^\dagger$ be the pseudo-inverse of a matrix $M$.
Let $\otimes$ denote the Kronecker product.
Let $\symmetric^q_+$ and $\symmetric^q_{++}$ be the sets of $q\times q$ positive semi-definite and positive definite matrices respectively.
Let $\col(M_1, \ldots, M_k)$ denote the column concatenation, and $\Diag(M_1, \ldots, M_k)$ the block-diagonal concatenation, of matrices/vectors $M_1, \ldots, M_k$.
Let $\integer_{[a,b]} := [a,b] \cap \integer$ denote a set of consecutive integers from $a$ to $b$. Let $\integer_{[a,b)} := \integer_{[a,b-1]}$.
For a $\real^q$-valued discrete-time signal $z_t$ with integer index $t$, let $z_{[t_1, t_2]}$ denote either a sequence $\{z_t\}_{t=t_1}^{t_2}$ or a concatenated vector $\col(z_{t_1}, \ldots, z_{t_2}) \in \real^{q(t_2-t_1+1)}$ where the usage is clear from the context. Similarly, let $z_{[t_1,t_2)} := z_{[t_1,t_2-1]}$.
A matrix sequence $\{M_t\}_{t=t_1}^{t_2}$ and a function sequence $\{\pi_t(\cdot)\}_{t=t_1}^{t_2}$ are denoted by $M_{[t_1,t_2]}$ and $\pi_{[t_1,t_2]}$ respectively.

\section{Problem Statement} \label{SECTION:problem}

We consider a stochastic linear time-invariant (LTI) system
\begin{subequations} \label{Eq:LTI} \begin{align}
    \label{Eq:LTI:state}
    x_{t+1} =&\; A x_t + B u_t + w_t, \\
    \label{Eq:LTI:output}
    y_t =&\; C x_t + D u_t + v_t,
\end{align} \end{subequations}
with input $u_t \in \real^m$, state $x_t \in \real^n$, output $y_t \in \real^p$, process noise $w_t \in \real^n$, and measurement noise $v_t \in \real^p$, all of which are random variables.
The initial state $x_0$ is uncertain with given mean $\mu^{\sf x}_\initial$ and with variance to be specified by a steady-state Kalman filter.
The system matrices $A,B,C,D$ are \emph{unknown} and the state $x_t$ is \emph{unmeasured}; we have access only to the input $u_t$ and output $y_t$ in \eqref{Eq:LTI}.
The disturbances $w_t$ and $v_t$ in \eqref{Eq:LTI} are independent of each other and of $x_0$, and are independently and identically distributed (i.i.d.) normally with zero mean and with variances $\Sigma^{\sf w} \in \symmetric^n_+$ and $\Sigma^{\sf v} \in \symmetric^p_{++}$ respectively, i.e.,
\begin{align} \label{Eq:noise_distribution}
    w_t \simiid \normal(0_{n\times 1}, \Sigma^{\sf w}), \quad
    v_t \simiid \normal(0_{p\times 1}, \Sigma^{\sf v}).
\end{align}
We assume the system $(A,B,C,D)$ is controllable and observable (i.e., a minimal realization), where observability is assumed without loss of generality for an unknown system \cite[Sec. 2.4]{DDC:markovsky2021}. %controllability is required for control purposes
Let $L \in \natural$ be such that the extended observability matrix $\mathcal{O} := \col(C, CA, \ldots, CA^{L-1})$ has full column rank; such smallest $L$ is the \emph{lag} of the system \cite{DDC:markovsky2021, DDC:markovsky2022}. Finally, we assume the pair $(A, \Sigma^{\sf w})$ is stabilizable (or equivalently, $(A, (\Sigma^{\sf w})^{1/2})$ is stabilizable), which will subsequently ensure uniqueness of the state variance by the Kalman filter \cite{kuvcera1972}.

In a reference tracking problem, the objective is for the output $y_t$ to follow a specified reference signal $r_t \in \real^p$. The trade-off between tracking error and control effort may be encoded in the cost
\begin{align} \label{Eq:stage_cost}
    J_t (u_t, y_t) := \Vert y_t-r_t \Vert_Q^2 + \Vert u_t \Vert_R^2 
\end{align}
to be minimized over a horizon, where $Q \in \symmetric^p_{++}$ and $R \in \symmetric^m_{++}$ are user-selected parameters. This tracking should be achieved subject to constraints on the inputs and outputs. We consider a polytopic constraint in the form $E \, \col(u_t, y_t) \leq f$, modeled in the stochastic setting as a probabilistic \emph{chance constraint}
\begin{align} \label{Eq:safety_constraint}
    \mathbb{P} \big\{ E \, \col(u_t, y_t) \leq f \big\} \geq 1 - p
\end{align}
for $t \in \natural_{\geq 0}$, where $E \in \real^{q \times (m+p)}$ is a fixed matrix, $f \in \real^q$ is a fixed vector, with some $q \in \natural$, and $p \in (0,1)$ is a probability bound of constraint violation.
One can similarly impose multiple chance constraints, e.g., separate input and output chance constraints, in the form of \eqref{Eq:safety_constraint}.

In a model-based setting where $A,B,C,D$ are known, the general control problem above can be addressed by SMPC, as will be reviewed in Section \ref{SECTION:problem:SMPC}.
Our broad objective is to construct a direct data-driven method that addresses the same stochastic control problem and is equivalent, under certain tuning conditions, to SMPC.

\begin{remark}[\bf Output Constraints and Output Tracking] \label{Remark:output_objectives}
    State constraints and costs are commonly considered in MPC and SMPC methods \cite{MPC:mayne2014, SMPC:mesbah2016, SMPC:heirung2018, SMPC:farina2016}, being used to enforce safety conditions and quantify control performance, respectively. Our problem setup focuses on output control, with the internal state being unknown and unrealized. For this reason, we instead considered input-output constraint \eqref{Eq:safety_constraint} for safety conditions and output-tracking cost \eqref{Eq:stage_cost} for performance evaluation, which are both common in DDPC methods such as \cite{DeePC:coulson2019a}. 
    \hfill \oprocend
\end{remark}
\subsection{Stochastic MPC: A Benchmark Model-Based Design} \label{SECTION:problem:SMPC}

Several formulations of SMPC methods have been developed in the literature \cite[Table 2]{SMPC:mesbah2016}.
Our focus is on output-feedback SMPC \cite{OFSMPC:cannon2012, OFSMPC:farina2015, OFSMPC:joa2023, OFSMPC:ridderhof2020, OFSMPC:hokayem2012}, which is typically approached by enforcing a separation principle within the design, augmenting full-state-feedback SMPC with state estimation. 
Our formulation here is based on an affine feedback-policy parameterization, following e.g., \cite{OFSMPC:farina2015, OFSMPC:joa2023}, with the modifications that we consider output tracking and output constraints, as opposed to state objectives.
The SMPC method under consideration here also integrates interpolation of initial condition \cite{Kohler2022, Schluter2022}, which is required for recursive feasibility with unbounded noise, and approximation of chance constraints \cite{Ono2008}, which leads to a tractable optimization problem.

\subsubsection{Initial Condition and State Estimation}
SMPC follows a receding-horizon strategy and makes decisions for $N$ upcoming steps at each \emph{control step}.
At control step $t=k$, the initial condition of the state $x_k$ is modelled as
\begin{align} \label{Eq:initial_condition}
    x_k \;\sim\; \normal (\mu^{\sf x}_k, \Sigma^{\sf x}),
\end{align}
where the mean $\mu^{\sf x}_k \in \real^n$ depends on a decision variable $\theta \in [0,1]$, according to an interpolation technique to be introduced in Section \ref{SECTION:problem:SMPC}-2.
The state variance $\Sigma^{\sf x} \in \symmetric^n_+$ in \eqref{Eq:initial_condition} is fixed and induced by the steady-state Kalman filter. Specifically, $\Sigma^{\sf x}$ is the \emph{unique} positive semidefinite solution to the associated discrete-time algebraic Riccatti equation (DARE) \cite{kuvcera1972}
\begin{subequations} \label{Eq:state_variance} \begin{align} \label{Eq:state_variance:DARE}
    & \Sigma^{\sf x} = (A - L_{\sf L} C) \Sigma^{\sf x} A^\transpose + \Sigma^{\sf w} \\
\label{Eq:state_variance:gain}
    & L_{\sf L} := A L_{\sf K}, \quad 
    L_{\sf K} := \Sigma^{\sf x} C^\transpose (C \Sigma^{\sf x}C^\transpose + \Sigma^{\sf v})^{-1}
\end{align} \end{subequations}
given detectable $(A, C)$ and stabilizable $(A, \Sigma^{\sf w})$, where we let $L_{\sf K} \in \real^{n \times p}$ denote the steady-state Kalman gain and $L_{\sf L} \in \real^{n \times p}$ the associated Luenberger observer gain.

With the initial condition \eqref{Eq:initial_condition}, we simulate the noise-free model for future $N$ time steps,\!
\begin{subequations} \label{Eq:nominal_model} \begin{align}
    \label{Eq:nominal_model:state}
    \thickbar x_{t+1} &:= A \thickbar x_t + B \thickbar u_t, && t \in \integer_{[k,k+N)} \\
    \label{Eq:nominal_model:output}
    \thickbar y_t &:= C \thickbar x_t + D \thickbar u_t, && t \in \integer_{[k,k+N)} \\
    \label{Eq:nominal_model:initial}
    \thickbar x_k &:= \mu^{\sf x}_k
\end{align} \end{subequations}
where the \emph{nominal inputs} $\thickbar u_t \in \real^m$ for $t \in \integer_{[k, k+N)}$ will be decision variables in optimization, with resulting \emph{nominal states} $\thickbar x_t \in \real^n$ and \emph{nominal outputs} $\thickbar y_t \in \real^p$. 

After the reveal of future measurements, estimates of the future states over the desired horizon will be computed through the steady-state Kalman filter, with $L_{\sf K}$ in \eqref{Eq:state_variance:gain},
\begin{subequations} \label{Eq:Kalman_filter} \begin{align}
    \label{Eq:Kalman_filter:postetior}
    \hat x^\plus_t &:= \hat x^\minus_t + L_{\sf K} (y_t - C \hat x^\minus_t - D u_t), && t \in \integer_{[k,k+N)}
    \\
    \label{Eq:Kalman_filter:prior}
    \hat x^\minus_{t+1} &:= A \hat x^\plus_t + B u_t, && t \in \integer_{[k,k+N)}
    \\
    \label{Eq:Kalman_filter:initial}
    \hat x^\minus_k &:= \mu^{\sf x}_k
\end{align} \end{subequations}
where $\hat x^\plus_t$ and $\hat x^\minus_t$ denote the posterior and prior estimates of $x_t$, respectively.
The steady-state Kalman filter \eqref{Eq:Kalman_filter} is equivalent to a Luenberger observer as in \cite{OFSMPC:farina2015, OFSMPC:cannon2012} with observer gain $L_{\sf L}$ in \eqref{Eq:state_variance:gain}, %and is a good practical approximation 
and is the stationary case of time-varying Kalman filters used in \cite{OFSMPC:joa2023, OFSMPC:ridderhof2020, OFSMPC:hokayem2012}.

\subsubsection{Interpolation of Initial Condition}
A common choice of $\mu^{\sf x}_k$ in \eqref{Eq:initial_condition} is the prior state estimate $\hat x^\minus_k$ produced by the estimator \eqref{Eq:Kalman_filter} in the previous control step \cite{OFSMPC:ridderhof2020, OFSMPC:joa2023, OFSMPC:hokayem2012}; we denote this choice by $\mu^{\sf \hat x}_k$.
However, in our setting the state estimates are normally distributed and thus unbounded. The choice $\mu^{\sf x}_k = \mu^{\sf \hat x}_k$ may lead to an extreme value of $\mu^{\sf x}_k$, which in turn could render the constraint \eqref{Eq:safety_constraint} infeasible. A different choice of $\mu^{\sf x}_k$ is the deterministic prediction $\thickbar x_k$ of state the $x_k$, obtained via \eqref{Eq:nominal_model} at last control step \cite{OFSMPC:farina2015}; we denote this choice by $\mu^{\sf \bar x}_k$.
Choosing $\mu^{\sf x}_k = \mu^{\sf \bar x}_k$ can guarantee feasibility, with proper design of the control optimization problem; however, the value $\mu^{\sf \bar x}_k$ does not incorporate feedback from past measurements.

Trading off the two options, we let the initial condition $\mu^{\sf x}_k$ in \eqref{Eq:initial_condition} \emph{interpolate} between $\mu^{\sf \hat x}_k$ and $\mu^{\sf \bar x}_k$ \cite{Kohler2022, Schluter2022} as
\begin{align} \label{Eq:interpolating_initial_condition}
    \mu^{\sf x}_k := (1-\theta)\, \mu^{\sf \hat x}_k + \theta\, \mu^{\sf \bar x}_k,
\end{align}
where $\theta \in [0,1]$ is a decision variable, and both $\mu^{\sf \hat x}_k, \mu^{\sf \bar x}_k \in \real^n$ are fixed and known at time $t=k$.
At initial control step $k=0$, $\mu^{\sf x}_0$ is equal to a given parameter $\mu^{\sf x}_\initial$, i.e., we let $\mu^{\sf \hat x}_0 := \mu^{\sf x}_\initial$ and $\mu^{\sf \bar x}_0 := \mu^{\sf x}_\initial$.

\subsubsection{Feedback Control Policies}
Stochastic state-feedback control requires the determination of (causal) feedback policies $\pi_t$ which map the observation history into control actions. 
As the space of policies is an infinite-dimensional function space, a simple affine feedback parameterization is typically used in SMPC to obtain a tractable finite-dimensional optimization problem, written as (cf. \cite{OFSMPC:cannon2012, OFSMPC:farina2015, OFSMPC:joa2023})
\begin{align} \label{Eq:feedback_policy}
    u_t = \pi_t(\hat x^\minus_t) := \thickbar u_t - K (\hat x^\minus_t - \thickbar x_t),
\end{align}
where $K \in \real^{m \times n}$ is a \emph{fixed} feedback gain such that $A-BK$ is Schur stable.
Through the policy \eqref{Eq:feedback_policy}, the control action $u_t$ depends both on the decision $\thickbar u_t$ optimized at the control step, %and by a feedback term $K (\hat x^\minus_t - \thickbar x_t)$ depending on measurements after the control step.
and on the state estimate $\hat x^\minus_t$ via \eqref{Eq:Kalman_filter} which is decided after the measurement of $y_{[k,t)}$ and embodies feedback from the measurements.
Based on the cost \eqref{Eq:stage_cost}, we select the gain matrix $K$ in \eqref{Eq:feedback_policy} as the infinite-horizon LQR gain of system \eqref{Eq:LTI} with LQR stage cost $\Vert C x_t + D u_t \Vert_Q^2 + \Vert u_t \Vert_R^2$ %(i.e., with state weight $C^\transpose Q C$, input weight $R + D^\transpose Q D$ and cross weight $C^\transpose Q D$),
\begin{align} \label{Eq:feedback_gain}
    K := (R + B^\transpose P B + D^\transpose Q D)^{-1} (B^\transpose P A + D^\transpose Q C)
\end{align}
where $P \in \symmetric^n_+$ is the \emph{unique} positive semidefinite solution to the discrete-time algebraic Riccati equation (DARE) \cite{kuvcera1972}
\begin{align} \label{Eq:feedback_DARE} 
    P = A^\transpose P (A - B K) + C^\transpose Q (C - D K),
\end{align}
given stabilizable $(A, B)$, detectable $(A, C)$ and $Q \succ 0$. %We remark that
An equivalent form $\pi_t(\hat x^\minus_t) := c_t - K \hat x^\minus_t$ of \eqref{Eq:feedback_policy} with decision variable $c_t$ has been used in \cite{OFSMPC:cannon2012} and in many SMPC examples surveyed in \cite{SMPC:mesbah2016}. A time-varying-gain version of \eqref{Eq:feedback_policy} is adopted in \cite{OFSMPC:farina2015}, and \cite{OFSMPC:joa2023} uses $\hat x^\plus_t$ in place of $\hat x^\minus_t$ in the control policy.
\emph{Affine disturbance feedback} is sometimes considered in SMPC methods, e.g. \cite{OFSMPC:ridderhof2020}, and it is shown that affine disturbance feedback control policies and affine state feedback control policies lead to equivalent control inputs \cite{goulart2006}; here we focus on the state feedback parameterization.

\begin{remark}[\bf Input Chance Constraints] \label{REMARK:conflict}
Hard input constraints are difficult to integrate with the affine policy \eqref{Eq:feedback_policy}, as under our previous assumptions the resulting control input is normally distributed and unbounded. Chance constraint \eqref{Eq:safety_constraint} on input is thus used in its place, as in \cite{OFSMPC:farina2015}.
Another option as in \cite{OFSMPC:hokayem2012} is to use (nonlinear) saturated policies in place of  \eqref{Eq:feedback_policy}, but then the resulting inputs and outputs are no longer linear in decision variables and further analysis would be much more complicated. 
Ultimately in implementation of course, one can saturate input actions to satisfy hard input constraints.
\hfill \oprocend
\end{remark}

\subsubsection{Propagation of Input-Output Distribution}

With \eqref{Eq:LTI}, \eqref{Eq:noise_distribution}, \eqref{Eq:initial_condition}, \eqref{Eq:nominal_model}, \eqref{Eq:Kalman_filter} and \eqref{Eq:feedback_policy}, at control step $t=k$, the resulting future inputs $u_t$ and outputs $y_t$ for $t \in \integer_{[k,k+N)}$ are distributed as
\begin{align} \label{Eq:input_output_distribution} 
    \neatmat{u_t \\[-.25em] y_t} \sim \normal \big( \neatmat{\thickbar u_t \\[-.25em] \thickbar y_t},\, \Delta_{t-k} \big),
\end{align}
where the covariance matrix $\Delta_s \in \symmetric^{m+p}_+$ for $s \in \integer_{[0,N)}$ can be computed as \eqref{Eq:input_output_variance:Delta} using $\Lambda_s \in \symmetric^n_+$ defined by \eqref{Eq:input_output_variance:Lambda},
\begin{subequations} \label{Eq:input_output_variance} \begin{align}
\label{Eq:input_output_variance:Delta}
    & \Delta_s := \compactmat{-K \\ C \!-\! D K} \Lambda_s  \compactmat{-K \\ C \!-\! D K}^\transpose + \compactmat{0_{m \times m}\!\!\! \\& C \Sigma^{\sf x} C^\transpose \!+\! \Sigma^{\sf v}} \\
\label{Eq:input_output_variance:Lambda}
    & \Lambda_s := \textstyle{\sum_{r=0}^{s-1}} (A\!-\!BK)^r L_{\sf L} \, (C \Sigma^{\sf x} C^\transpose \!+\! \Sigma^{\sf v}) \, L_{\sf L}\!{}^\transpose (A\!-\!BK)^{r \transpose}
\end{align} \end{subequations}
with $L_{\sf L}$ in \eqref{Eq:state_variance:gain} and $K$ in \eqref{Eq:feedback_gain}.
The distribution \eqref{Eq:input_output_distribution} is derived analogously to \cite{OFSMPC:farina2015} and a complete derivation can be found in \Versions{the extended version \cite[Appendix A]{EXTENDED}}{\ref{APPENDIX:PROOF:input_output_distribution}}.
Note that the matrices $\Delta_0, \Delta_1, \ldots, \Delta_{N-1}$ are fixed and can be computed offline.

SMPC problems typically consider the expectation of cost \eqref{Eq:stage_cost} to be summed over the desired horizon. Given distribution \eqref{Eq:input_output_distribution}, the expected cost is known as a deterministic value
\begin{align} \label{Eq:cost_reduced}
    \textstyle{\sum_{t=k}^{k+N-1}} \mathbb{E} [J_t (u_t, y_t)] = \textstyle{\sum_{t=k}^{k+N-1}} \big[ J_t (\thickbar u_t, \thickbar y_t) + J^{\sf var}_{t-k} \big],
\end{align}
where $J^{\sf var}_s := \mathrm{Trace} ( \Delta_s \, \Diag(R, Q))$ is a constant independent of the decision variables $\thickbar u$ and $\theta$.

\subsubsection{Chance-Constraint Approximation}
Despite known input-output distribution \eqref{Eq:input_output_distribution}, an exact deterministic representation of the joint chance constraint \eqref{Eq:safety_constraint} is difficult, as it requires integration of a multivariate probability density function over a polytope and generally no analytic representation is available \cite[Sec. 2.2]{SMPC:farina2016}.
For this reason, the joint constraint \eqref{Eq:safety_constraint} is commonly approximated by, e.g., being split into individual chance constraints \cite{Ono2008}, for each time $t \in \integer_{[k,k+N)}$,
\begin{align} \label{Eq:safety_constraint_relaxed}
    \mathbb{P} \big\{ e_i^\transpose \neatmat{u_t \\[-.25em] y_t} \leq f_i \big\} \geq 1-p_{i,t}, \quad i \in \integer_{[1,q]}
\end{align}
where $e_i \in \real^{m+p}$ is the transposed $i$-th row of $E$, and $f_i \in \real$ is the $i$-th entry of $f$.
The allocated risk probabilities $p_{i,t} > 0$ in \eqref{Eq:safety_constraint_relaxed} are introduced as additional decision variables, such that $p_{1,t}, p_{2,t}, \ldots, p_{q,t}$ sum up to the total risk $p$ for each time $t$.
Note that \eqref{Eq:safety_constraint_relaxed} is a conservative approximation (or a sufficient condition) of \eqref{Eq:safety_constraint}, due to subadditivity of probabilities.
Given distribution \eqref{Eq:input_output_distribution}, the chance constraints \eqref{Eq:safety_constraint_relaxed} are converted into an equivalent deterministic form, cf. \cite{OFSMPC:farina2015, Ono2008},
\begin{subequations} \label{Eq:safety_constraint_reduced} \begin{align} &
    \label{Eq:safety_constraint_reduced:constraint}
    e_i^\transpose \neatmat{\thickbar u_t \\[-.25em] \thickbar y_t} \leq f_i \!-\! \sqrt{ e_i\!{}^\transpose \Delta_{t-k} \, e_i } \, \mathrm{icdfn} (1 \!-\! p_{i,t}), \; i \in \integer_{[1,q]} \\&
    \label{Eq:safety_constraint_reduced:risk_allocation}
    \textstyle{\sum_{i=1}^q} \, p_{i,t} = p, \qquad\quad
    p_{i,t} > 0, \;\; i \in \integer_{[1,q]}
\end{align} \end{subequations}
for $t \in \integer_{[k,k+N)}$, where $\mathrm{icdfn}(z) := \sqrt2 \, \mathrm{erf}^{-1}(2z-1)$ is the inverse cumulative distribution function (inverse c.d.f.) or the $z$-quantile of the standard normal distribution, with $\mathrm{erf}^{-1}$ the inverse error function.
The constraints \eqref{Eq:safety_constraint_reduced} are convex when we require $p \in (0, \frac12]$ \cite[Thm. 1]{Ono2008}.

\begin{remark}[\bf Gaussian Signals]
We have assumed through \eqref{Eq:noise_distribution} and \eqref{Eq:initial_condition} that random variables are normally distributed. 
In the case where random signals are non-Gaussian but with the same means and variances in \eqref{Eq:noise_distribution} and \eqref{Eq:initial_condition}, the resulting inputs $u_t$ and outputs $y_t$ still possess the mean and variance in \eqref{Eq:input_output_distribution}, and thus the expected cost is still \eqref{Eq:cost_reduced}.
However, the inverse c.d.f. in \eqref{Eq:safety_constraint_reduced:constraint} should change correspondingly to the actual distribution (if known), or be replaced into an upper bound $\sqrt{(1-p_{i,t}) / p_{i,t}}$ via Chebyshev–Cantelli inequality that guarantees the worst case over all distributions \cite{teutsch2024, yin2024}.
\hfill \oprocend
\end{remark}

\subsubsection{Terminal Condition}
Terminal constraints are considered in (S)MPC frameworks to ensure recursive feasibility and closed-loop stability. Assume $N \geq L$ going forward. Here, we impose a \emph{terminal equality constraint} \cite{DDMPC:berberich2020a, DDMPC:berberich2020b, DDMPC:berberich2020c, DDMPC:berberich2021},
\begin{align} \label{Eq:terminal_condition} \begin{aligned}
    & \thickbar u_{k+N-L} = \thickbar u_{k+N-L+1} = \cdots = \thickbar u_{k+N-1} \\
    & \thickbar y_{k+N-L} = \thickbar y_{k+N-L+1} = \cdots = \thickbar y_{k+N-1} 
\end{aligned} \end{align} 
that requires the nominal input-output trajectory to stay at some setpoint for final $L$ steps in the prediction horizon.
\emph{Terminal set constraints} are also leveraged in (S)MPC methods, bounding the final nominal state in a positively invariant set \cite{OFSMPC:cannon2012, OFSMPC:farina2015, OFSMPC:joa2023, Kohler2022}; here we find the input-output terminal constraint \eqref{Eq:terminal_condition} more straightforward to adapt to the data-driven case.

\subsubsection{SMPC Optimization Problem and Implementation}
With the expected cost \eqref{Eq:cost_reduced}, the approximation \eqref{Eq:safety_constraint_reduced} of the constraint \eqref{Eq:safety_constraint}, the interpolation \eqref{Eq:interpolating_initial_condition} and the terminal constraint \eqref{Eq:terminal_condition}, the SMPC problem is formulated as
\begin{align} \label{Eq:SMPC_reduced} \begin{aligned}
    \minimize_{\thickbar u,\, \theta,\, p_{i,t}} \quad& \textstyle{\sum_{t=k}^{k+N-1}} J_t (\thickbar u_t, \thickbar y_t) + \lambda_\theta \,\theta
    \\
    \mathrm{subject\;to} \quad&
    \text{\eqref{Eq:safety_constraint_reduced} for $t \in \integer_{[k,k+N)}$}, \eqref{Eq:nominal_model}, \eqref{Eq:interpolating_initial_condition}, \eqref{Eq:terminal_condition},
\end{aligned} \end{align}
with an interpolation penalty term of parameter $\lambda_\theta > 0$ \cite{Schluter2022}.
With $R \succ 0$ and $\lambda_\theta > 0$, the cost in \eqref{Eq:SMPC_reduced} is jointly strongly convex in $\thickbar u$ and $\theta$, and thus problem \eqref{Eq:SMPC_reduced} possesses a unique optimal $(\thickbar u, \theta)$ if feasible, although optimal $p_{i,t}$ may not be unique. 
Problem \eqref{Eq:SMPC_reduced} can be efficiently solved by the Iterative Risk Allocation method \cite{Ono2008}; see \Versions{\cite[Appendix B]{EXTENDED}}{\ref{APPENDIX:iterative_risk_allocation}} for more details of our implementation.

The nominal inputs $\thickbar u_{[k,k+N)}$ and interpolation variable $\theta$ determined from problem \eqref{Eq:SMPC_reduced} complete the parameterization of the control policies $\pi_{[k, k+N)}$ in \eqref{Eq:feedback_policy}.
The upcoming $N_{\rm c}$ control inputs $u_{[k, k+N_{\rm c})}$ are decided by the first $N_{\rm c}$ policies $\pi_{[k, k+N_{\rm c})}$ respectively, with a parameter $N_{\rm c} \in \integer_{[1,N]}$.
Then, the next control step is set as $t = k+N_{\rm c}$.
At the new control step, the initial condition $\mu^{\sf x}_{k+N_{\rm c}}$ interpolates between two fixed options $\mu^{\sf \hat x}_{k+N_{\rm c}}$ and $\mu^{\sf \bar x}_{k+N_{\rm c}}$ which are decided by
\begin{align} \label{Eq:initial_condition_iteration}
    \mu^{\sf \hat x}_{k+N_{\rm c}} := \hat x^\minus_{k+N_{\rm c}}, \;\;\quad
    \mu^{\sf \bar x}_{k+N_{\rm c}} := \thickbar x_{k+N_{\rm c}},
\end{align}
as described in Section \ref{SECTION:problem:SMPC}-2.
The entire SMPC control process is shown in Algorithm \ref{ALGO:SMPC}.

\begin{algorithm}
\caption{a Framework of Stochastic MPC (SMPC)} \label{ALGO:SMPC}
\begin{algorithmic}[1]
    \Require horizon lengths $L, N, N_{\rm c}$, system matrices $A,B,C,D$, noise variances $\Sigma^{\sf w}, \Sigma^{\sf v}$, initial state mean $\mu^{\sf x}_\initial$, cost matrices $Q, R$, constraint coefficients $E, f$, probability bound $p$, interpolation penalty coefficient $\lambda_\theta$.
    \State Compute Kalman gain $L_{\sf K}$ via \eqref{Eq:state_variance:gain}, feedback gain $K$ via \eqref{Eq:feedback_gain}, and covariance matrices $\Delta_{[0,N)}$ via \eqref{Eq:input_output_variance}.
    \State Initialize the control step $k \gets 0$ and set the initial condition $\mu^{\sf \hat x}_0 \gets \mu^{\sf x}_\initial$ and $\mu^{\sf \bar x}_0 \gets \mu^{\sf x}_\initial$.
    \While{\texttt{true}}
    \State Solve $\thickbar u_{[k,k+N)}$ and $\theta$ from problem \eqref{Eq:SMPC_reduced}.
    %\label{LINE:SMPC:solve_problem}
    \State Obtain $\mu^{\sf x}_k$ via \eqref{Eq:interpolating_initial_condition} and obtain $\thickbar x_{[k,k+N]}$ via \eqref{Eq:nominal_model}.
    \State Obtain policies $\pi_{[k,k+N)}$ from \eqref{Eq:feedback_policy}.
    \For{{\bf $t$ from $k$ to $k+N_{\rm c}-1$}}
        \State Compute $\hat x^\minus_t$ via \eqref{Eq:Kalman_filter}.
        \State Input $u_t \gets \pi_t(\hat x^\minus_t)$ to the system \eqref{Eq:LTI}.
        \State Measure $y_t$ from the system \eqref{Eq:LTI}.
    \EndFor
    \State Set $\mu^{\sf \hat x}_{k+N_{\rm c}} \gets \hat x^\minus_{k+N_{\rm c}}$ and $\mu^{\sf \bar x}_{k+N_{\rm c}}  \gets \thickbar x_{k+N_{\rm c}}$ as \eqref{Eq:initial_condition_iteration}.
    \State Set $k \gets k + N_{\rm c}$.
    \EndWhile
\end{algorithmic}
\end{algorithm}

\subsubsection{Closed-loop Properties}
The investigated SMPC framework possesses recursive feasibility and closed-loop stability. 

\begin{lemma}[SMPC Recursive Feasibility] \label{LEMMA:recursive_feasibility}
    Assume $p \in (0,\frac12]$.
    In Algorithm \ref{ALGO:SMPC}, if the problem \eqref{Eq:SMPC_reduced} is feasible at control step $k=\kappa$, then it is feasible at next control step $k = \kappa + N_{\rm c}$. 
\end{lemma}

With Lemma \ref{LEMMA:recursive_feasibility}, problem \eqref{Eq:SMPC_reduced} is feasible at all control steps if it is feasible at the initial control step, where initial feasibility can be achieved by a proper choice of parameters $\mu^{\sf x}_\initial, E, f, p$.
Closed-loop stability is reflected in the decrease of the optimal cost value and the finiteness of the asymptotic cost \cite{Cannon2009, Hewing2020, PCE:pan2022b, PCE:pan2023a}. Let $\mathbb{E}_k[\cdot]$ denote the expectation given the initial condition \eqref{Eq:initial_condition} at control step $k$. Define $V^*_k := \Sigma_{t=k}^{k+N-1} J_t (u_t, y_t)$ the stochastic cost with the optimal nominal trajectory solved from problem \eqref{Eq:SMPC_reduced}.

\begin{lemma}[SMPC Closed-loop Stability] \label{LEMMA:closed_loop_stability}
    Assume $\{z : E z \leq f\}$ is a bounded set.
    Let system \eqref{Eq:LTI} be controlled by Algorithm \ref{ALGO:SMPC}, where problem \eqref{Eq:SMPC_reduced} is assumed feasible at all control steps, and the reference signal $r_t = r$ is fixed. Then, the expectation of optimal cost values at consecutive control steps differ as
\begin{align} \label{Eq:LEMMA:closed_loop_stability:value_difference}
    \mathbb{E}_\kappa [V^*_{\kappa + N_{\rm c}} - V^*_\kappa] \leq - \textstyle{\sum_{t=\kappa}^{\kappa + N_{\rm c} -1}} \mathbb{E}_\kappa [J_t (u_t, y_t)] + N_{\rm c} c,
\end{align}
    and the asymptotic expected cost is upper bounded as
\begin{align} \label{Eq:LEMMA:closed_loop_stability:average_cost}
    \displaystyle \lim_{T \to \infty} \textstyle{\frac1T \sum_{t=0}^{T-1}} \mathbb{E}_0 [J_t (u_t, y_t)] \leq c
\end{align}
    for some $c > 0$.
\end{lemma}

{\tb The proofs of the above lemmas are analogous to the proofs in \cite{OFSMPC:farina2015} and \Versions{available in the extended version \cite[Appendices C and D]{EXTENDED}}{can be found in \ref{APPENDIX:PROOF:recursive_feasibility} and \ref{APPENDIX:PROOF:closed_loop_stability}, respectively}.}

\subsection{Our Objective: An Equivalent Data-Driven Method} \label{SECTION:problem:objective}

In direct data-driven control methods such as DeePC and SPC for deterministic systems, a sufficiently long and sufficiently rich set of noise-free input-output data is collected. Under technical conditions, this data provides an equivalent representation of the underlying system dynamics, and is used to replace the parametric model in predictive control schemes, yielding control algorithms which are \emph{equivalent} to model-based predictive control \cite{DeePC:coulson2019a, SPC:huang2008}. Motivated by this equivalence, our goal here is to develop a direct data-driven control method that produces the same input-state-output sequences as produced by Algorithm \ref{ALGO:SMPC} when applied to the same system \eqref{Eq:LTI} with the same initial condition $x_0$ and the same realizations of noise $w_t, v_t$. Put simply, we seek a direct data-driven counterpart to SMPC.

As in the described cases of equivalence for DeePC and SPC, we will subsequently show equivalence of our data-driven method to SMPC \emph{in the idealized case where we assume access to noise-free offline data}. This assumption solely facilitates the proof of equivalence, and is not a fundamental requirement of the method itself. While the result under this assumption may initially seem peculiar in an explicitly stochastic control setting, we view it as the most reasonable theoretical result to aim for, given that the prediction model must be replaced using only a finite amount of recorded data. Noisy offline data can be accommodated in a robust fashion through the use of regularized least-squares (Section \ref{SECTION:DDSMPC:offline}), as supported by simulation results in  Section \ref{SECTION:simulations}, and our stochastic control approach will fully take into account process and sensor noise during the online execution of the control process.

\section{Stochastic Data-Driven Predictive Control} \label{SECTION:DDSMPC}

This section develops a data-driven control method whose performance will be shown to be equivalent to SMPC under certain tuning conditions. In the spirit of DeePC and SPC, our proposed control method consists of an offline process, where data is collected and used for system representation, and an online process which controls the system. 

At a high level, our technical approach has three key steps. First, we collect offline input-output data (Section \ref{SECTION:DDSMPC:offline}), and use this offline data to parameterize an auxiliary model (Section \ref{SECTION:DDSMPC:auxiliary}-1). This auxiliary model will take the place of the original parametric system model \eqref{Eq:LTI} in the design procedure. Second, we will formulate a  stochastic predictive control method using the auxiliary model (Section \ref{SECTION:DDSMPC:auxiliary}, Section \ref{SECTION:DDSMPC:optimization}-1, Section \ref{SECTION:DDSMPC:control}-1). Third and finally, we will establish theoretical equivalences between the model-based and data-based control methods (Section \ref{SECTION:DDSMPC:optimization}-2, Section \ref{SECTION:DDSMPC:control}-2).

\subsection{Use of Offline Data} \label{SECTION:DDSMPC:offline}

In data-driven control, sufficiently rich offline data must be collected to capture the internal dynamics of the system. 
In this subsection, we demonstrate how offline data is collected, and use the data to compute some quantities that are useful to formulate our control method in the rest of the section. We first develop results with data from deterministic LTI systems, and then address the case of noisy data.

\subsubsection{Deterministic Offline Data}
Consider the deterministic version of system \eqref{Eq:LTI}, reproduced for convenience as
\begin{align} \label{Eq:LTI_deterministic}
    x_{t+1} = A x_t + B u_t, \qquad y_t = C x_t + D u_t,
\end{align}
where with a slight abuse of notation,  we temporarily in this section let $x_t$ and $y_t$ denote the state and output of system \eqref{Eq:LTI_deterministic}. 
By assumption, \eqref{Eq:LTI_deterministic} is minimal; recall $L \in \natural$ in Section \ref{SECTION:problem} such that $\mathcal{O} := \col(C, CA, \ldots, CA^{L-1})$ has full column rank.
Let $u^\data_{[1,T_\data]}, y^\data_{[1,T_\data]}$ be a $T_\data$-length trajectory of input-output data collected from \eqref{Eq:LTI_deterministic}. The input sequence $u^\data_{[1,T_\data]}$ is assumed to be \emph{persistently exciting} of order $K_\data := L + n + 1$, i.e., its associated $K_\data$-depth block-Hankel matrix $\Hank_{K_\data} ( u^\data_{[1,T_\data]} )$, defined as
\begin{align*}
    \Hank_{K_\data}(u^\data_{[1,T_\data]}) := \compactmat{
        u^\data_1 & u^\data_2 & \cdots & u^\data_{T_\data-K_\data+1} \\
        u^\data_2 & u^\data_3 & \cdots & u^\data_{T_\data-K_\data+2} \\[-.5em]
        \vdots & \vdots & \ddots & \vdots \\
        u^\data_{K_\data} & u^\data_{K_\data+1} & \cdots & u^\data_{T_\data} },
\end{align*}
has full row rank. 
To achieve persistent excitation, one must collect at least $T_\data \geq (m+1)K_\data-1$ data samples \cite{DeePC:coulson2019a}.
We formulate data matrices $U_1 \in \real^{mL \times h}$, $U_2 \in \real^{m \times h}$, $Y_1 \in \real^{pL \times h}$ and $Y_2 \in \real^{p \times h}$ of a common width $h := T_\data - L$ by partitioning associated Hankel matrices as
\begin{align} \label{Eq:data_matrices} \begin{aligned}
    \col(U_1, U_2) &:= \Hank_{L+1} \big( u^\data_{[1,T_\data]} \big),
    \\ 
    \col(Y_1, Y_2) &:= \Hank_{L+1} \big( y^\data_{[1,T_\data]} \big).
\end{aligned} \end{align}
The data matrices in \eqref{Eq:data_matrices} will now be used to represent some quantities related to the system \eqref{Eq:LTI_deterministic}. Before stating the result, we introduce some additional notation. 
Define a system-related matrix $\mathbf{\Gamma} \in \real^{p \times (m+p)L}$ as
\begin{align} \label{Eq:Gamma_definition}
    \mathbf{\Gamma} = \mat{\mathbf{\Gamma}_{\sf U} & \mathbf{\Gamma}_{\sf Y}} 
    := \mat{C \mathcal{C} & C A^L} \mat{I_{mL} \\ \mathcal{G} & \mathcal{O}}^\dagger.
\end{align}
with sub-blocks $\mathbf{\Gamma}_{\sf U} \in \real^{p \times mL}$ and $\mathbf{\Gamma}_{\sf Y} \in \real^{p \times pL}$,
where $\mathcal{C} := [A^{L-1}B, \ldots, AB, B]$ is the extended (reversed) controllability matrix, and $\mathcal{G} \in \real^{pL \times mL}$ is an impulse-response matrix
\begin{align} \label{Eq:GH_definition}
    \mathcal{G} := \neatmat{
        D \\
        CB & D \\[-.5em]
        \vdots & \ddots & \ddots \\
        CA^{L-2}B & \cdots & CB & D }.
\end{align}
The following result provides expressions for the quantity $\mathbf{\Gamma}$ and the system matrix $D$ in terms of raw data. 

\begin{lemma}[\bf Data Representation of Model Quantities] \label{LEMMA:system_quantity_data_representation}
    Given the data matrices in \eqref{Eq:data_matrices}, if system \eqref{Eq:LTI_deterministic} is controllable and the input data $u^\data_{[1,T_\data]}$ is persistently exciting of order $L+n+1$, then the matrix $\mathbf{\Gamma}$ defined in \eqref{Eq:Gamma_definition} and the matrix $D$ in the model \eqref{Eq:LTI_deterministic} can be expressed as 
\begin{align*}
    [\mathbf{\Gamma}_{\sf U}, \mathbf{\Gamma}_{\sf Y}, D] = Y_2\, \col(U_1, Y_1, U_2)^\dagger.
\end{align*}
\end{lemma}
\begin{proof}
    See \ref{APPENDIX:PROOF:LEMMA:system_quantity_data_representation}.
\end{proof}

The data-expression of impulse response, e.g., $D$ and $\mathcal{G}$, is present in SPC literature \cite{SPC:huang2008}. The novelty of Lemma \ref{LEMMA:system_quantity_data_representation} is the data-based representation of $\mathbf{\Gamma}$, which will be used as part of the construction for our data-driven control method.

\subsubsection{The Case of Stochastic Offline Data}
Lemma \ref{LEMMA:system_quantity_data_representation} holds for the case of noise-free data. When the measured data is corrupted by noise, as will usually be the case, the pseudoinverse computations in Lemma \ref{LEMMA:system_quantity_data_representation} become fragile and do not recover the desired matrices $\mathbf{\Gamma}$ and $D$. A standard technique to robustify these computations is to replace the pseudoinverse $W^\dagger$ of $W := \col(U_1, Y_1, U_2)$ in Lemma \ref{LEMMA:system_quantity_data_representation} with its Tikhonov regularization $W^{\sf tik} := (W^\transpose W + \lambda I_h)^{-1} W^\transpose$ where $\lambda > 0$ is the regularization parameter. 
To interpret this, recall that $\mathcal{P} := Y_2 W^\dagger$ is a least-square solution to $\argmin_\mathcal{P} \Vert Y_2 - \mathcal{P} W \Vert_{\sf F}^2$. Correspondingly, the regularization $Y_2 W^{\sf tik}$ is the solution to a ridge-regression problem $\argmin_\mathcal{P} \Vert Y_2 - \mathcal{P} W \Vert_{\sf F}^2 + \lambda \Vert \mathcal{P} \Vert_{\sf F}^2$, which gives a maximum-likelihood or worst-case robust solution to the original least-square problem $\argmin_\mathcal{P} \Vert Y_2 - \mathcal{P} W \Vert_{\sf F}^2$ whose multiplicative parameter $W$ has uncertain entries; see \cite{DDC:markovsky2022} sidebar ``Roles of Regularization'' and references therein for more details. Following this standard technique, in the stochastic case, we estimate matrices $\mathbf{\Gamma}$ and $D$ by applying Lemma \ref{LEMMA:system_quantity_data_representation} with $\mathcal{P} = Y_2 W^\dagger$ replaced by $\widehat{\mathcal{P}} := Y_2 W^{\sf tik}$.

\subsection{Data-Driven State Estimation and Output Feedback} \label{SECTION:DDSMPC:auxiliary}

The SMPC approach of Section \ref{SECTION:problem:SMPC} uses as sub-components a state estimator and an affine feedback law. We now leverage the offline data as described in Section \ref{SECTION:DDSMPC:offline} to directly design analogs of these components based on data, and without knowledge of the system matrices. 

\subsubsection{Auxiliary State-Space Model}

We begin by constructing an auxiliary state-space model which has equivalent input-output behavior to \eqref{Eq:LTI}, but is parameterized only by the recorded data sequences of Section \ref{SECTION:DDSMPC:offline}. Define auxiliary signals $\mathbf{x}_t, \mathbf{w}_t \in \real^{n_\auxiliary}$ of dimension $n_\auxiliary := mL+pL+pL^2$ for system \eqref{Eq:LTI} by
\begin{align} \label{Eq:DDModel_state_definition}
    \mathbf{x}_t := \left[ \begin{array}{c} u_{[t-L,t)} \\ \hline y^\circ_{[t-L,t)} \\ \hline \rho_{[t-L,t)} \end{array} \right], \quad
    \mathbf{w}_t := \left[ \compact{\begin{array}{c} 0_{mL\times 1} \\ \hline 0_{pL\times 1} \\ \hline 0_{pL(L-1)\times 1} \\ \rho_t \end{array}} \right]
\end{align}
where $y^\circ_t := y_t - v_t \in \real^p$ is the output excluding measurement noise, and $\rho_t := \mathcal{O} w_t \in \real^{pL}$ stacks the system's response to process noise $w_t$ on time interval $[t+1, t+L]$.
The construction of the auxiliary state $\mathbf{x}_t$ was inspired by \cite{EKFDeePC:alpago2020}.
The auxiliary signals $\mathbf{x}_t, \mathbf{w}_t$ together with $u_t, y_t, v_t$ then satisfy the relations given by Lemma \ref{LEMMA:AuxModel}. 

\begin{lemma}[Auxiliary Model] \label{LEMMA:AuxModel}
    For system \eqref{Eq:LTI}, the signals $u_t, y_t, v_t$ and the auxiliary signals $\mathbf{x}_t, \mathbf{w}_t$ in \eqref{Eq:DDModel_state_definition} satisfy
\begin{subequations} \label{Eq:DDModel} \begin{align}
    \label{Eq:DDModel:state}
    \mathbf{x}_{t+1} &= \mathbf{A} \mathbf{x}_t + \mathbf{B} u_t + \mathbf{w}_t, \\
    \label{Eq:DDModel:output}
    y_t &= \mathbf{C} \mathbf{x}_t + D u_t + v_t,
\end{align} \end{subequations}
    with $\mathbf{A} \in \real^{n_\auxiliary \times n_\auxiliary}$, $\mathbf{B} \in \real^{n_\auxiliary \times m}$, $\mathbf{C} \in \real^{p \times n_\auxiliary}$ given by
\begin{align*}
    \mathbf{A} &:= \left[ \compact{\begin{array}{c|c|c}
        \begin{matrix} & I_{m(L-1)} \\ 0_{m \times m} \end{matrix} & 0 & 0 \\ \hline
        \begin{matrix} 0 \\ \mathbf{\Gamma}_{\sf U} \end{matrix} & \begin{matrix} \begin{matrix} 0 & I_{p(L-1)} \end{matrix} \\ \hline \mathbf{\Gamma}_{\sf Y} \end{matrix} & \begin{matrix} 0 \\ \mathbf{F} - \mathbf{\Gamma}_{\sf Y} \mathbf{E} \end{matrix} \\ \hline
        0 & 0 & \begin{matrix} & I_{pL(L-1)} \\ 0_{pL \times pL} \end{matrix}
    \end{array}} \right] \\
    \mathbf{B} &:= \left[ \compact{\begin{array}{c}
        0_{m(L-1) \times m} \\ I_m \\ \hline 0_{p(L-1)\times m} \\ D \\ \hline  0_{pL^2 \times m}
    \end{array}} \right], \;\;
    \mathbf{C} := \left[ \begin{array}{c|c|c} \mathbf{\Gamma}_{\sf U} & \mathbf{\Gamma}_{\sf Y} & \mathbf{F} - \mathbf{\Gamma}_{\sf Y} \mathbf{E} \end{array} \right]
\end{align*}
    with matrices $\mathbf{\Gamma}_{\sf U}, \mathbf{\Gamma}_{\sf Y}$ in \eqref{Eq:Gamma_definition}, the same matrix $D$ in \eqref{Eq:LTI}, 
    and zero-one matrices $\mathbf{E} \in \real^{pL\times pL^2}$ and $\mathbf{F} \in \real^{p\times pL^2}$ composed by selection matrices $S_j := [ 0_{p\times (j-1)p} , I_p , 0_{p\times (L-j)p} ]$ $\in \real^{p\times pL}$ for $j \in \{1, \ldots, L\}$ as
    \begin{align*} \left[ \begin{array}{c} \mathbf{E} \\ \hline \mathbf{F} \end{array} \right] := \left[ \compact{\begin{array}{cccc} 0_{p\times pL} \\ S_1 & 0_{p\times pL} \\[-.5em] \vdots & \ddots & \ddots \\ S_{L-1} & \cdots & S_1 & 0_{p\times pL} \\ \hline S_L & \cdots & S_2 & S_L \end{array}} \right]. \end{align*}
\end{lemma}
\begin{proof}
    See \ref{APPENDIX:PROOF:DDModel}.
\end{proof}

The output noise signal $v_t$ in \eqref{Eq:DDModel} is precisely the same as in \eqref{Eq:LTI}; the signal $\mathbf{w}_t$ appears now as a new disturbance; $\mathbf{w}_t$ and $v_t$ are independent and follow the i.i.d. zero-mean normal distributions
\begin{align} \label{Eq:DDModel_noise_distribution}
    \mathbf{w}_t \simiid \normal(0_{n_\auxiliary \times 1}, \mathbf{\Sigma}^{\sf w}), \quad
    v_t \simiid \normal(0_{p\times 1}, \Sigma^{\sf v})
\end{align}
with variances $\mathbf{\Sigma}^{\sf w} \in \symmetric^{n_\auxiliary}_+$ and $\Sigma^{\sf v} \in \symmetric^p_{++}$,
\begin{align} \label{Eq:DDModel_noise_variance}
    \mathbf{\Sigma}^{\sf w} := \Diag ( 0_{(n_\auxiliary - pL) \times (n_\auxiliary - pL)} ,\, \Sigma^\rho )
\end{align}
where $\Sigma^\rho := \mathcal{O} \Sigma^{\sf w} \mathcal{O}^\transpose \in \symmetric^{pL}_+$ is the variance of $\rho_t$.
The matrices $\mathbf{A}, \mathbf{B}, \mathbf{C}, D$ are known given offline data described in Section \ref{SECTION:DDSMPC:offline}, since they by definition only depend on matrices $\mathbf{\Gamma}$ and $D$ which are data-representable via Lemma \ref{LEMMA:system_quantity_data_representation}.
Hence, the auxiliary model \eqref{Eq:DDModel} can be interpreted as a non-minimal data-representable realization of system \eqref{Eq:LTI}. 
Nonetheless, the model is indeed stabilizable and detectable. 
\begin{lemma} \label{LEMMA:AuxModel_Stabilizability_Detectability}
    For the auxiliary model \eqref{Eq:DDModel} and matrix $\mathbf{\Sigma}^{\sf w}$ in \eqref{Eq:DDModel_noise_variance}, the pairs $(\mathbf{A}, \mathbf{B})$ and $(\mathbf{A}, \mathbf{\Sigma}^{\sf w})$ are stabilizable and the pair $(\mathbf{A}, \mathbf{C})$ is detectable.
\end{lemma}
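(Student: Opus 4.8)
The plan is to verify both properties with the Popov--Belevitch--Hautus (PBH) eigenvector tests, exploiting the block-triangular structure of $\mathbf{A}$. Viewing $\mathbf{A}$ as the $3\times3$ block matrix induced by the partition of $\mathbf{x}_t$ into its $u$-, $y^\circ$- and $\rho$-components, I would record three structural facts: the $(1,1)$ block $\mathbf{A}_{uu}$ (input shift register) and the $(3,3)$ block $\mathbf{A}_{\rho\rho}$ ($\rho$ shift register) are nilpotent; the $(2,2)$ block is the block-companion matrix $\mathbf{A}_{yy}=\col([\,0\ \ I_{p(L-1)}\,],\mathbf{\Gamma}_{\sf Y \bf 1})$; and $\mathbf{B}=\col(B_u,0,0)$ with $B_u=\col(0_{m(L-1)\times m},I_m)$, where $(\mathbf{A}_{uu},B_u)$ is controllable. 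Since a nilpotent block admits no eigenvector for a nonzero eigenvalue, any candidate mode $\lambda$ with $|\lambda|\ge1$ will kill the $u$- and $\rho$-parts of the relevant eigenvector, reducing both tests to statements about $\mathbf{A}_{yy}$, $\mathbf{\Gamma}_{\sf Y \bf 1}$ and $\mathbf{\Gamma}_{\sf U \bf 1}$.

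For detectability I would use right eigenvectors: suppose $\mathbf{A}\eta=\lambda\eta$, $\mathbf{C}\eta=0$, $|\lambda|\ge1$. The first and third block rows give $\mathbf{A}_{uu}\eta_u=\lambda\eta_u$ and $\mathbf{A}_{\rho\rho}\eta_\rho=\lambda\eta_\rho$, so $\eta_u=0$, $\eta_\rho=0$, and the middle row makes $\eta_y$ a right eigenvector of $\mathbf{A}_{yy}$. Partitioning $\eta_y=\col(\eta^{(1)},\dots,\eta^{(L)})$, the shift rows force $\eta^{(j)}=\lambda^{j-1}\eta^{(1)}$ and the companion row gives $\mathbf{\Gamma}_{\sf Y \bf 1}\eta_y=\lambda^{L}\eta^{(1)}$; since $\eta_u=\eta_\rho=0$, the observation collapses to $\mathbf{C}\eta=\mathbf{\Gamma}_{\sf Y \bf 1}\eta_y=\lambda^{L}\eta^{(1)}=0$, and $|\lambda|\ge1$ forces $\eta^{(1)}=0$, hence $\eta=0$. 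This rules out unobservable unstable modes and is self-contained, using only the companion structure of $\mathbf{A}_{yy}$.

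For stabilizability I would dualize to left eigenvectors. A vector $\xi=(\xi_u,\xi_y,\xi_\rho)$ with $\xi^{*}\mathbf{A}=\lambda\xi^{*}$ and $\xi^{*}\mathbf{B}=0$ satisfies $\xi_y^{*}\mathbf{A}_{yy}=\lambda\xi_y^{*}$, while the $u$- and $\rho$-column relations determine $\xi_u,\xi_\rho$ from $\xi_y$ through the invertible $(\lambda I-\mathbf{A}_{uu})$ and $(\lambda I-\mathbf{A}_{\rho\rho})$; in particular $\xi_y=0$ would give $\xi=0$, so any obstruction has $\xi_y\ne0$. Evaluating $\xi^{*}\mathbf{B}=\xi_u^{*}B_u=0$ with $(\lambda I-\mathbf{A}_{uu})^{-1}B_u=\col(\lambda^{-L}I_m,\dots,\lambda^{-1}I_m)$ and writing $\zeta^{*}$ for the last $p$-block of $\xi_y^{*}$, the two conditions reduce, after clearing $\lambda^{L}$, to $\zeta^{*}\big[\,\lambda^{L}I_p-\sum_{j}\mathbf{\Gamma}_{\sf Y \bf 1}^{(j)}\lambda^{j-1}\ \ \ \sum_{j}\mathbf{\Gamma}_{\sf U \bf 1}^{(j)}\lambda^{j-1}\,\big]=0$. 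Thus stabilizability is equivalent to this $p\times(p+m)$ polynomial matrix having full row rank for every $|\lambda|\ge1$. I would then split the roots of $\det(\lambda^{L}I_p-\sum_j\mathbf{\Gamma}_{\sf Y \bf 1}^{(j)}\lambda^{j-1})$ into the $n$ genuine system poles and the $pL-n$ spurious poles: at a genuine pole the augmenting block $\sum_j\mathbf{\Gamma}_{\sf U \bf 1}^{(j)}\lambda^{j-1}$ restores full rank because $(A,B)$ is controllable (PBH in the matrix-fraction description), whereas each spurious pole must be shown to lie strictly inside the unit disk.

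The main obstacle is exactly this last point, and it is where the minimum-norm (pseudoinverse) nature of $\mathbf{\Gamma}_1$ becomes essential, since a different generalized inverse would relocate these poles and could destabilize them. Concretely I would exploit that the min-norm solution annihilates $\operatorname{range}(M)^{\perp}$ for $M=\col([\,I_{mL}\ 0\,],[\,\mathbf{G}\ \mathcal{O}\,])$, which yields $\mathbf{\Gamma}_{\sf Y \bf 1}\mathcal{O}=CA^{L}$ on $\operatorname{range}(\mathcal{O})$ and $\mathbf{\Gamma}_{\sf Y \bf 1}r=\mathbf{\Gamma}_{\sf U \bf 1}\mathbf{G}^{\transpose}r$ on $\operatorname{range}(\mathcal{O})^{\perp}$; the first identity shows $\operatorname{range}(\mathcal{O})$ is $\mathbf{A}_{yy}$-invariant with the system poles as its spectrum, and the second pins down the action on the complementary $(pL-n)$-dimensional subspace carrying the spurious poles. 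Bounding these spurious eigenvalues inside the unit disk from this characterization is the delicate step; a SISO sanity check with $L=2$ gives the spurious pole $-a/(1+a^{2}+c^{2}b^{2})$, which is always Schur stable and supports the general claim.
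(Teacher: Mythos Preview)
Your detectability argument via the PBH right-eigenvector test is correct and self-contained; the paper takes a shorter route by exhibiting an explicit deadbeat observer gain $\mathbf{L}'=\col(0_{mL\times p},\,0_{p(L-1)\times p},\,I_p,\,0_{pL^2\times p})$ making $\mathbf{A}-\mathbf{L}'\mathbf{C}$ nilpotent. For stabilizability, however, there is a genuine gap. Your reduction to the full-row-rank condition on the $p\times(p+m)$ polynomial matrix is correct, as is the decomposition of the roots of $\det\big(\lambda^{L}I_p-\sum_{j}\mathbf{\Gamma}_{\sf Y \bf 1}^{(j)}\lambda^{j-1}\big)$ into $n$ system poles and $pL-n$ spurious ones. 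But the assertion that every spurious pole lies strictly inside the unit disk is precisely the substantive step, and you do not prove it: the minimum-norm identities you record constrain $\mathbf{\Gamma}_{\sf Y \bf 1}$ on $\operatorname{range}(\mathcal{O})$ and its orthogonal complement, yet give no direct spectral bound for the restricted companion map, and a single SISO check does not establish the general case.

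The paper avoids this spectral question entirely. It constructs an explicit stabilizing feedback $\mathbf{K}'=[K\Phi_{\sf U},\,K\Phi_{\sf Y},\,0]$ lifted from the original LQR gain $K$, so that $\mathbf{A}+\mathbf{B}\mathbf{K}'$ is block upper triangular with nilpotent $\rho$-block, reducing the question to Schur stability of a submatrix $\mathbf{A}_{\sf sub}$ on the $(u,y^\circ)$ coordinates. Then $\mathbf{A}_{\sf sub}^{t}\to 0$ is shown in two moves: first, $\mathbf{A}_{\sf sub}^{t}\smallmat{I_{mL}\\ \mathbf{G} & \mathcal{O}}\to 0$ by reading the columns as closed-loop trajectories of the original Schur-stable system $A+BK$; second, the residual on the orthogonal complement is killed by proving $I-DP$ nonsingular, where $D$ is the block-shift part of $\mathbf{A}_{\sf sub}$ and $P$ the projection onto that complement, using only $\|D\|_2=1$ and nonsingularity of $I-D$. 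This projection argument is the paper's replacement for your unproven spurious-pole bound.
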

\begin{proof} See \ref{APPENDIX:PROOF:LEMMA:AuxModel_Stabilizability_Detectability}. \end{proof}

\subsubsection{Auxiliary State Initial Condition}
The auxiliary model \eqref{Eq:DDModel} with the same input-output behavior to system \eqref{Eq:LTI} is a key component in constructing a data-driven counterpart to SMPC, while another essential is the relation between the states $x_t$ and $\mathbf{x}_t$, which we introduce next.
Suppose we are at a control step $t=k$ in a receding-horizon process. 
Similar to \eqref{Eq:initial_condition}, we model the auxiliary state $\mathbf{x}_k$ from \eqref{Eq:DDModel} following a prior distribution
\begin{align} \label{Eq:DDModel:initial_condition}
    \mathbf{x}_k \;\sim\; \normal(\boldsymbol{\mu}^{\sf x}_k, \mathbf{\Sigma}^{\sf x}).
\end{align}
The mean $\boldsymbol{\mu}^{\sf x}_k \in \real^{n_\auxiliary}$ in \eqref{Eq:DDModel:initial_condition} interpolates between two fixed vectors $\boldsymbol{\mu}^{\sf \hat x}_k, \boldsymbol{\mu}^{\sf \bar x}_k \in \real^{n_\auxiliary}$ with a decision variable $\theta \in [0,1]$,
\begin{align} \label{Eq:DDModel:interpolating_initial_condition}
    \boldsymbol{\mu}^{\sf x}_k := (1 - \theta) \, \boldsymbol{\mu}^{\sf \hat x}_k + \theta \, \boldsymbol{\mu}^{\sf \bar x}_k
\end{align}
wherein $\boldsymbol{\mu}^{\sf \hat x}_k$ and $\boldsymbol{\mu}^{\sf \bar x}_k$ are produced by a state estimator (see \eqref{Eq:DDModel:Kalman_filter}) and a noise-free model (see \eqref{Eq:DDModel:nominal_model}), respectively, of last control step.
At initial time $k=0$, the initial state mean $\boldsymbol{\mu}^{\sf x}_0$ is given as a parameter $\boldsymbol{\mu}^{\sf x}_\initial \in \real^{n_\auxiliary}$, i.e., we let $\boldsymbol{\mu}^{\sf \hat x}_0 := \boldsymbol{\mu}^{\sf x}_\initial$ and $\boldsymbol{\mu}^{\sf \bar x}_0 := \boldsymbol{\mu}^{\sf x}_\initial$.
The variance $\mathbf{\Sigma}^{\sf x} \in \symmetric^{n_\auxiliary}_+$ in \eqref{Eq:DDModel:initial_condition} is fixed as the unique positive semidefinite solution to the DARE \eqref{Eq:DDModel:state_variance:DARE},
\begin{subequations} \label{Eq:DDModel:state_variance} \begin{align} \label{Eq:DDModel:state_variance:DARE}
    & \mathbf{\Sigma}^{\sf x} = (\mathbf{A} - \mathbf{L}_{\sf L} \mathbf{C}) \mathbf{\Sigma}^{\sf x} \mathbf{A}^\transpose + \mathbf{\Sigma}^{\sf w} \\
\label{Eq:DDModel:state_variance:gain}
    & \mathbf{L}_{\sf L} := \mathbf{A} \mathbf{L}_{\sf K}, \quad 
    \mathbf{L}_{\sf K} := \mathbf{\Sigma}^{\sf x} \mathbf{C}^\transpose (\mathbf{C} \mathbf{\Sigma}^{\sf x} \mathbf{C}^\transpose + \Sigma^{\sf v})^{-1}
\end{align} \end{subequations}
given detectable $(\mathbf{A}, \mathbf{C})$ and stabilizable $(\mathbf{A}, \mathbf{\Sigma}^{\sf w})$ via Lemma \ref{LEMMA:AuxModel_Stabilizability_Detectability}, where we define Kalman gain $\mathbf{L}_{\sf K} \in \real^{n_\auxiliary \times p}$ and Luenberger observer gain $\mathbf{L}_{\sf L} \in \real^{n_\auxiliary \times p}$.
Not surprisingly, there is a close relationship between the distributions of $x_k$ and $\mathbf{x}_k$.

\begin{lemma}[Related Means of $x_k$ and $\mathbf{x}_k$] \label{LEMMA:state_mean_relation}
    For system \eqref{Eq:LTI} and auxiliary state $\mathbf{x}_t$ in \eqref{Eq:DDModel_state_definition}, if $\mu_k$ is the mean of $x_k$ and $\boldsymbol{\mu}_k$ is the mean of $\mathbf{x}_k$, then we have
\begin{align} \label{Eq:state_mean_relation} 
    \mu_k = \Phi_\original \,\widetilde{\mu}_k, \qquad
    \boldsymbol{\mu}_k = \Phi_\auxiliary \,\widetilde{\mu}_k,
\end{align}
    for some $\widetilde{\mu}_k \in \real^{mL+n(L+1)}$, with the matrices $\Phi_\original$ and $\Phi_\auxiliary$ defined in \ref{APPENDIX:PROOF:DDModel}.
\end{lemma}
\begin{proof}
    Given $x_k = \Phi_\original \,\xi_k$ and $\mathbf{x}_k = \Phi_\auxiliary \,\xi_k$ via Claim \ref{CLAIM:x_xi_relation}, we have \eqref{Eq:state_mean_relation} by choosing $\widetilde{\mu}_k$ as the mean of $\xi_k$.
\end{proof}
The result \eqref{Eq:state_mean_relation} will be leveraged in establishing equivalence between SMPC and our proposed method, as we will see in \eqref{Eq:initial_condition_relation} in Proposition \ref{PROPOSITION:equivalence_of_optimization_problems} and (d) in Assumption \ref{ASSUMPTION:parameter_choice_of_DDSMPC_algorithm}.

\subsubsection{Auxiliary State Estimation and Feedback}
The auxiliary model \eqref{Eq:DDModel} will now be used for both estimation and control purposes.
Analogous to the Kalman filter \eqref{Eq:Kalman_filter}, we formulate a Kalman filter for the auxiliary model \eqref{Eq:DDModel} as
\begin{subequations} \label{Eq:DDModel:Kalman_filter} \begin{align}
    \label{Eq:DDModel:Kalman_filter:postetior}
    \hat{\mathbf{x}}^\plus_t &:= \hat{\mathbf{x}}^\minus_t + \mathbf{L}_{\sf K} (y_t - \mathbf{C} \hat {\mathbf{x}}^\minus_t - D u_t), &&  t \in \integer_{[k,k+N)}
    \\
    \label{Eq:DDModel:Kalman_filter:prior}
    \hat{\mathbf{x}}^\minus_{t+1} &:= \mathbf{A} \hat{\mathbf{x}}^\plus_t + \mathbf{B} u_t, &&  t \in \integer_{[k,k+N)}
    \\
    \label{Eq:DDModel:Kalman_filter:initial}
    \hat{\mathbf{x}}^\minus_k &:= \boldsymbol{\mu}^{\sf x}_k
\end{align} \end{subequations}
where $\hat{\mathbf{x}}^\plus_t$ and $\hat{\mathbf{x}}^\minus_t$ are the posterior and prior estimates of $\mathbf{x}_t$, respectively, with $\mathbf{L}_{\sf K} \in \real^{n_\auxiliary \times p}$ in \eqref{Eq:DDModel:state_variance:gain}.
A noise-free model can be formed similarly as \eqref{Eq:nominal_model}, given initial condition \eqref{Eq:DDModel:initial_condition},
\begin{subequations} \label{Eq:DDModel:nominal_model} \begin{align}
    \label{Eq:DDModel:nominal_model:state}
    \thickbar{\mathbf{x}}_{t+1} &:= \mathbf{A} \thickbar{\mathbf{x}}_t + \mathbf{B} \thickbar u_t, &&  t \in \integer_{[k,k+N)} \\ 
    \label{Eq:DDModel:nominal_model:output}
    \thickbar{\mathbf{y}}_t &:= \mathbf{C} \thickbar{\mathbf{x}}_t + D \thickbar u_t, &&  t \in \integer_{[k,k+N)} \\
    \label{Eq:DDModel:nominal_model:initial}
    \thickbar{\mathbf{x}}_k &:= \boldsymbol{\mu}^{\sf x}_k, 
\end{align} \end{subequations}
where $\thickbar u_t \in \real^m$ is the nominal input decided through optimization, and $\thickbar{\mathbf{x}}_t \in \real^{n_\auxiliary}$ and $\thickbar{\mathbf{y}}_t \in \real^p$ are the resulting nominal state and output, respectively.
The affine output feedback policy \eqref{Eq:feedback_policy} from SMPC is now extended as
$\boldsymbol{\pi}_t(\cdot)$,
\begin{align} \label{Eq:DDModel:feedback_policy}
    u_t \gets \boldsymbol{\pi}_t(\hat{\mathbf{x}}^\minus_t) := \thickbar u_t - \mathbf{K} (\hat{\mathbf{x}}^\minus_t - \thickbar{\mathbf{x}}_t)
\end{align}
where the feedback gain $\mathbf{K} \in \real^{m\times n_\auxiliary}$ must be selected such that $\mathbf{A} - \mathbf{B}\mathbf{K}$ is Schur stable. We may again use an LQR-based design as in \eqref{Eq:feedback_gain}, yielding
\begin{align} \label{Eq:DDModel:feedback_gain}
    \mathbf{K} := (R + \mathbf{B}^\transpose \mathbf{P} \mathbf{B} + D^\transpose Q D )^{-1} (\mathbf{B}^\transpose \mathbf{P} \mathbf{A} + D^\transpose Q \mathbf{C}),
\end{align}
where $\mathbf{P} \in \symmetric^{n_\auxiliary}_+$ is the unique positive semidefinite solution, given the stabilizability of $(\mathbf{A}, \mathbf{B})$ and detectability of $(\mathbf{A}, \mathbf{C})$ by Lemma \ref{LEMMA:AuxModel_Stabilizability_Detectability}, to the DARE
\begin{align} \label{Eq:DDModel:feedback_DARE}
    \mathbf{P} = \mathbf{A}^\transpose \mathbf{P} (\mathbf{A} - \mathbf{B} \mathbf{K}) + \mathbf{C}^\transpose Q (\mathbf{C} - D \mathbf{K}).
\end{align}
The state estimator and feedback policy designs \eqref{Eq:DDModel:Kalman_filter} and \eqref{Eq:DDModel:feedback_policy} directly parallel the SMPC framework, providing a clear bridge between the model-based and data-driven control settings. While ultimately these state estimates are internal variables within the subsequent optimization problem and may be eliminated, retaining this structure provides conceptual clarity and illustrates the potential to similarly obtain data-driven versions of other estimator-based SMPC schemes.
{\setlength{\abovedisplayskip}{4pt}
\setlength{\belowdisplayskip}{4pt}
\subsection{Optimization Problem} \label{SECTION:DDSMPC:optimization}

\subsubsection{SDDPC Optimization Problem}
With results of Section \ref{SECTION:DDSMPC:auxiliary}, we are now ready to mirror the steps which led to \eqref{Eq:SMPC_reduced} and formulate a Stochastic Data-Driven Predictive Control (SDDPC) optimization problem.
First, following a similar process as that which led to \eqref{Eq:input_output_distribution}, we may combine \eqref{Eq:DDModel}, \eqref{Eq:DDModel_noise_distribution}, \eqref{Eq:DDModel:initial_condition}, \eqref{Eq:DDModel:Kalman_filter}, \eqref{Eq:DDModel:nominal_model} and \eqref{Eq:DDModel:feedback_policy}, to conclude that the input-output trajectory $(u_t, y_t)$ for $t \in \integer_{[k,k+N)}$ is normally distributed as $\normal(\col(\thickbar u_t, \thickbar{\mathbf{y}}_t), \mathbf{\Delta}_{t-k})$, where the covariance matrices $\mathbf{\Delta}_s \in \symmetric^{m+p}_+$ for $s \in \integer_{[0,N)}$ are computed as \eqref{Eq:DDModel:input_output_variance:Delta} using $\mathbf{\Lambda}_s \in \symmetric^{n_\auxiliary}_+$ defined as \eqref{Eq:DDModel:input_output_variance:Lambda},
\begin{subequations} \label{Eq:DDModel:input_output_variance} 
\begin{align}
\label{Eq:DDModel:input_output_variance:Delta}
    & \mathbf{\Delta}_s := \compactmat{- \mathbf{K} \\ \mathbf{C} \!-\! D \mathbf{K}} \mathbf{\Lambda}_s \compactmat{- \mathbf{K} \\ \mathbf{C} \!-\! D \mathbf{K}}^\transpose + \compactmat{0_{m \times m}\!\!\!\! \\& \mathbf{C} \mathbf{\Sigma}^{\sf x} \mathbf{C}^\transpose \!+\! \Sigma^{\sf v}} \\
\label{Eq:DDModel:input_output_variance:Lambda}
    & \mathbf{\Lambda}_s := \textstyle{\sum_{r=0}^{s-1}} (\mathbf{A} \!-\! \mathbf{B} \mathbf{K})^r \mathbf{L}_{\sf L} (\mathbf{C} \mathbf{\Sigma}^{\sf x} \mathbf{C}^\transpose \!+\! \Sigma^{\sf v}) \mathbf{L}_{\sf L}\!{}^\transpose (\mathbf{A} \!-\! \mathbf{B} \mathbf{K})^{r \transpose}
\end{align} \end{subequations}
with $\mathbf{L}_{\sf L}$ in \eqref{Eq:DDModel:state_variance:gain} and $\mathbf{K}$ in \eqref{Eq:DDModel:feedback_gain}.
Then, the SDDPC problem for computing $\thickbar u$ and $\theta$ at control step $t=k$ is written as
\begin{align} \label{Eq:DDSMPC_reduced} \begin{aligned}
    \minimize_{\thickbar u,\, \theta,\, p_{i,t}} \quad& \textstyle{\sum_{t=k}^{k+N-1}} J_t (\thickbar u_t, \thickbar{\mathbf{y}}_t) + \lambda_\theta \,\theta
    \\
    \mathrm{subject\;to} \quad& 
    \text{\eqref{Eq:DDModel:safety_constraint_reduced} for $t \in \integer_{[k,k+N)}$, \eqref{Eq:DDModel:interpolating_initial_condition}}, \eqref{Eq:DDModel:nominal_model}, \eqref{Eq:DDModel:terminal_condition}, 
\end{aligned} \end{align}
with the safety constraint for $t \in \integer_{[k,k+N)}$,
\begin{align} \label{Eq:DDModel:safety_constraint_reduced} \begin{aligned} &
    e_i\!{}^\transpose \neatmat{\thickbar u_t \\[-.25em] \thickbar{\mathbf{y}}_t} \leq f_i \!-\! \sqrt{e_i\!{}^\transpose \mathbf{\Delta}_{t-k} \, e_i} \, \mathrm{icdfn} (1 \!-\! p_{i,t}), \;\; i \in \integer_{[1,q]} \\&
    \textstyle{\sum_{i=1}^q} p_{i,t} = p, \qquad\quad
    p_{i,t} > 0, \;\; i \in \integer_{[1,q]}
\end{aligned} \end{align}
and with the terminal equality constraint
\begin{align} \label{Eq:DDModel:terminal_condition} \begin{aligned}
    & \thickbar u_{k+N-L} = \thickbar u_{k+N-L+1} = \cdots = \thickbar u_{k+N-1},
    \\
    & \thickbar{\mathbf{y}}_{k+N-L} = \thickbar{\mathbf{y}}_{k+N-L+1} = \cdots = \thickbar{\mathbf{y}}_{k+N-1}.
\end{aligned} \end{align}
Problem \eqref{Eq:DDSMPC_reduced} not only mirrors problem \eqref{Eq:SMPC_reduced} through the use of auxiliary model \eqref{Eq:DDModel}, but also introduces a novel formulation that explicitly delineates which quantities are replaced by their data-driven counterparts and which remain unchanged.

\subsubsection{Equivalence to SMPC Optimization Problem}

We now establish that the SDDPC problem \eqref{Eq:DDSMPC_reduced} and the SMPC problem \eqref{Eq:SMPC_reduced} have equal feasible sets and equal optimal sets, when the initial-condition parameters are related in the form of \eqref{Eq:state_mean_relation}.

\begin{proposition}[Equivalence of Optimization Problems] \label{PROPOSITION:equivalence_of_optimization_problems}
    If the parameters $\mu^{\sf \hat x}_k, \mu^{\sf \bar x}_k, \boldsymbol{\mu}^{\sf \hat x}_k, \boldsymbol{\mu}^{\sf \bar x}_k$ satisfy 
\begin{align} \label{Eq:initial_condition_relation}  \begin{aligned}
    \mu^{\sf \hat x}_k = \Phi_\original \,\widetilde{\mu}^{\sf \,\hat x}_k, \qquad
    \boldsymbol{\mu}^{\sf \hat x}_k = \Phi_\auxiliary \,\widetilde{\mu}^{\sf \,\hat x}_k, \\
    \mu^{\sf \bar x}_k = \Phi_\original \,\widetilde{\mu}^{\sf \,\bar x}_k, \qquad
    \boldsymbol{\mu}^{\sf \bar x}_k = \Phi_\auxiliary \,\widetilde{\mu}^{\sf \,\bar x}_k,
\end{aligned} \end{align}
    with the matrices $\Phi_\original, \Phi_\auxiliary$ defined in \ref{APPENDIX:PROOF:DDModel}, 
    for some vectors $\widetilde{\mu}^{\sf \,\hat x}_k, \widetilde{\mu}^{\sf \,\bar x}_k \in \real^{mL+n(L+1)}$,
    then the optimal (resp. feasible) solution set of SDDPC problem \eqref{Eq:DDSMPC_reduced} is equal to the optimal (resp. feasible) solution set of SMPC problem \eqref{Eq:SMPC_reduced}.
\end{proposition}
\begin{proof}
    See \ref{APPENDIX:PROOF:equivalence_of_optimization_problems}.
\end{proof}

We conclude by noting that problem \eqref{Eq:DDSMPC_reduced} produces a unique optimal $(\thickbar u, \theta)$ when feasible, following from Proposition \ref{PROPOSITION:equivalence_of_optimization_problems} and the fact that problem \eqref{Eq:SMPC_reduced} gives a unique optimal $(\thickbar u, \theta)$ when it is feasible, as mentioned in Section \ref{SECTION:problem:SMPC}.

}
\subsection{Online Control Algorithm} \label{SECTION:DDSMPC:control}

\subsubsection{SDDPC Control Algorithm}

We now describe the online implementation of our SDDPC. At control step $t = k$, the nominal input $\thickbar u_{[k,k+N)}$ and interpolation variable $\theta$ are solved from \eqref{Eq:DDSMPC_reduced}, and the policies $\boldsymbol{\pi}_{[k, k+N)}$ are constructed via \eqref{Eq:DDModel:feedback_policy}, where the first $N_{\rm c}$ policies are implemented. At the next control step $t = k + N_{\rm c}$, the initial condition interpolates as in \eqref{Eq:DDModel:interpolating_initial_condition} between vectors $\boldsymbol{\mu}^{\sf \hat x}_{k+N_{\rm c}}$ and $\boldsymbol{\mu}^{\sf \bar x}_{k+N_{\rm c}}$ decided as
\begin{align} \label{Eq:DDModel:initial_condition_iteration}
    \boldsymbol{\mu}^{\sf \hat x}_{k+N_{\rm c}} := \hat{\mathbf{x}}^\minus_{k+N_{\rm c}}, \qquad
    \boldsymbol{\mu}^{\sf \bar x}_{k+N_{\rm c}} := \thickbar{\mathbf{x}}_{k+N_{\rm c}}.
\end{align}
The method is formally summarized in Algorithm \ref{ALGO:DDSMPC}. Note that Algorithm \ref{ALGO:DDSMPC} is for control of system \eqref{Eq:LTI}, although the auxiliary model is used in its design. Algorithm \ref{ALGO:DDSMPC} presents a novel control scheme, with analogy to Algorithm \ref{ALGO:SMPC}, where some components are replaced by data-driven counterparts.

\begin{algorithm}
\caption{Stochastic Data-Driven Predictive Control (SDDPC)} \label{ALGO:DDSMPC}
\begin{algorithmic}[1] 
    \Require horizon lengths $L, N, N_{\rm c}$, offline data $u^\data, y^\data$, noise variances $\Sigma^\rho, \Sigma^{\sf v}$, initial-state mean $\boldsymbol{\mu}^{\sf x}_\initial$, cost matrices $Q, R$, constraint coefficients $E, f$, probability bound $p$, interpolation penalty coefficient $\lambda_\theta$.
    \State Compute matrices $\mathbf{\Gamma}$ and $D$ as in Section \ref{SECTION:DDSMPC:offline} using data $u^\data, y^\data$, and formulate matrices $\mathbf{A}, \mathbf{B}, \mathbf{C}$ as in Section \ref{SECTION:DDSMPC:auxiliary}.
    \State Compute Kalman gain $\mathbf{L}_{\sf K}$ via \eqref{Eq:DDModel:state_variance:gain}, feedback gain $\mathbf{K}$ via \eqref{Eq:DDModel:feedback_gain}, and covariance matrices $\mathbf{\Delta}_{[0,N)}$ via \eqref{Eq:DDModel:input_output_variance}.
    \State Initialize the control step $k \gets 0$ and set the initial condition $\boldsymbol{\mu}^{\sf \hat x}_0 \gets \boldsymbol{\mu}^{\sf x}_\initial$ and $\boldsymbol{\mu}^{\sf \bar x}_0 \gets \boldsymbol{\mu}^{\sf x}_\initial$. 
    \While{\texttt{true}}
        \State Solve $\thickbar u_{[k,k+N)}$ and $\theta$ from problem \eqref{Eq:DDSMPC_reduced}.
        \State Obtain $\boldsymbol{\mu}^{\sf x}_k$ via \eqref{Eq:DDModel:interpolating_initial_condition} and obtain $\thickbar{\mathbf{x}}_{[k,k+N]}$ via \eqref{Eq:DDModel:nominal_model}.
        \State Obtain policies $\boldsymbol{\pi}_{[k,k+N)}$ from \eqref{Eq:DDModel:feedback_policy}.
    \For{{\bf $t$ from $k$ to $k+N_{\rm c}-1$}}
        \State Compute $\hat{\mathbf{x}}^\minus_t$ via \eqref{Eq:DDModel:Kalman_filter}.
        \State Input $u_t \gets \boldsymbol{\pi}_t(\hat{\mathbf{x}}^\minus_t)$ to the system \eqref{Eq:LTI}.
        \State Measure $y_t$ from the system \eqref{Eq:LTI}.
    \EndFor
        \State Set $\boldsymbol{\mu}^{\sf \hat x}_{k+N_{\rm c}} \gets \hat{\mathbf{x}}^\minus_{k+N_{\rm c}}$ and $\boldsymbol{\mu}^{\sf \bar x}_{k+N_{\rm c}} \gets \thickbar{\mathbf{x}}_{k+N_{\rm c}}$ as \eqref{Eq:DDModel:initial_condition_iteration}.
        \State Set $k \gets k + N_{\rm c}$.
    \EndWhile
\end{algorithmic}
\end{algorithm}

\subsubsection{Closed-loop Properties of SDDPC}
Similar to Lemma \ref{LEMMA:recursive_feasibility} and Lemma \ref{LEMMA:closed_loop_stability}, Algorithm \ref{ALGO:DDSMPC} possesses recursive feasibility and closed-loop stability, as formally stated below.

\setcounter{proposition}{\getrefnumber{LEMMA:recursive_feasibility}}
\setcounter{corollary}{0}
\begin{corollary}[SDDPC Recursive feasibility] \label{COROLLARY:recursive_feasibility}
    Assume $p \in (0,\frac12]$. In Algorithm \ref{ALGO:DDSMPC}, if the problem \eqref{Eq:DDSMPC_reduced} is feasible at control step $k=\kappa$, then it is feasible at next control step $k = \kappa + N_{\rm c}$.
\end{corollary}

\setcounter{proposition}{\getrefnumber{LEMMA:closed_loop_stability}}
\setcounter{corollary}{0}
\begin{corollary}[SDDPC Closed-loop Stability] \label{COROLLARY:closed_loop_stability}
    Assume $\{z : E z \leq f\}$ is a bounded set. Let system \eqref{Eq:LTI} be controlled by Algorithm \ref{ALGO:DDSMPC}, where problem \eqref{Eq:DDSMPC_reduced} is assumed feasible at all control steps, and the reference signal $r_t = r$ is fixed.
    Then, the expectation of optimal cost values at consecutive control steps differ as \eqref{Eq:LEMMA:closed_loop_stability:value_difference}, and the asymptotic expected cost is upper bounded as \eqref{Eq:LEMMA:closed_loop_stability:average_cost} with some $c \geq 0$.
\end{corollary}

\setcounter{proposition}{\getrefnumber{PROPOSITION:equivalence_of_optimization_problems}}
The proofs of the above corollaries are analogies to the proofs of Lemma \ref{LEMMA:recursive_feasibility} and Lemma \ref{LEMMA:closed_loop_stability}, respectively, where the auxiliary model \eqref{Eq:DDModel} is considered in place of model \eqref{Eq:LTI}.

\subsubsection{Equivalence to SMPC Algorithm}

We present in Theorem \ref{PROPOSITION:equivalence_of_control_algorithms} our main result, which says that under idealized conditions, our proposed SDDPC control method and the benchmark SMPC method will result in identical control actions.

\begin{assumption}[SDDPC Parameter Choice w.r.t. SMPC] \label{ASSUMPTION:parameter_choice_of_DDSMPC_algorithm}
    Given the parameters in Algorithm \ref{ALGO:SMPC}, we assume the parameters in Algorithm \ref{ALGO:DDSMPC} satisfy the following.
\begin{enumerate}[(a)]
    \item $L$ is sufficiently large so that $\mathcal{O}$ has full column rank. % and $\mathcal{C}$ has full row rank.
    \item Data $u^\data, y^\data$ comes from the deterministic system \eqref{Eq:LTI_deterministic}, and input data $u^\data$ is persistently exciting of order $L+n+1$.
    \item Given $\Sigma^{\sf w}$ in Algorithm \ref{ALGO:SMPC}, the parameter $\Sigma^\rho$ in Algorithm \ref{ALGO:DDSMPC} is set equal to $\mathcal{O} \Sigma^{\sf w} \mathcal{O}^\transpose$.
    \item Given $\mu^{\sf x}_\initial$ in Algorithm \ref{ALGO:SMPC}, the parameter $\boldsymbol{\mu}^{\sf x}_\initial$ in Algorithm \ref{ALGO:DDSMPC} is selected as $\Phi_\auxiliary \widetilde{\mu}^{\sf \,x}_\initial$ for some $\widetilde{\mu}^{\sf \,x}_\initial \in \real^{mL+(n+1)L}$ satisfying $\mu^{\sf x}_\initial = \Phi_\original \widetilde{\mu}^{\sf \,x}_\initial$, with matrices $\Phi_\original, \Phi_\auxiliary$ defined in \ref{APPENDIX:PROOF:DDModel}.
    (Such $\widetilde{\mu}^{\sf \,x}_\initial$ always exists because $\Phi_\original$ has full row rank.)
\end{enumerate}
\end{assumption}

\begin{theorem}[\bf Equivalence of SMPC and SDDPC] \label{PROPOSITION:equivalence_of_control_algorithms}
    Consider the stochastic system \eqref{Eq:LTI} with a specific initial state $x_0$ and a specific noise realization $\{w_t, v_t\}_{t=0}^\infty$, and consider the following two control processes: 
\begin{enumerate}[a)]
    \item decide control actions $\{u_t\}_{t=0}^\infty$ by executing Algorithm \ref{ALGO:SMPC};
    \item decide control actions $\{u_t\}_{t=0}^\infty$ by executing Algorithm \ref{ALGO:DDSMPC}, where the parameters satisfy Assumption \ref{ASSUMPTION:parameter_choice_of_DDSMPC_algorithm}.
\end{enumerate}
    Then, the state-input-output trajectories $\{x_t, u_t, y_t\}_{t=0}^\infty$ resulting from process a) and from process b) are the same.
\end{theorem}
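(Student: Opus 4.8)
The plan is to argue by induction over the control steps $k = 0, N_{\rm c}, 2N_{\rm c}, \ldots$, carrying as invariant the distribution relation \eqref{Eq:LEMMA:state_distribution_relation} of Lemma \ref{LEMMA:state_distribution_relation}: at the start of each block there exist $\mu^\xi_k, \Sigma^\xi_k$ with $\mu^{\sf x}_k = \Phi_\original\mu^\xi_k$, $\boldsymbol{\mu}^{\sf x}_k = \Phi_\auxiliary\mu^\xi_k$, and likewise for the variances. The base case $k=0$ is exactly Assumption \ref{ASSUMPTION:parameter_choice_of_DDSMPC_algorithm}(d), and both processes trivially share the physical state $x_0$. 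The engine of the argument is the single underlying vector $\xi_t := \col(u_{[t-L,t)}, x_{t-L}, w_{[t-L,t)})$ from the proof of Lemma \ref{LEMMA:state_distribution_relation}, which satisfies $x_t = \Phi_\original\xi_t$ and $\mathbf{x}_t = \Phi_\auxiliary\xi_t$ at every $t$; since $\xi_t$ obeys a model-independent shift recursion, matching the two state equations against $x_{t+1} = \Phi_\original\xi_{t+1}$ and $\mathbf{x}_{t+1} = \Phi_\auxiliary\xi_{t+1}$ for all realizations yields shared-coefficient intertwinings, the key ones being
\begin{align*}
    C\Phi_\original = \mathbf{C}\Phi_\auxiliary, \qquad K\Phi_\original = \mathbf{K}\Phi_\auxiliary,
\end{align*}
together with common $\Theta, \Xi$ such that $A\Phi_\original = \Phi_\original\Theta$, $\mathbf{A}\Phi_\auxiliary = \Phi_\auxiliary\Theta$ and $\Sigma^{\sf w} = \Phi_\original\Xi\Phi_\original^\transpose$, $\mathbf{\Sigma}^{\sf w} = \Phi_\auxiliary\Xi\Phi_\auxiliary^\transpose$. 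The first identity is immediate from $Cx_t + v_t = y_t = \mathbf{C}\mathbf{x}_t + v_t$; the second follows because the original and auxiliary LQR problems minimize the same output--input cost over input-output-equivalent dynamics, so their value functions, and hence optimal feedbacks, agree on $\mathrm{range}(\Phi_\auxiliary)$.

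Given the invariant at step $k$, the inductive step has three parts. First, Proposition \ref{PROPOSITION:equivalence_of_optimization_problems} applies verbatim, so problems \eqref{Eq:SMPC_reduced} and \eqref{Eq:DDSMPC_reduced} return the same unique nominal sequence $u^\nominal_{[k,k+N)}$; by its target relation and $C\Phi_\original = \mathbf{C}\Phi_\auxiliary$, the nominal states satisfy $x^\nominal_t = \Phi_\original\zeta_t$, $\mathbf{x}^\nominal_t = \Phi_\auxiliary\zeta_t$ for a common $\zeta_t$, whence $y^\nominal_t = \mathbf{y}^\nominal_t$. Second, and this is the heart, I run an inner induction over $t = k, \ldots, k+N_{\rm c}-1$ on the two online loops. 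Since the covariance recursions depend only on the models and noise statistics, the intertwinings propagate the factorizations $P^\minus_t = \Phi_\original\Pi_t\Phi_\original^\transpose$, $\mathbf{P}^\minus_t = \Phi_\auxiliary\Pi_t\Phi_\auxiliary^\transpose$ with common $\Pi_t$; crucially, setting $\Psi := C\Phi_\original = \mathbf{C}\Phi_\auxiliary$ makes both innovation covariances equal $\Psi\Pi_t\Psi^\transpose + \Sigma^{\sf v}$, so $L_t = \Phi_\original M_t$ and $\mathbf{L}_t = \Phi_\auxiliary M_t$ with common $M_t := \Pi_t\Psi^\transpose(\Psi\Pi_t\Psi^\transpose + \Sigma^{\sf v})^{-1}$. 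Feeding in the same measured $y_t$ (equal by the inner hypothesis) and equal nominal quantities, the deviations $\delta x_t := \hat x_t - x^\nominal_t$ and $\delta\mathbf{x}_t := \hat{\mathbf{x}}_t - \mathbf{x}^\nominal_t$ retain the form $\delta x_t = \Phi_\original\eta_t$, $\delta\mathbf{x}_t = \Phi_\auxiliary\eta_t$ with common $\eta_t$, the innovations $y_t - C\hat x^\minus_t = y_t - \mathbf{C}\hat{\mathbf{x}}^\minus_t$ coinciding through $\Psi$. Then $K\delta x_t = K\Phi_\original\eta_t = \mathbf{K}\Phi_\auxiliary\eta_t = \mathbf{K}\delta\mathbf{x}_t$, so $u_t$ is identical; applying the same $u_t$ to \eqref{Eq:LTI} with the same $w_t, v_t$ makes $x_{t+1}, y_{t+1}$ identical, closing the inner induction.

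Third, I re-establish the invariant at $k+N_{\rm c}$. The new priors are the one-step prior estimates and covariances \eqref{Eq:initial_condition_at_new_control_step}, \eqref{Eq:DDModel:initial_condition_at_new_control_step}; writing $\hat x^\minus_{k+N_{\rm c}} = \Phi_\original(\zeta + \eta^\minus)_{k+N_{\rm c}}$ and $P^\minus_{k+N_{\rm c}} = \Phi_\original\Pi_{k+N_{\rm c}}\Phi_\original^\transpose$, with the auxiliary counterparts given by the same coefficients through $\Phi_\auxiliary$, shows that \eqref{Eq:LEMMA:state_distribution_relation} holds at $k+N_{\rm c}$ with $\mu^\xi_{k+N_{\rm c}} = (\zeta+\eta^\minus)_{k+N_{\rm c}}$ and $\Sigma^\xi_{k+N_{\rm c}} = \Pi_{k+N_{\rm c}}$, completing the induction.

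I expect the main obstacle to be establishing the shared-coefficient intertwinings rigorously: that the two Kalman gains factor through a common $M_t$ despite the different state dimensions, and that the LQR feedbacks obey $K\Phi_\original = \mathbf{K}\Phi_\auxiliary$. The delicate point is that the gain involves an inverse, so the lifted covariance factorization does not obviously survive the update; it is precisely the identity $C\Phi_\original = \mathbf{C}\Phi_\auxiliary$ that collapses the two innovation covariances to a single matrix and rescues the recursion. Verifying the LQR compatibility (e.g.\ via $\Phi_\original^\transpose P_{\sf lqr}\Phi_\original = \Phi_\auxiliary^\transpose\mathbf{P}_{\sf lqr}\Phi_\auxiliary$ on the reachable subspace) and checking that $\mathbf{\Sigma}^{\sf w} = \Phi_\auxiliary\Xi\Phi_\auxiliary^\transpose$ uses the \emph{same} $\Xi$ as $\Sigma^{\sf w} = \Phi_\original\Xi\Phi_\original^\transpose$ are the steps where the explicit block structure of $\mathbf{A}, \mathbf{B}, \mathbf{C}$ from Lemma \ref{LEMMA:AuxModel} does the work.
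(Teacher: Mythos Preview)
Your proposal is correct and follows essentially the same approach as the paper: an outer induction over control steps $k=0,N_{\rm c},2N_{\rm c},\ldots$ carrying the relation \eqref{Eq:LEMMA:state_distribution_relation} as invariant, with an inner induction over $t$ that propagates common-coefficient factorizations through the Kalman filter, the nominal system, and the affine policy, using Proposition \ref{PROPOSITION:equivalence_of_optimization_problems} to match the nominal inputs. The paper organizes the same intertwinings slightly differently---it introduces an intermediate map $\Phi$ with $\Phi_\original=\Phi\Phi_\auxiliary$ and proves pairs of relations (``$\Phi$-related'' and ``$\Phi_\auxiliary$-factored'') rather than your single common-latent-coefficient formulation---but the content is identical; in particular, your $\Theta,\Xi,M_t,\Pi_t$ are exactly the paper's $\widetilde{A},\widetilde{\Sigma}^{\sf w},\widetilde{L}_t,\widetilde{P}^\minus_t$, and your anticipated verification of $K\Phi_\original=\mathbf{K}\Phi_\auxiliary$ via $\Phi_\original^\transpose P_{\sf lqr}\Phi_\original=\Phi_\auxiliary^\transpose\mathbf{P}_{\sf lqr}\Phi_\auxiliary$ is precisely the paper's Claim establishing the LQR-gain compatibility through a shared Riccati equation on the latent system $(\widetilde{A},\widetilde{B},\widetilde{C})$.
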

\begin{proof} See \ref{APPENDIX:PROOF:equivalence_of_control_algorithms}. \end{proof}

Theorem \ref{PROPOSITION:equivalence_of_control_algorithms}  should be interpreted as equivalence between SMPC and SDDPC in the idealized setting. Specifically, it establishes that if the proposed SDDPC algorithm is provided with noise-free offline data, if the initial conditions set within SMPC and SDDPC match, and if the process noise variance $\Sigma^{\rho}$ in the algorithm is set in a specific idealized fashion relative to the original process noise variance $\Sigma^{\sf w}$, then the method will produce identical results to those obtained by applying SMPC. While in practice these assumptions will not hold, noisy offline data can be accommodated as discussed in Section \ref{SECTION:DDSMPC:offline}, and $\Sigma^{\rho}$ becomes a tuning parameter of our SDDPC method.

%{\tb The equivalence in Theorem \ref{PROPOSITION:equivalence_of_control_algorithms} also yields recursive feasibility and closed-loop stability of Algorithm \ref{ALGO:DDSMPC}, provided the closed-loop properties of Algorithm \ref{ALGO:SMPC} in Section \ref{SECTION:problem:SMPC}.}

\section{Numerical Case Study} 
\label{SECTION:simulations}

In this section, we numerically test our proposed method on the nonlinear grid-connected power converter system from \cite{DeePC:huang2021}, shown in Fig. \ref{FIG:power_converter}, and we compare the results with those of several benchmark model-based and data-based techniques. 

%% FIGURE

\begin{figure}[b]
\includegraphics[width=.485\textwidth]{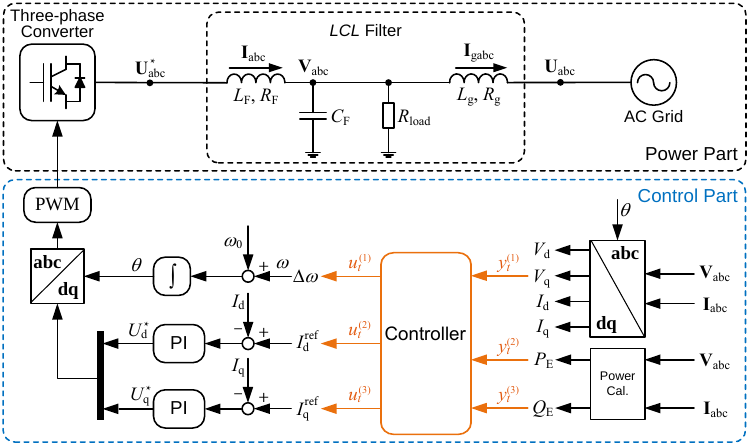}
\caption{The one-line diagram of a grid-connected power converter \cite[Fig. 1]{DeePC:huang2021}.}
\label{FIG:power_converter}
\end{figure}

%% FIGURE END

The AC grid in the power part of Fig. \ref{FIG:power_converter} is modeled as an infinite bus with fixed voltage ($1$ p.u.) and fixed frequency ($1$ p.u.). This model has $n=6$ states, $m=3$ inputs and $p=3$ outputs. The inputs are the angular frequency correction $\Delta \omega$ and current references $I^{\rm ref}_{\rm d}$ and $I^{\rm ref}_{\rm q}$ of d- and q-axes, respectively. The outputs to be controlled are the q-axis voltage $V_{\rm q}$, the active power $P_{\rm E}$ and the reactive power $Q_{\rm E}$. The LCL-filter parameters and the PI parameters in Fig. \ref{FIG:power_converter} are consistent with \cite{DeePC:huang2021}, whereas we choose the load resistance $R_{\rm load}$ as a Gaussian signal with mean $4$ p.u. and noise power $10^{-3}$ p.u., which introduces process noise.
The measurement noise on each output is Gaussian with noise power $10^{-7}$ p.u., consistent with \cite{DeePC:huang2021}.

\subsection{Benchmark Control Methods}
\label{SECTION:simulations:benchmark}

In this subsection, we review several existing receding-horizon control methods which are performed in our simulations and compared to our proposed SDDPC.

\subsubsection{Stochastic MPC and (Deterministic) MPC}
We investigate two model-based methods, namely Stochastic MPC (SMPC) as in Section \ref{SECTION:problem:SMPC} and deterministic MPC (or MPC). 
For both SMPC and MPC, we use an identified system model in place of the true model $(A, B, C, D)$, through N4SID system identification method \cite{n4sid:van1994} using offline data $u^\data, y^\data$ collected from the system.
MPC follows a similar receding-horizon control process as SMPC, whereas the control action $u_t$ is the decision $\thickbar u_t$ by optimization, instead of using a feedback policy; the MPC optimization problem is similar to \eqref{Eq:SMPC_reduced}, but a deterministic safety constraint
\begin{align} \label{Eq:constraints_deterministic}
    E \, \col(\thickbar u_t, \thickbar y_t) \leq f
\end{align}
should be used in place of \eqref{Eq:safety_constraint_reduced}.

\subsubsection{DeePC and SPC}
We investigate L2-regularized DeePC \cite{DeePC:huang2021} and regularized SPC \cite{SPC:huang2008} as benchmark data-driven methods.
In DeePC and SPC, the decisions $\thickbar u_t$ of optimization are applied as control actions $u_t$, and the deterministic constraint \eqref{Eq:constraints_deterministic} is considered.
Using offline data $u^\data, y^\data$, we formulate data Hankel matrices $U_{\rm p}, U_{\rm f}, Y_{\rm p}, Y_{\rm f}$ similar to matrices $U_1, U_2, Y_1, Y_2$ in \eqref{Eq:data_matrices}, but matrices $U_{\rm p}, U_{\rm f}, Y_{\rm p}, Y_{\rm f}$ have $mL, mN, pL, pN$ rows respectively.
The DeePC optimization problem at control step $t=k$,
\begin{align*} \begin{aligned}
    \minimize_{g,\, \sigma_{\rm y}}\quad& \textstyle{\sum_{t=k}^{k+N-1}} J_t (\thickbar u_t, \thickbar y_t) + \lambda_{\rm y} \Vert \sigma_{\rm y} \Vert^2_2 + \lambda_{\rm g} \Vert g \Vert^2_2 \\
    \mathrm{subject\;to} \quad& \compactmat{U_{\rm p} \\ Y_{\rm p}} g = \compactmat{u_{\rm ini} \\ y_{\rm ini} \!+\! \sigma_{\rm y}},\;\;
    \compactmat{\thickbar u_{[k,k+N)} \\ \thickbar y_{[k,k+N)}} := \compactmat{U_{\rm f} \\ Y_{\rm f}} g \\
    & \eqref{Eq:constraints_deterministic} \text{ for } t \in \integer_{[k,k+N)}
\end{aligned} \end{align*}
where $u_{\rm ini} := u_{[k-L,k)}$ and $y_{\rm ini} := y_{[k-L,k)}$ are past inputs and outputs, and $\lambda_{\rm y} > 0$ and $\lambda_{\rm g} > 0$ are regularization parameters.
The SPC optimization problem at control step $t=k$,
\begin{align*}
    \minimize_{\thickbar u} \quad& \textstyle{\sum_{t=k}^{k+N-1}} J_t (\thickbar u_t, \thickbar y_t)\\
    \mathrm{subject\;to} \quad& \thickbar y_{[k,k+N)} := \widehat{\mathcal{P}}_{\sf spc} \; \col(u_{\rm ini}, y_{\rm ini}, \thickbar u_{[k,k+N)}) \\
    & \eqref{Eq:constraints_deterministic} \text{ for } t \in \integer_{[k,k+N)}
\end{align*}
where $\widehat{\mathcal{P}}_{\sf spc}$ is the Tikhonov regularization of the prediction matrix $\mathcal{P}_{\sf spc} := Y_{\rm f} \, \col(U_{\rm p}, Y_{\rm p}, U_{\rm f})^\dagger$, obtained similarly as $\widehat{\mathcal{P}}$ in Section \ref{SECTION:DDSMPC:offline}, with a regularization parameter $\lambda > 0$.

\subsection{Offline Data Collection}
\label{SECTION:simulations:offline}

Offline data is required in all our investigated control methods, for use in either data matrices (SDDPC, DeePC and SPC) or for system identification (MPC and SMPC).
In our simulation, the data collection process lasted for $1$ second and produced a data trajectory of length $T_{\rm d} = 1000$ with a sampling period of $1$ms.
The input data was generated as follows: $\Delta \omega$ (input 1) was set as the phase-locked loop (PLL) control action (see e.g. \cite{DeePCApp:huang2019gridConnected}) plus a white-noise signal, $I^{\rm ref}_{\rm d}$ (input 2) was set as $0.4$ p.u. plus a white-noise signal, and $I^{\rm ref}_{\rm q}$ (input 3) was set at $0$ p.u. plus a white-noise signal.
Each white noise signal had noise power of $10^{-6}$ p.u..

{
\begin{table}[t]
\caption{Control Parameters} \label{TABLE:controller_parameters}
\centering 
\begin{tabular}{@{}ll@{}} 
    \toprule
    \multicolumn{2}{c}{\textbf{Time Horizon Lengths}} \\ \midrule
    Initial-condition horizon length & $L = 5$ \\
    Prediction horizon length & $N = 30$ \\
    Control horizon length & $N_{\rm c} = 10$ \\
    \toprule
    \multicolumn{2}{c}{\textbf{Problem Setup Parameters}} \\ \midrule
    Sampling Period & $T_{\rm s} = 1$ms\\
    Cost matrices & $Q = 10^4 I_p$, $R = I_m$ \\
    Constraint coefficients & $E = I_{m+p} \otimes \smallmat{1\\-1}$, \\
    & $f = \smallmat{0.6 \times \mathbf{1}_{2m \times 1} \\ 0.4 \times \mathbf{1}_{2p \times 1}}$ \\
    Risk probability bound & $p = 0.2$ \\
    Interpolation penalty & $\lambda_\theta = 10$ \\
    Variance of $v_t$ for SMPC/SDDPC & $\Sigma^{\sf v} = 10^{-8} I_p$ \\
    Variance of $\rho_t$ for SDDPC & $\Sigma^\rho = 10^{-4} I_{pL}$ \\
    Variance of $w_t$ for SMPC$^{\rm a}$ & $\Sigma^{\sf w} = \mathcal{O}^\dagger \Sigma^\rho \mathcal{O}^{\dagger \transpose}$ \\ \toprule
    \multicolumn{2}{c}{\textbf{Regularization Parameters}} \\ \midrule
    Regularization in DeePC & $\lambda_{\rm y} = 10^6$, \, $\lambda_{\rm g} = 10^3$ \\ 
    Regularization of $\mathcal{P}$ in SDDPC & $\lambda = 10^{-3}$ \\
    Regularization of $\mathcal{P}_{\sf spc}$ in SPC & $\lambda = 10^{-3}$ \\ \bottomrule 
    \multicolumn{2}{@{$^{\rm a}$}l@{}}{In computation of $\Sigma^{\sf w}$, matrix $\mathcal{O}$ is obtained given the} \\
    \multicolumn{2}{l@{}}{identified system $(A,B,C,D)$ in SMPC.}
\end{tabular} \end{table}
}

\afterpage{
\begin{figure}[t]
\includegraphics[width=.48\textwidth]{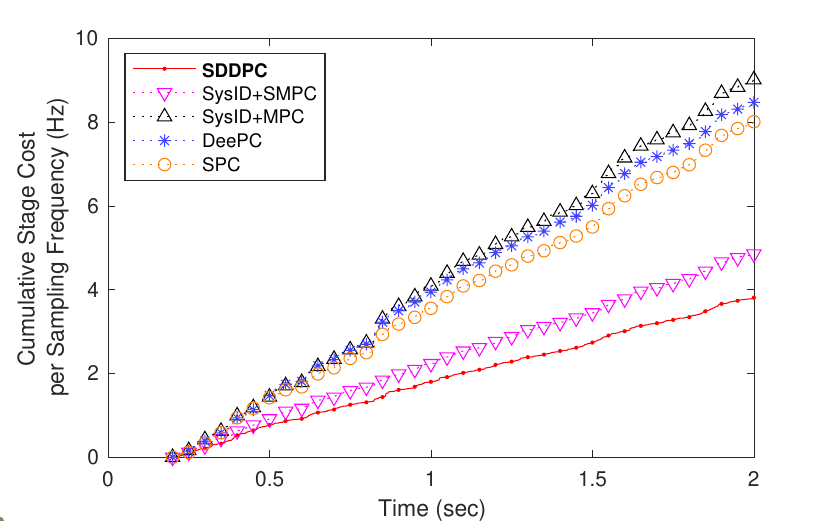}
\caption{Cumulative stage cost with different controllers, $N_{\rm c} = 10$.}
\label{FIG:cost_nc_10}
\end{figure}

\begin{figure}[t]
\includegraphics[width=.48\textwidth]{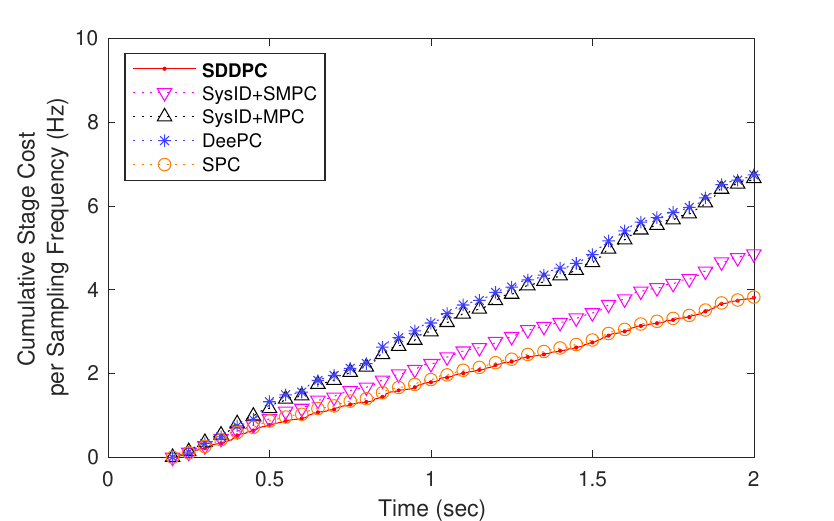}
\caption{Cumulative stage cost with different controllers,  $N_{\rm c} = 1$.}
\label{FIG:cost_nc_01}
\end{figure}
}
\subsection{Results}
\label{SECTION:simulations:result}

All controller parameters are reported in Table \ref{TABLE:controller_parameters}. Our simulation consists of two parts. 
In the first part, we compare the tracking performances of the different controllers.
In the second part, we examine the ability of the controllers to maintain safety constraints.

\subsubsection{Tracking Performance}
For each controller, we perform the following control process.
From time $0$s to time $0.2$s, the controller is switched off, and the inputs $I_{\rm d}^{\rm ref}$ and $I_{\rm q}^{\rm ref}$ are set to zero, with $\Delta \omega$ generated from the PLL.
After time $0.2$s, the controller is switched on, and the output reference signal is $r_t = [0,0,0]^\transpose$ before time $0.5$s and $r_t = [0,0.3,0]^\transpose$ after time $0.5$s. To quantitatively compare the results, Fig. \ref{FIG:cost_nc_10} shows the stage cost accumulated over the first two seconds for each controller.
The result shows that the stochastic control methods (SMPC and SDDPC) outperformed the deterministic control methods (DeePC, SPC and MPC) in terms of their cumulative costs.
This observation aligns with our expectation that stochastic control performs better with stochastic systems, since the stochastic control methods receive feedback at each time step -- more frequently than the deterministic control methods which receive feedback only at each control step, i.e., every $N_{\rm c}=10$ time steps.
However, this benefit of stochastic control vanishes when we select shorter control horizons.
Fig. \ref{FIG:cost_nc_01} shows the cumulative stage costs when the control horizon has length $N_{\rm c} = 1$, where we no longer observe a performance gap between all stochastic methods and all deterministic methods. SDDPC and SPC outperformed other controllers.
Although we showed the results with different $N_{\rm c}$, we emphasize significance of the $N_{\rm c} = 10$ setting, which requires less computation since the optimization problems are solved less frequently.

\subsubsection{Output Constraint Satisfaction}
We next evaluate for each controller its ability to meet the output safety constraints.
We repeat the control process above, but the reference signal becomes $r_t = [0,0,0]^\transpose$ before time $0.5$s and $r_t = [0,0.5,0]^\transpose$ after time $0.5$s.
Note that the reference value $0.5$ for the second output channel after time $0.5$s is beyond the range of output safety constraint (with $E, f$ in TABLE \ref{TABLE:controller_parameters}), which restricts all output channels within the range of $[-0.4, 0.4]$. As a result, in our simulations, the second output channel remained close to the upper safety bound of $0.4$ after time $0.5$s for all controllers; for example, the trace of the second output under SPC and SDDPC is displayed in Fig. \ref{FIG:constrained}.

To quantify the constraint satisfaction with each controller, from time $0.5$s to time $2.0$s ($1500$ time steps), we count the number and compute the rate of time steps where the measurement of the second output channel violates the safety constraint. As a second metric, we sum the amount of constraint violation that occurs between $0.5$s to $2.0$s for each controller.
The results are displayed in TABLE \ref{TABLE:constraint_violation}, where we also displayed the results of SMPC and SDDPC with parameter $p$ changed from $0.2$ (as in TABLE \ref{TABLE:controller_parameters}) to $0.05$.
As the result shows, both violation rates of SMPC and SDDPC declined as we decrease $p$, while the violation rate of SDDPC shrank more effectively than that of SMPC.
The total violation amounts of SMPC and SDDPC also reduced when we decrease $p$.
Among the methods using deterministic safety constraint, DeePC had a lower violation rate and a smaller violation amount than MPC and SPC.

%% TABLES and FIGURES

\begin{figure}[t]
\includegraphics[width=.48\textwidth]{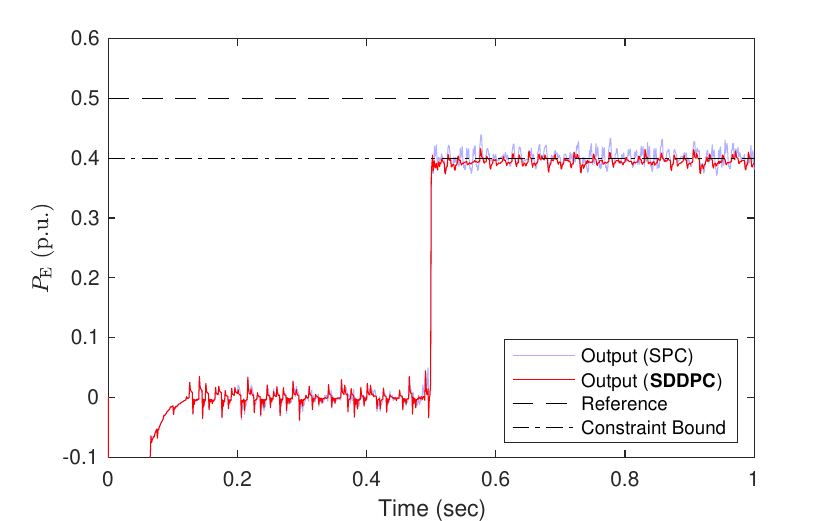}
\caption{The second output signals with SPC (light blue) and SDDPC (red) in the constraint satisfaction test. }
\label{FIG:constrained}
\end{figure}

\begin{table}[t]
\caption{Statistics of Constraint Violation \\[-.3em] of the Second Output Channel from $0.5$s to $2.0$s} \label{TABLE:constraint_violation}
\centering 
\begin{tabular}{@{}lcc@{}} 
    \toprule
    \textbf{Controller} & \begin{tabular}{c}\textbf{Violation Rate}\end{tabular} & \begin{tabular}{c}\textbf{Total Violation} \\ \textbf{Amount} \end{tabular} \\ \midrule
    SDDPC ($p = 0.2$) & $0.15$ & 1.10 \\
    SDDPC ($p = 0.05$) & $0.03$ & 0.05 \\
    \midrule
    SysID+SMPC ($p = 0.2$) & $0.19$ & 1.55 \\
    SysID+SMPC ($p = 0.05$) & $0.11$ & 0.52 \\
    SysID+MPC & $0.57$ & 6.79 \\
    DeePC & $0.20$ & 1.46 \\
    SPC & $0.49$ & 8.42 \\
    \bottomrule 
\end{tabular} \end{table}

\section{Conclusions} \label{SECTION:conclusion}
\color{black}

We introduced a novel direct data-driven control framework named Stochastic Data-Driven Predictive Control (SDDPC).
Analogous to Stochastic MPC (SMPC), SDDPC accounts for process and measurement noise in the control design, and produces closed-loop control policies through optimization.
On the theoretical front, we proved that SDDPC can produce control inputs equivalent to those of SMPC under specific conditions.
Simulation results indicate that the proposed approach provides benefits in terms of both cumulative stage cost and output constraint violation. 
Future work will seek to improve the computational efficiency of the approach, and to analyze and enhance the robustness with noisy offline data.
Other important directions include extension to non-Gaussian noise, optimization over the feedback gain $K$, and restriction of violation amount through, e.g., CVaR safety constraints.

%\newpage

\setcounter{section}{0}
\renewcommand{\thesection}{Appendix \Alph{section}}

\ifversionB{
\section{Proof of \eqref{Eq:input_output_distribution}} \label{APPENDIX:PROOF:input_output_distribution}

\begin{proof}

Define $e_t := \col(x_t - \hat x^\minus_t, \hat x^\minus_t - \thickbar x_t) \in \real^{2n}$.
We first show that $e_t$ follows the distribution
\begin{align} \label{Eq:PROOF:input_output_distribution:relation_1}
    e_t \sim \normal(0_{2n \times 1},\, \Diag(\Sigma^{\sf x}, \Lambda_{t-k}))
\end{align}
for $t \in \integer_{[k,k+N)}$, with $\Lambda_s$ in \eqref{Eq:input_output_variance:Lambda}, by induction on $t$.
\textbf{Base Case $t=k$.} With $\hat x^\minus_k = \mu^{\sf x}_k$ as \eqref{Eq:Kalman_filter:initial} and $\thickbar x_k = \mu^{\sf x}_k$ as \eqref{Eq:nominal_model:initial}, we have $e_k = \col(x_k - \mu^{\sf x}_k, 0_{n\times 1})$ which is distributed as $\normal(0_{2n \times 1}, \Diag(\Sigma^{\sf x}, 0_{n\times n}))$ via \eqref{Eq:initial_condition}. This shows the $t=k$ case of \eqref{Eq:PROOF:input_output_distribution:relation_1} given $\Lambda_0 = 0_{n \times n}$ from \eqref{Eq:input_output_variance:Lambda}.
\textbf{Inductive Step.} Assume \eqref{Eq:PROOF:input_output_distribution:relation_1} for $t = \tau \in \integer_{[k,k+N-2]}$. Note the relation \cite{OFSMPC:farina2015}
\begin{align}
\label{Eq:PROOF:input_output_distribution:relation_2}
    e_{\tau+1} = \Theta_0 e_\tau + \Theta_1 \col(w_\tau, \, v_\tau)
\end{align}
by expressing $x_{\tau+1}, \hat x^\minus_{\tau+1}, \thickbar x_{\tau+1}$ in terms of $x_\tau, \hat x^\minus_\tau, \thickbar x_\tau, w_\tau, v_\tau$ given \eqref{Eq:LTI:state}, \eqref{Eq:nominal_model:state}, \eqref{Eq:Kalman_filter:postetior}, \eqref{Eq:Kalman_filter:prior}, \eqref{Eq:feedback_policy}, where we define
\begin{align}
\label{Eq:PROOF:input_output_distribution:definition}
    \Theta_0 := \compactmat{A - L_{\sf L} C & 0_{n \times n}\\ L_{\sf L} C & A - B K}, \quad
    \Theta_1 := \compactmat{I_n & - L_{\sf L} \\ 0_{n \times n} & L_{\sf L}}.
\end{align}
Through the system \eqref{Eq:LTI} and the estimator \eqref{Eq:Kalman_filter}, both $w_\tau$ and $v_\tau$ are independent of $x_\tau$ and $\hat x^\minus_\tau$ and thus independent of $e_\tau$.
It follows from the relation \eqref{Eq:PROOF:input_output_distribution:relation_2}, the (independent) distribution of $w_\tau, v_\tau$ in \eqref{Eq:noise_distribution} and the distribution of $e_\tau$ in \eqref{Eq:PROOF:input_output_distribution:relation_1} that $e_{\tau+1}$ is distributed as
\begin{align}
\label{Eq:PROOF:input_output_distribution:relation_3}
    e_{\tau+1} \sim \normal (0_{2n\times 1}, \, \Theta_0 \smallmat{\Sigma^{\sf x} \\[-.1em] & \Lambda_{\tau-k}} \Theta_0^\transpose + \Theta_1 \smallmat{\Sigma^{\sf w}\\& \Sigma^{\sf v}} \Theta_1^\transpose ).
\end{align}
The variance in \eqref{Eq:PROOF:input_output_distribution:relation_3} is equal to what follows, by substitution of $\Theta_0$ and $\Theta_1$ in \eqref{Eq:PROOF:input_output_distribution:definition} and direct matrix multiplication, 
\begin{align}
\label{Eq:PROOF:input_output_distribution:expression}
    \compactmat{ \mathcal{S}_0 - \mathcal{S}_1 - \mathcal{S}_1^\transpose + \mathcal{S}_2 + \Sigma^{\sf x} 
    & \mathcal{S}_1^\transpose - \mathcal{S}_0 \\ 
    \mathcal{S}_1 - \mathcal{S}_0 
    & \mathcal{S}_0 + (A\!-\!BK)\Lambda_{\tau-k} (A\!-\!BK)^\transpose}
\end{align}
where we use shortcuts $\mathcal{S}_0 := L_{\sf L} (C \Sigma^{\sf x} C^\transpose + \Sigma^{\sf v}) L_{\sf L}\!{}^\transpose$, $\mathcal{S}_1 := L_{\sf L} C \Sigma^{\sf x} A^\transpose$ and $\mathcal{S}_2 := A \Sigma^{\sf x} A^\transpose + \Sigma^{\sf w} - \Sigma^{\sf x}$.
Notice that $\mathcal{S}_0 = \mathcal{S}_1$ by definition of $L_{\sf L}$ in \eqref{Eq:state_variance:gain}, and $\mathcal{S}_1 = \mathcal{S}_2$ via \eqref{Eq:DDModel:state_variance:DARE}.
One can also verify that $\mathcal{S}_0 + (A-BK) \Lambda_s (A-BK)^\transpose = \Lambda_{s+1}$ for all $s \in \natural_{\geq 0}$, using definition \eqref{Eq:input_output_variance:Lambda}.
Thus, the matrix \eqref{Eq:PROOF:input_output_distribution:expression} is equal to $\Diag(\Sigma^{\sf x}, \Lambda_{\tau-k+1})$, which implies that \eqref{Eq:PROOF:input_output_distribution:relation_3} is the $t = \tau+1$ case of \eqref{Eq:PROOF:input_output_distribution:relation_1}.
Induction on $t$ shows \eqref{Eq:PROOF:input_output_distribution:relation_1} for $t \in \integer_{[k,k+N)}$.

Finally, we show \eqref{Eq:input_output_distribution} for $t \in \integer_{[k,k+N)}$ by noting that
\begin{align}
\label{Eq:PROOF:input_output_distribution:relation_4}
    \compactmat{u_t \\ y_t} \!=\! \compactmat{\thickbar u_t \\ \thickbar y_t} \!+\! \Psi e_t \!+\! \compactmat{0_{m \times 1} \\ v_t}
    \;\text{with}\; \Psi \!:=\! \compactmat{0_{m \times n} & -K \\ C & C\!-\!DK},
\end{align}
given \eqref{Eq:LTI:output} and \eqref{Eq:feedback_policy}. With the distribution \eqref{Eq:PROOF:input_output_distribution:relation_1} of $e_t$ and the distribution of $v_t$ in \eqref{Eq:noise_distribution}, where $e_t$ and $v_t$ are independent, it follows from \eqref{Eq:PROOF:input_output_distribution:relation_4} that
\begin{align*}
    \compactmat{u_t \\ y_t} \sim \normal \big( \compactmat{\thickbar u_t \\ \thickbar y_t}, \Psi \smallmat{\Sigma^{\sf x} \\[-.1em] & \Lambda_{t-k}} \Psi^\transpose + \smallmat{0_{m \times m} \\ & \Sigma^{\sf v}} \big),
\end{align*}
in which the variance can be verified equal to $\Delta_{t-k}$ defined in \eqref{Eq:input_output_variance:Delta} through direct calculation, and thus the above distribution is equivalent to \eqref{Eq:input_output_distribution}.
\end{proof}

\section{Iterative Risk Allocation} \label{APPENDIX:iterative_risk_allocation}

We record here an efficient method for solving the convex problem \eqref{Eq:SMPC_reduced}, known as Iterative Risk Allocation \cite{Ono2008}, described in Algorithm \ref{ALGO:iterative_risk_allocation}.

To begin, note that if we fix all variables $p_{i,t}$, then problem \eqref{Eq:SMPC_reduced} is reduced into the quadratic problem
\begin{align} \label{Eq:SMPC_subproblem} \begin{aligned}
    \minimize_{\thickbar u,\, \theta} \quad& \textstyle{\sum_{t=k}^{k+N-1}} J_t (\thickbar u_t, \thickbar y_t) + \lambda_\theta \, \theta
    \\
    \mathrm{subject\;to} \quad& 
    \!\!
    \text{\eqref{Eq:safety_constraint_reduced:constraint} for $t \in \integer_{[k,k+N)}$}, \eqref{Eq:nominal_model}, \eqref{Eq:interpolating_initial_condition}, \eqref{Eq:terminal_condition},
\end{aligned} \!\! \end{align}
which can be efficiently solved. The optimal solution to \eqref{Eq:SMPC_reduced} is the infimum of the solution to \eqref{Eq:SMPC_subproblem} over all $p_{i,t}$ satisfying \eqref{Eq:safety_constraint_reduced:risk_allocation}.
Hence, we solve problem \eqref{Eq:SMPC_subproblem} repeatedly with updated $p_{i,t}$ until the objective value converges with no significant change.
The entire process shows in Algorithm \ref{ALGO:iterative_risk_allocation}, which extends \cite[Algorithm 1]{Ono2008} from their single chance constraint into our separate chance constraints over time steps.
Newly introduced parameters are a shrinkage rate $\alpha \in (0,1)$ and a termination threshold $\varepsilon > 0$.
The initialization at line \ref{LINE:IRA:initialization} ensures feasibility of problem \eqref{Eq:SMPC_subproblem}, due to recursive feasibility.
From line \ref{LINE:IRA:getting_indicators}, we obtain binary indicators $a_{i,t} \in \{0,1\}$ showing whether constraint \eqref{Eq:safety_constraint_reduced:constraint} is active or not for each $(i,t)$.
This indicator is utilized in the process of updating $p_{i,t}$ in lines \ref{LINE:IRA:risk_update:begin}-\ref{LINE:IRA:risk_update:end}.
Note that, when the condition in line \ref{LINE:IRA:risk_update:condition} is true, the update routine in lines \ref{LINE:IRA:risk_update:begin}-\ref{LINE:IRA:risk_update:end} no longer makes change on $p_{i,t}$, so in this case the iteration terminates.
In line \ref{LINE:IRA:input_risk_update}, $\mathrm{cdfn}(z) := \frac12 + \frac12 \mathrm{erf} ( z / \sqrt2 )$ is the cumulative density function (c.d.f.) of the standard normal distribution, with $\mathrm{erf}$ the error function.

Similarly, problem \eqref{Eq:DDSMPC_reduced} can also be solved by Algorithm \ref{ALGO:iterative_risk_allocation} with $A$, $B$, $C$, $\mu^{\sf x}_k$, $\mu^{\sf \hat x}_k$, $\mu^{\sf \bar x}_k$, $\Delta_s$, $\thickbar y_t$ replaced by $\mathbf{A}$, $\mathbf{B}$, $\mathbf{C}$, $\boldsymbol{\mu}^{\sf x}_k$, $\boldsymbol{\mu}^{\sf \hat x}_k$, $\boldsymbol{\mu}^{\sf \bar x}_k$, $\mathbf{\Delta}_s$, $\thickbar{\mathbf{y}}_t$ respectively.

\begin{algorithm} \caption{Iterative Risk Allocation for solving \eqref{Eq:SMPC_reduced}} \label{ALGO:iterative_risk_allocation}

\begin{algorithmic}[1]
    \Require horizon lengths $L, N$, system matrices $A,B,C,D$, interpolation options $\mu^{\sf \hat x}_k, \mu^{\sf \bar x}_k$, cost matrices $Q, R$, constraint coefficients $E, f$, probability bound $p$, interpolation penalty coefficient $\lambda_\theta$, input-output variances $\Delta_{[0,N)}$, shrinkage rate $\alpha$, termination threshold $\varepsilon$, and the risk allocation $p^{\sf last}_{i,t}$ solved at last control step.
    \Ensure An approximate solution $(\thickbar u, \theta, p_{i,t})$ to problem \eqref{Eq:SMPC_reduced}.
    \State Initialize $p_{i,t} \gets p^{\sf last}_{i,s(t)}$ for $t \in \integer_{[k,k+N)}$ and $i \in \{1, \ldots, q\}$, where $s(t) := \min(t, k+N-N_{\rm c}-1)$.
    \label{LINE:IRA:initialization}
    \State Initialize $J^\star_{\sf prev} \gets +\infty$.
    \While{\texttt{true}}
    \State \parbox[t]{\linewidth-\algorithmicindent}{Solve $(\thickbar u, \thickbar y, \theta)$ from problem \eqref{Eq:SMPC_subproblem} and obtain the cost value $J^\star$. Record whether the constraints \eqref{Eq:safety_constraint_reduced:constraint} is active or not for each $(i, t)$.}
    \State \textbf{if} $|J^\star_{\sf prev} - J^\star| \leq \varepsilon$ \textbf{then break else} $J^\star_{\sf prev} \gets J^\star$. 
    \State \parbox[t]{\linewidth-\algorithmicindent}{For $t \in \integer_{[k,k+N)}$ and $i \in \{1, \ldots, q\}$, let $a_{i,t} \gets 1$ if constraint \eqref{Eq:safety_constraint_reduced:constraint} is active for $(i,t)$, otherwise $a_{i,t} \gets 0$.} 
    \label{LINE:IRA:getting_indicators}
    \State \parbox[t]{\linewidth-\algorithmicindent}{$a^{\sf sum}_t \gets \sum_{i=1}^q a_{i,t}$ for all $t \in \integer_{[k,k+N)}$.} 
    \State \textbf{if} {$a^{\sf sum}_t \in \{0,q\}$ for all $t \in \integer_{[k,k+N)}$} \textbf{then break}.
    \label{LINE:IRA:risk_update:condition}

    \For{$t \in \integer_{[k,k+N)}$ such that $0 < a^{\sf sum}_t < q$}
    \label{LINE:IRA:risk_update:begin}
        \For{all $i \in \{1, \ldots, q\}$ such that $a_{i,t} = 0$}
        \State $p_{i,t} \gets \alpha p_{i,t} \!+\! (1\!-\!\alpha) \Big( 1 \!-\! \mathrm{cdfn} \Big( \displaystyle{\frac{f_i \!-\! e^\transpose_i \compactmat{\thickbar u_t \\ \thickbar y_t}} {\sqrt{e_i\!{}^\transpose \Delta_{t-k} e_i}}} \Big)\! \Big)$.
        \label{LINE:IRA:input_risk_update}
        \EndFor
        \State $p^{\sf residual}_t \gets p - \sum_{i=1}^q p_{i,t}$.
        \For{all $i \in \{1, \ldots, q\}$ such that $a_{i,t}=1$}
        \State $p_{i,t} \gets p_{i,t} + p^{\sf residual}_t / a^{\sf sum}_t$.
        \label{LINE:IRA:risk_update:end}
        \EndFor
    \EndFor
    \EndWhile
\end{algorithmic}
\end{algorithm}

\section{Proof of Lemma \ref{LEMMA:recursive_feasibility}} \label{APPENDIX:PROOF:recursive_feasibility}
\setcounter{proposition}{\getrefnumber{LEMMA:recursive_feasibility}}
\setcounter{claim}{0}

\begin{proof}
\setlength{\abovedisplayskip}{4pt}
\setlength{\belowdisplayskip}{4pt}
Let $\kappa^\plus := \kappa + N_{\rm c}$, and let $|_k$ denote variables calculated at control step $k \in \{\kappa, \kappa^\plus\}$.
Let $(\thickbar u^*, \theta^*, p^*_{i,t}) |_\kappa$ be the optimal solution to problem \eqref{Eq:SMPC_reduced} at $k = \kappa$, and consider the following solution $(\thickbar u^\diamond, \theta^\diamond, p^\diamond_{i,t}) |_{\kappa^\plus}$ at $k = \kappa^\plus$, cf. \cite{Kohler2022}, 
\begin{align}
\label{Eq:PROOF:LEMMA:recursive_feasibility:solution} 
    \thickbar u^\diamond_t |_{\kappa^\plus} := \thickbar u^*_{s(t)} |_\kappa, \quad
    \theta^\diamond |_{\kappa^\plus} := 1, \quad
    p^\diamond_{i,t} |_{\kappa^\plus} := p^*_{i,s(t)} |_\kappa,
\end{align}
for all $t \in \integer_{[\kappa^\plus, \kappa^\plus + N)}$ and $i \in \integer_{[1,q]}$,
where we let $s(t) := \min(t, \kappa+N-1)$.
In this proof, we will show that \eqref{Eq:PROOF:LEMMA:recursive_feasibility:solution} is a feasible solution to problem \eqref{Eq:SMPC_reduced}.
Let $\thickbar y^* |_\kappa$ (resp.  $\thickbar y^\diamond |_{\kappa^\plus}$) denote the resulting nominal output via \eqref{Eq:nominal_model}, \eqref{Eq:interpolating_initial_condition} given $(\thickbar u^*, \theta^*) |_\kappa$ (resp. $(\thickbar u^\diamond, \theta^\diamond) |_{\kappa^\plus}$), and we have the following.

\begin{claim} \label{CLAIM:resulting_nominal_output}
    Given $(\thickbar u^\diamond, \theta^\diamond) |_{\kappa^\plus}$ in \eqref{Eq:PROOF:LEMMA:recursive_feasibility:solution}, the nominal output is $\thickbar y^\diamond_t |_{\kappa^\plus} = \thickbar y^*_{s(t)} |_\kappa$ for $t \in \integer_{[\kappa^\plus, \kappa^\plus + N)}$.
\end{claim}
\begin{proof}
Since we choose $\theta^\diamond |_{\kappa^\plus} = 1$ in \eqref{Eq:PROOF:LEMMA:recursive_feasibility:solution}, the nominal states $\thickbar x_{\kappa^\plus}$ are the same over control steps $k \in \{\kappa, \kappa^\plus\}$, as
\begin{align} \label{Eq:PROOF:LEMMA:recursive_feasibility:same_initial_state}
    \thickbar x^\diamond_{\kappa^\plus} |_{\kappa^\plus} 
    \overset{\text{via \eqref{Eq:nominal_model:initial}}}= \mu^{\sf x}_{\kappa^\plus} 
    \overset{\text{via \eqref{Eq:interpolating_initial_condition}}}= \mu^{\sf \bar x}_{\kappa^\plus} 
    \overset{\text{via \eqref{Eq:initial_condition_iteration}}}= \thickbar x^*_{\kappa^\plus} |_\kappa.
\end{align}
Given the same nominal states $\thickbar x_{\kappa^\plus}$ in \eqref{Eq:PROOF:LEMMA:recursive_feasibility:same_initial_state} and same nominal inputs $\thickbar u_{[\kappa^\plus, \kappa + N)}$ via \eqref{Eq:PROOF:LEMMA:recursive_feasibility:solution} over control steps $k \in \{\kappa, \kappa^\plus\}$,
the resulting nominal states and outputs are the same, i.e.,
\begin{subequations} \begin{align}
\label{Eq:PROOF:LEMMA:recursive_feasibility:same_state}
    & \thickbar x^\diamond_t |_{\kappa^\plus} = \thickbar x^*_t |_\kappa, \qquad t \in \integer_{[\kappa^\plus, \kappa+N]}, \\
\label{Eq:PROOF:LEMMA:recursive_feasibility:same_output}
    & \thickbar y^\diamond_t |_{\kappa^\plus} = \thickbar y^*_t |_\kappa, \qquad t \in \integer_{[\kappa^\plus, \kappa+N)}.
\end{align} \end{subequations}
Due to the terminal condition \eqref{Eq:terminal_condition} where $L$ is at least the system \emph{lag}, the observable component of the terminal state $\thickbar x^*_{\kappa + N} |_\kappa$ is in equilibrium with input $\thickbar u^*_{\kappa+N-1} |_\kappa$ and output $\thickbar y^*_{\kappa+N-1} |_\kappa$ \cite[Sec. 2.3]{DDMPC:berberich2020c}. (This statement does not require observability of the system and thus can be generalized for the proof of Corollary \ref{COROLLARY:recursive_feasibility} where the auxiliary system is considered.)
Given $\thickbar x^\diamond_{\kappa + N} |_{\kappa^\plus} = \thickbar x^*_{\kappa + N} |_\kappa$ via \eqref{Eq:PROOF:LEMMA:recursive_feasibility:same_state} and $\thickbar u^\diamond_t |_{\kappa^\plus} = \thickbar u^*_{\kappa+N-1} |_\kappa$ via \eqref{Eq:PROOF:LEMMA:recursive_feasibility:solution} for $t \in \integer_{[\kappa + N, \kappa^\plus+N)}$, the nominal output is in equilibrium as
\begin{align*}
    \thickbar y^\diamond_t |_{\kappa^\plus} = \thickbar y^*_{\kappa+N-1} |_\kappa, \quad t \in \integer_{[\kappa+N, \kappa^\plus+N)},
\end{align*}
which result together with \eqref{Eq:PROOF:LEMMA:recursive_feasibility:same_output} shows the claim.
\renewcommand\qedsymbol{$\blacklozenge$}
\end{proof}

We finish the proof by showing that the solution \eqref{Eq:PROOF:LEMMA:recursive_feasibility:solution} satisfies both constraints \eqref{Eq:safety_constraint_reduced} and \eqref{Eq:terminal_condition}.
The terminal constraint \eqref{Eq:terminal_condition} holds with solution \eqref{Eq:PROOF:LEMMA:recursive_feasibility:solution}, since we have $(\thickbar u^\diamond_t, \thickbar y^\diamond_t) |_{\kappa^\plus}$ for $t \in \integer_{[\kappa^\plus+N-L, \kappa^\plus +N)}$ all equal to
\begin{align*}
    (\thickbar u^\diamond_t, \thickbar y^\diamond_t) |_{\kappa^\plus} 
    = (\thickbar u^*_{s(t)}, \thickbar y^*_{s(t)}) |_\kappa 
    = (\thickbar u^*_{\kappa+N-1}, \thickbar y^*_{\kappa+N-1}) |_\kappa, 
\end{align*}
where the first equality is from \eqref{Eq:PROOF:LEMMA:recursive_feasibility:solution} and Claim \ref{CLAIM:resulting_nominal_output}, and the second equality is because constraint \eqref{Eq:terminal_condition} holds at $k = \kappa$.
Before showing satisfaction of \eqref{Eq:safety_constraint_reduced}, we claim a useful result.

\begin{claim} \label{CLAIM:variance_matrices_inequality}
    For $\Delta_s$ in \eqref{Eq:input_output_variance}, we have $\Delta_0 \preceq \Delta_1 \preceq \cdots \preceq \Delta_{N-1}$.
\end{claim}
\begin{proof}
    Given \eqref{Eq:input_output_variance:Delta}, the result follows from the fact $\Lambda_0 \preceq \Lambda_1 \preceq \cdots \preceq \Lambda_{N-1}$, which is clear from \eqref{Eq:input_output_variance:Lambda}.
\renewcommand\qedsymbol{$\blacklozenge$}
\end{proof}

Define $\mathcal{R} (\Delta_{t-k}) \subseteq \real^{m} \times \real^p \times \real^q$ the set of all $(\thickbar u_t, \thickbar y_t, p_{\boldsymbol\cdot,t})$ satisfying \eqref{Eq:safety_constraint_reduced}, where we let $p_{\boldsymbol\cdot,t} := \col(p_{1,t}, \ldots, p_{q,t}) \in \real^q$.
To show that constraint \eqref{Eq:safety_constraint_reduced} is satisfied by solution \eqref{Eq:PROOF:LEMMA:recursive_feasibility:solution}, it is equivalent to show that
\newcommand{\thinreduce}{\hspace{-.2ex}}
\begin{align*}
    & (\thinreduce \thickbar u^\diamond_t \thinreduce, \thickbar y^\diamond_t \thinreduce, p^\diamond_{\boldsymbol\cdot, t} \thinreduce) |_{\kappa^\plus} 
    \!=\! (\thinreduce \thickbar u^*_{s (\thinreduce t \thinreduce)} \thinreduce, \thickbar y^*_{s (\thinreduce t \thinreduce)} \thinreduce, p^*_{\boldsymbol\cdot, s (\thinreduce t \thinreduce)} \thinreduce) |_\kappa
    \!\in\! \mathcal{R} \thinreduce(\thinreduce \Delta_{s (\thinreduce t \thinreduce)-\kappa} \thinreduce)
    \!\subseteq\! \mathcal{R} \thinreduce(\thinreduce \Delta_{t-\kappa^\plus} \thinreduce)
\end{align*}
for all $t \in \integer_{[\kappa^\plus, \kappa^\plus+N)}$,
where the first equality uses \eqref{Eq:PROOF:LEMMA:recursive_feasibility:solution} and Claim \ref{CLAIM:resulting_nominal_output}, the belong sign ($\in$) is because constraint \eqref{Eq:safety_constraint_reduced} holds at $k = \kappa$, and the inclusion ($\subseteq$) comes from the fact $s(t) - \kappa \geq t - \kappa^\plus$ for $t \in \integer_{[\kappa^\plus, \kappa^\plus+N)}$ (implied by definition of $s(t)$ and $\kappa^\plus$) and from the fact $\mathcal{R} (\Delta_0) \supseteq \mathcal{R} (\Delta_1) \supseteq \ldots \supseteq \mathcal{R} (\Delta_{N-1})$, which is obtained given Claim \ref{CLAIM:variance_matrices_inequality}, given the definition of $\mathcal{R}(\cdot)$ based on \eqref{Eq:safety_constraint_reduced}, and given $\mathrm{icdfn}(1 - p_{i,t}) > 0$ for all $p_{i,t} < p \leq \frac12$.

Thus, the solution \eqref{Eq:PROOF:LEMMA:recursive_feasibility:solution} at $k = \kappa^\plus$ satisfies both constraints \eqref{Eq:safety_constraint_reduced} and \eqref{Eq:terminal_condition}, and the recursive feasibility is proved.
\end{proof}%

\section{Proof of Lemma \ref{LEMMA:closed_loop_stability}}
\label{APPENDIX:PROOF:closed_loop_stability}

\begin{proof}
\setlength{\abovedisplayskip}{4pt}
\setlength{\belowdisplayskip}{4pt}
Let $J(u_t, y_t)$ denote the cost \eqref{Eq:stage_cost} with the fixed reference $r_t = r$. Consider the optimal solution $(\thickbar u^*, \theta^*, p^*_{i,t}) |_k$ to problem \eqref{Eq:SMPC_reduced} at control step $k \in \{ \kappa, \kappa^\plus\}$, where $\kappa^\plus := \kappa + N_{\rm c}$, and let $\thickbar y^* |_k$ be the resulting nominal output. Similar to \eqref{Eq:cost_reduced}, the expectation of optimal cost $V^*_k$ for $k \in \{\kappa, \kappa^\plus\}$ is
\begin{align} \label{Eq:PROOF:closed_loop_stability:predictive_cost_expectation}
    \mathbb{E}_\kappa [V^*_k] &= \textstyle \sum_{t=k}^{k+N-1} [J(\thickbar u^*_t, \thickbar y^*_t) |_k + J^{\sf var}_{t-\kappa}].
\end{align}
Similarly, the expected cost over ${[\kappa, \kappa^\plus)}$ is
\begin{align} \label{Eq:PROOF:closed_loop_stability:control_horizon_cost_expectation}
    \textstyle \sum_{t=\kappa}^{\kappa^\plus-1} \mathbb{E}_\kappa [J_t (u_t, y_t)] = \sum_{t=\kappa}^{\kappa^\plus-1} [J(\thickbar u^*_t, \thickbar y^*_t) |_\kappa + J^{\sf var}_{t-\kappa}].
\end{align}
Through $[\eqref{Eq:PROOF:closed_loop_stability:predictive_cost_expectation}\;\text{of}\; k=\kappa] - [\eqref{Eq:PROOF:closed_loop_stability:predictive_cost_expectation}\;\text{of}\; k=\kappa^\plus] + \eqref{Eq:PROOF:closed_loop_stability:control_horizon_cost_expectation}$ and eliminating identical terms, we obtain the relation
\begin{align} \label{Eq:PROOF:closed_loop_stability:combined_result}
    & \textstyle \mathbb{E}_\kappa [V^*_{\kappa^\plus} \!-\! V^*_\kappa] = -\! \sum_{t=\kappa}^{\kappa^\plus-1} \mathbb{E}_\kappa [J_t (u_t, y_t)] \!+\! J_0 \!-\! J_1 \!+\! J_2,\!\!\!\!
\end{align}
where we used shortcuts $J_0 := \sum_{t=\kappa^\plus}^{\kappa^\plus+N-1} J(\thickbar u^*_t, \thickbar y^*_t) |_{\kappa^\plus}$, $J_1 := \sum_{t=\kappa^\plus}^{\kappa+N-1} J(\thickbar u^*_t, \thickbar y^*_t) |_\kappa$, and $J_2 := \sum_{t=\kappa+N}^{\kappa^\plus+N-1} J^{\sf var}_{t-\kappa}$.
Consider the feasible solution $(\thickbar u^\diamond, \theta^\diamond, p^\diamond_{i,t}) |_{\kappa^\plus}$ in \eqref{Eq:PROOF:LEMMA:recursive_feasibility:solution} to problem \eqref{Eq:SMPC_reduced}. Given \eqref{Eq:PROOF:LEMMA:recursive_feasibility:solution}, Claim \ref{CLAIM:resulting_nominal_output}, and the definition of $s(t)$, we have
\begin{align} \label{Eq:PROOF:closed_loop_stability:diamond_next}
    & \textstyle \sum_{t=\kappa^\plus}^{\kappa^\plus+N-1} J (\thickbar u^\diamond_t, \thickbar y^\diamond_t) |_{\kappa^\plus} 
    = \textstyle J_1 + N_{\rm c} J^{\sf ter}_\kappa
\end{align}
with $J^{\sf ter}_\kappa := J (\thickbar u^*_{\kappa+N-1}, \thickbar y^*_{\kappa+N-1}) |_\kappa$.
Let $J^{\sf sup}$ be the supremum of $J(\thickbar u_t, \thickbar y_t)$ over all $(\thickbar u_t, \thickbar y_t)$ in the feasible set $\mathcal{R}(\Delta_{N-1})$ defined in \ref{APPENDIX:PROOF:recursive_feasibility}; such $J^{\sf sup}$ is finite since $\mathcal{R}(\Delta_{N-1}) \subseteq \{z : E z \leq f\}$ is bounded. It follows that $J^{\sf ter}_\kappa \leq J^{\sf sup}$ by feasibility.
Moreover, we know by optimality that
\begin{align}
\label{Eq:PROOF:closed_loop_stability:optimality}
    J_0 + \lambda_\theta \, \theta^* |_{\kappa^\plus} \leq \textstyle \sum_{t=\kappa^\plus}^{\kappa^\plus+N-1} J (\thickbar u^\diamond_t, \thickbar y^\diamond_t) |_{\kappa^\plus} + \lambda_\theta \, \theta^\diamond |_{\kappa^\plus}.
\end{align}
Combining \eqref{Eq:PROOF:closed_loop_stability:diamond_next}, \eqref{Eq:PROOF:closed_loop_stability:optimality} and eliminating $J(\thickbar u^\diamond_t, \thickbar y^\diamond_t) |_{\kappa^\plus}$, we have
\begin{align} \label{Eq:PROOF:closed_loop_stability:partial_bound}
    J_0 \!-\! J_1 \leq \lambda_\theta (\theta^\diamond |_{\kappa^\plus} \!-\! \theta^* |_{\kappa^\plus}) + N_{\rm c} J^{\sf ter}_\kappa \leq \lambda_\theta \!+\! N_{\rm c} J^{\sf sup},
\end{align}
where the second inequality used $\theta \in [0,1]$ and $J^{\sf ter}_\kappa \leq J^{\sf sup}$.
Substituting \eqref{Eq:PROOF:closed_loop_stability:partial_bound} into \eqref{Eq:PROOF:closed_loop_stability:combined_result}, we obtain \eqref{Eq:LEMMA:closed_loop_stability:value_difference} with $c := J^{\sf sup} + (\lambda_\theta + J_2) / N_{\rm c}$.
Summing \eqref{Eq:LEMMA:closed_loop_stability:value_difference} over control steps $k = \kappa \in \{0, N_{\rm c}, 2 N_{\rm c}, \ldots, (T_{\rm c}-1) N_{\rm c}\}$ with some $T_{\rm c} \in \natural$ and then dividing it by $T_{\rm c} N_{\rm c}$, we have 
\begin{align*}
    \textstyle{\frac{1}{T_{\rm c} N_{\rm c}}} \mathbb{E}[V^*_{T_{\rm c} N_{\rm c}} - V^*_0] \leq c - \textstyle{\frac{1}{T_{\rm c} N_{\rm c}}} \textstyle{\sum_{t=0}^{T_{\rm c} N_{\rm c}-1}} \mathbb{E}_\kappa [J(u_t, y_t)],
\end{align*}
which implies \eqref{Eq:LEMMA:closed_loop_stability:average_cost} by taking $T := T_{\rm c} N_{\rm c}$ and $T_{\rm c} \to  \infty$.
\end{proof}

}\fi
\section{Proof of Lemma \ref{LEMMA:system_quantity_data_representation}} \label{APPENDIX:PROOF:LEMMA:system_quantity_data_representation}

\begin{proof}
    Let $(x^\data,u^\data,y^{\data})$ be the state-input-output trajectory of \eqref{Eq:LTI_deterministic}, and define $X_1, X_2 \in \real^{n \times h}$ as
\begin{align*}
    X_1 := \big[ x^{\rm d}_1, x^{\rm d}_2, \ldots, x^{\rm d}_h \big], \;\;
    X_2 := \big[ x^{\rm d}_{1+L}, x^{\rm d}_{2+L}, \ldots, x^{\rm d}_{h+L} \big].
\end{align*}
    It follows by straightforward algebra that  data matrices satisfy\!
\begin{subequations} \label{Eq:PROOF:data_representation:1} \begin{align}
    \label{Eq:PROOF:data_representation:1a}
    X_2 &= A^L X_1 + \mathcal{C} U_1, \\
    \label{Eq:PROOF:data_representation:1b}
    Y_1 &= \mathcal{O} X_1 + \mathcal{G} U_1, \\
    \label{Eq:PROOF:data_representation:1c}
    Y_2 &= C X_2 + D U_2.
\end{align} \end{subequations}
    Under our assumptions of controllability and persistent excitation, it follows from \cite[Corollary 2(iii)]{Willems2005} that the matrix $\col(X_1, U_1, U_2)$ has full row rank. Moreover, $\smallmat{I_{mL} \\ \mathcal{G} & \mathcal{O}}$ has full column rank, as it is block lower triangular and its diagonal blocks each has full column rank (Section \ref{SECTION:DDSMPC:offline}).

    First, the matrix $Y_2$ can be represented in terms of $X_1, U_1, U_2$ by combining \eqref{Eq:PROOF:data_representation:1a} and \eqref{Eq:PROOF:data_representation:1c} and eliminating $X_2$, i.e.,
\begin{align}
    \label{Eq:PROOF:data_representation:2a}
    Y_2 = \big[ C \mathcal{C} , C A^L , D \big] \, \col(U_1, X_1, U_2).
\end{align}
    We can also express $\col(U_1, Y_1, U_2)$ in terms of $X_1, U_1, U_2$ as
\begin{align*}
    \col(U_1, Y_1, U_2)
    = \Diag \big( \smallmat{I_{mL} \\ \mathcal{G} & \mathcal{O}}, I_m \big) \, \col(U_1, X_1, U_2)
\end{align*}
    using \eqref{Eq:PROOF:data_representation:1b}.
    As we know that $\Diag \big( \smallmat{I_{mL} \\ \mathcal{G} & \mathcal{O}}, I_m \big)$ has full column rank and $\col(U_1, X_1, U_2)$ has full row rank, the pseudo-inverse of above is \cite{greville1966}
\begin{align*}
    \col(U_1, Y_1, U_2)^\dagger
    = \col(U_1, X_1, U_2)^\dagger \, \Diag \big( \smallmat{I_{mL} \\ \mathcal{G} & \mathcal{O}}, I_m \big)^\dagger.
\end{align*}
    By multiplying \eqref{Eq:PROOF:data_representation:2a} and the relation above, we find the result
\begin{align*}
    & Y_2 \, \col(U_1, Y_1, U_2)^\dagger
    = \big[C \mathcal{C}, C A^L, D \big] \, 
    \Diag \big( \smallmat{I_{mL} \\ \mathcal{G} & \mathcal{O}}, I_m \big)^\dagger \\
    &= \mat{\big[C \mathcal{C}, C A^L \big] \smallmat{I_{mL} \\ \mathcal{G} & \mathcal{O}}^\dagger & D  }
    \overset{\text{via \eqref{Eq:Gamma_definition}}}{=} [\mathbf{\Gamma}_{\sf U}, \mathbf{\Gamma}_{\sf Y}, D ].
\qedhere
\end{align*}
\end{proof}

\section{Proof of Lemma \ref{LEMMA:AuxModel}} \label{APPENDIX:PROOF:DDModel}
\setcounter{proposition}{\getrefnumber{LEMMA:AuxModel}}
\setcounter{claim}{0}
\begin{proof}
    
We start with intermediate results Claim \ref{CLAIM:x_xi_relation} and Claim \ref{CLAIM:c_relation}. Define matrices $\Phi_\original \in \real^{n \times n_\xi}$ and $\Phi_\auxiliary \in \real^{n_\auxiliary \times n_\xi}$
\begin{align*}
    \Phi_\original := [\mathcal{C}, A^L, \mathcal{C}_{\sf w}], \quad
    \Phi_\auxiliary := \compactmat{ I_{mL} \\ \mathcal{G} & \mathcal{O} & \mathcal{G}_{\sf w} \\ & & I_L \otimes \mathcal{O} }
\end{align*}
    with matrix $\mathcal{O}$ in Section \ref{SECTION:problem}, matrices $\mathcal{C}, \mathcal{G}$ in Section \ref{SECTION:DDSMPC:offline} and $\mathcal{C}_{\sf w} := [A^{L-1}, \ldots, A, I_n]$ and $\mathcal{G}_{\sf w} := \smallmat{ 0_{p\times n} \\ C & 0_{p\times n} \\[-.3em] \scalebox{0.6}{\bf$\vdots$} & \scalebox{0.6}{\bf$\ddots$} & \scalebox{0.6}{\bf$\ddots$} \\ C A^{L-2} & \cdots & C & 0_{p\times n} }$.
\begin{claim} \label{CLAIM:x_xi_relation}
    For system \eqref{Eq:LTI} and the auxiliary state $\mathbf{x}_t$ in \eqref{Eq:DDModel_state_definition}, we have $x_t = \Phi_\original \,\xi_t$ and $\mathbf{x}_t = \Phi_\auxiliary \,\xi_t$, where we let $\xi_t := \col(u_{[t-L,t)}, x_{t-L}, w_{[t-L,t)}) \in \real^{n_\xi}$ with $n_\xi := mL+n(L+1)$.
\end{claim}
\begin{proof}
\renewcommand\qedsymbol{$\blacklozenge$}
    Given the system model \eqref{Eq:LTI}, the state $x_t$ and noise-free outputs $y^\circ_{[t-L,t)}$ can be expressed in terms of the previous state $x_{t-L}$, inputs $u_{[t-L,t)}$ and disturbances $w_{[t-L,t)}$ via
\begin{subequations} \begin{align}
\label{Eq:PROOF:DDModel:1a}
    x_t &= A^L \, x_{t-L} + \mathcal{C} \, u_{[t-L,t)} + \mathcal{C}_{\sf w} \, w_{[t-L,t)}, \\
\label{Eq:PROOF:DDModel:1b}
    y^\circ_{[t-L,t)} &= \mathcal{O} \, x_{t-L} + \mathcal{G} \, u_{[t-L,t)} + \mathcal{G}_{\sf w} \, w_{[t-L,t)}.
\end{align} \end{subequations}
    Thus, we have $x_t = \Phi_\original \,\xi_t$ given \eqref{Eq:PROOF:DDModel:1a} and the definitions of $\xi_t$ and $\Phi_\original$.
    Given the definition of $\mathbf{x}_t$ in \eqref{Eq:DDModel_state_definition} with $\rho_t := \mathcal{O} w_t$, we have $\mathbf{x}_t = \Phi_\auxiliary \,\xi_t$ implied by \eqref{Eq:PROOF:DDModel:1b}.
\end{proof}

\begin{claim} \label{CLAIM:c_relation}
    For system \eqref{Eq:LTI} and the auxiliary state $\mathbf{x}_t$ in \eqref{Eq:DDModel_state_definition}, we have $C x_t = \mathbf{C} \mathbf{x}_t$. Moreover, $C \Phi_\original = \mathbf{C} \Phi_\auxiliary$.
\end{claim}
\begin{proof}
\renewcommand\qedsymbol{$\blacklozenge$}
    With Claim \ref{CLAIM:x_xi_relation}, it suffices to show $C \Phi_\original = \mathbf{C} \Phi_\auxiliary$.
    Given the definitions of $\Phi_\original, \Phi_\auxiliary, \mathbf{C}$, we compute $C \Phi_\original$ as
\begin{align*}
    C \Phi_\original = [C \mathcal{C}, C A^L, C \mathcal{C}_{\sf w}]
\end{align*}
    and calculate $\mathbf{C} \Phi_\auxiliary$ as
\begin{align*}
    \mathbf{C} \Phi_\auxiliary &= [\mathbf{\Gamma}_{\sf U} + \mathbf{\Gamma}_{\sf Y} \mathcal{G},\, \mathbf{\Gamma}_{\sf Y} \mathcal{O},\, \mathbf{\Gamma}_{\sf Y} \mathcal{G}_{\sf w} + (\mathbf{F} - \mathbf{\Gamma}_{\sf Y} \mathbf{E}) (I_L \otimes \mathcal{O})] \\
    &= [\mathbf{\Gamma}_{\sf U} + \mathbf{\Gamma}_{\sf Y} \mathcal{G},\, \mathbf{\Gamma}_{\sf Y} \mathcal{O},\, C \mathcal{C}_{\sf w}]
    = [C \mathcal{C}, C A^L, C \mathcal{C}_{\sf w}],
\end{align*}
    where the second equality used the facts that $C \mathcal{C}_{\sf w} = \mathbf{F} (I_L \otimes \mathcal{O})$ and $\mathcal{G}_{\sf w} = \mathbf{E} (I_L \otimes \mathcal{O})$ which can be verified from the definitions of $\mathbf{E}, \mathbf{F}, \mathcal{C}_{\sf w}, \mathcal{G}_{\sf w}$, and the last equality above used the relation
\begin{align*}
    [\mathbf{\Gamma}_{\sf U} + \mathbf{\Gamma}_{\sf Y} \mathcal{G},\, \mathbf{\Gamma}_{\sf Y} \mathcal{O}]
    = [\mathbf{\Gamma}_{\sf U},\, \mathbf{\Gamma}_{\sf Y}] \smallmat{I_{mL} \\ \mathcal{G} & \mathcal{O}}
    = [C \mathcal{C}, C A^L]
\end{align*}
    where the last equality is due to the definition $[\mathbf{\Gamma}_{\sf U},\, \mathbf{\Gamma}_{\sf Y}] := [C \mathcal{C}, C A^L] \smallmat{I_{mL} \\ \mathcal{G} & \mathcal{O}}^\dagger$ where $\smallmat{I_{mL} \\ \mathcal{G} & \mathcal{O}}$ has full column rank.
    Comparing the above results of calculation, we have $C \Phi_\original = \mathbf{C} \Phi_\auxiliary$, and thus the result follows from Claim \ref{CLAIM:x_xi_relation}.
\end{proof}

    We directly have \eqref{Eq:DDModel:output} from Claim \ref{CLAIM:c_relation} and \eqref{Eq:LTI:output}.
    To show \eqref{Eq:DDModel:state}, we know by substitution that $\mathbf{A} \mathbf{x}_t + \mathbf{B} u_t + \mathbf{w}_t$ equals
\begin{align*}
    \col \Big( \compactmat{u_{[t-L+1,t)} \\ u_t}, \compactmat{y^\circ_{[t-L+1,t)} \\ \mathbf{C} \mathbf{x}_t + D u_t}, \compactmat{\rho_{[t-L+1,t)} \\ \rho_t} \Big),
\end{align*}
    given the definitions of $\mathbf{x}_t, \mathbf{w}_t, \mathbf{A}, \mathbf{B}$ with matrix $\mathbf{A}$ consisting of upper-shift matrices and the matrix $\mathbf{C}$.
    The above is $\mathbf{x}_{t+1}$ by definition, given the fact $\mathbf{C} \mathbf{x}_t + D u_t = y_t - v_t = y^\circ_t$ via \eqref{Eq:DDModel:output}, and thereby \eqref{Eq:DDModel:state} is obtained.
\end{proof}

\section{Proof of Lemma \ref{LEMMA:AuxModel_Stabilizability_Detectability}} \label{APPENDIX:PROOF:LEMMA:AuxModel_Stabilizability_Detectability}
\setcounter{proposition}{\getrefnumber{LEMMA:AuxModel_Stabilizability_Detectability}}
\setcounter{claim}{0}

\begin{proof}

The pair $(\mathbf{A}, \mathbf{C})$ is detectable by definition since there exists a matrix $\mathbf{L}^* := \col(0_{mL\times p}, 0_{p(L-1)\times p}, I_p, 0_{pL^2 \times p})$ such that $\mathbf{A} - \mathbf{L}^* \mathbf{C}$ equal to $\Diag(\mathcal{D}_m, \mathcal{D}_p, \mathcal{D}_{pL})$ is Schur stable, where $\mathcal{D}_q := \smallmat{& I_{q(L-1)} \\[-.25em] 0_{q \times q}} \in \real^{qL \times qL}$.

We show stabilizability of $(\mathbf{A}, \mathbf{B})$ and $(\mathbf{A}, \mathbf{\Sigma}^{\sf w})$ by establishing stabilizing gains.
Recall matrices $\Phi_\auxiliary \in \real^{n_\auxiliary \times n_\xi}$ and $\Phi_\original \in \real^{n \times n_\xi}$ defined in \ref{APPENDIX:PROOF:DDModel}, with $n_\auxiliary := mL + pL + pL^2$ and $n_\xi := mL + n + nL$. 
Define matrix $\Phi := [\Phi_{\sf U}, \Phi_{\sf Y}, \Phi_{\sf P}] \in \real^{n \times n_\auxiliary}$ whose sub-blocks are defined as
\begin{align*}
    & [\Phi_{\sf U}, \Phi_{\sf Y}] := [\mathcal{C}, A^L] \smallmat{I_{mL} \\ \mathcal{G} & \mathcal{O}}^\dagger \in \real^{n \times (mL + pL)}, \\
    & \Phi_{\sf P} := (\mathcal{C}_{\sf w} - \Phi_{\sf Y} \mathcal{G}_{\sf w}) (I_L \otimes \mathcal{O})^\dagger \in \real^{n \times pL^2}.
\end{align*}
We start with some basic results Claim \ref{CLAIM:Phi_relation} and Claim \ref{CLAIM:AuxModel_TildeModel_relation}.

\begin{claim} \label{CLAIM:Phi_relation} 
    $\Phi_\original = \Phi \Phi_\auxiliary$. 
\end{claim}
\begin{proof}
\renewcommand\qedsymbol{$\blacklozenge$}
    Given the definitions of $\Phi$ and $\Phi_\auxiliary$, with both $\smallmat{I_{mL} \\ \mathcal{G} & \mathcal{O}}$ and $I_L \otimes \mathcal{O}$ having full column rank, the product $\Phi \Phi_\auxiliary$ is calculated as $[\mathcal{C}, A^L, \mathcal{C}_{\sf w}]$, equal to $\Phi_\original$ by definition.
\end{proof}

\begin{claim}
\label{CLAIM:AuxModel_TildeModel_relation}
For matrices $\mathbf{A},\!\mathbf{B}$ in \eqref{Eq:DDModel} and $\mathbf{\Sigma}^{\sf w}\!$ in \eqref{Eq:DDModel_noise_variance}, we have
\begin{align*}
    \mathbf{A} \Phi_\auxiliary = \Phi_\auxiliary \widetilde{A}, \quad
    \mathbf{B} = \Phi_\auxiliary \widetilde{B}, \quad
    \mathbf{\Sigma}^{\sf w} = \Phi_\auxiliary \widetilde{\Sigma}^{\sf w} \Phi_\auxiliary^\transpose,
\end{align*}
with matrices $\widetilde{A} \in \real^{n_\xi \times n_\xi}, \widetilde{B} \in \real^{n_\xi \times m}, \widetilde{\Sigma}^{\sf w} \in \symmetric^{n_\xi}_+$ defined as
\begin{align} \label{Eq:TildeMatrices_Definition} \begin{aligned}
    \widetilde{A} &:=
    \left[ \compact{\begin{array}{cc|c|cc}
        & I_{m(L-1)} &&& \\
        0_{m \times m} &&&& \\ \hline
        B & 0_{n\times m(L-1)} & A & I_n & 0_{n\times n(L-1)} \\ \hline
        &&&& I_{n(L-1)} \\
        &&& 0_{n \times n} &
    \end{array}} \right],
    \\
    \widetilde{B} &:= 
    \left[ \compact{\begin{array}{c}
        \begin{matrix} 0_{m(L-1) \times m} \\ I_m \end{matrix} \\ \hline 
        0_{n\times m} \\ \hline 
        0_{nL\times m}
    \end{array}} \right]
    , \quad
    \widetilde{\Sigma}^{\sf w} := \compactmat{0_{(n_\xi-n) \times (n_\xi-n)} \\ & \Sigma^{\sf w}}.
\end{aligned}\!\! \end{align}
\end{claim}
\begin{proof}
\renewcommand\qedsymbol{$\blacklozenge$}
    $\mathbf{B} = \Phi_\auxiliary \widetilde{B}$ and $\mathbf{\Sigma}^{\sf w} = \Phi_\auxiliary \widetilde{\Sigma}^{\sf w} \Phi_\auxiliary^\transpose$ follow directly by expressing $\Phi_\auxiliary, \mathbf{B}, \mathbf{\Sigma}^{\sf w}, \widetilde{B}, \widetilde{\Sigma}^{\sf w}$ and underlying $\mathcal{G}, \mathcal{O}, \mathcal{G}_{\sf w}$ in terms of $A, B, C, D$ by definition.
    To show $\mathbf{A} \Phi_\auxiliary = \Phi_\auxiliary \widetilde{A}$, we first replace a subblock $\mathbf{C} \Phi_\auxiliary$ of $\mathbf{A} \Phi_\auxiliary$ (since $\mathbf{C}$ is a subblock of $\mathbf{A}$) using the relation $\mathbf{C} \Phi_\auxiliary = C \Phi_\original$ (shown in the proof of Claim \ref{CLAIM:c_relation}), and then similarly express all the matrices in terms of $A, B, C, D$ by definition.
\end{proof}

Define the following matrices, %$\mathbf{K}^* \in \real^{m \times n_\auxiliary}$, $\mathbf{K}^{\sf w} \in \real^{n_\auxiliary \times n_\auxiliary}$, $\widetilde{K}^* \in \real^{m \times n_\xi}$ and $\widetilde{K}^{\sf w} \in \real^{n_\xi \times n_\xi}$,
\begin{subequations} \label{Eq:PROOF:AuxModel_Stabilizability_Detectability:gains} \begin{align}
\label{Eq:PROOF:AuxModel_Stabilizability_Detectability:bold_gains}
    & \mathbf{K}^* := K \Phi,
    && \mathbf{K}^{\sf w} := \Phi_\auxiliary^{\dagger \transpose} \, \col(0_{(n_\xi-n) \times n}, K^{\sf w}) \, \Phi \\
\label{Eq:PROOF:AuxModel_Stabilizability_Detectability:tilde_gains}
    & \widetilde{K}^* := K \Phi_\original, 
    && \widetilde{K}^{\sf w} := \col( 0_{(n_\xi-n) \times n}, K^{\sf w}) \, \Phi_\original
\end{align} \end{subequations}
where $K$ is the feedback gain from \eqref{Eq:feedback_gain} and $K^{\sf w} \in \real^{n \times n}$ is an arbitrary matrix such that $A - \Sigma^{\sf w} K^{\sf w}$ is Schur stable.
It follows from Claim \ref{CLAIM:AuxModel_TildeModel_relation} and the definitions \eqref{Eq:PROOF:AuxModel_Stabilizability_Detectability:gains} that
\begin{subequations} \label{Eq:PROOF:AuxModel_Stabilizability_Detectability:partial_relations} \begin{align} \label{Eq:PROOF:AuxModel_Stabilizability_Detectability:partial_relations:K}
    (\mathbf{A} - \mathbf{B} \mathbf{K}^*) \Phi_\auxiliary &= \Phi_\auxiliary (\widetilde{A} - \widetilde{B} \widetilde{K}^*), \\
\label{Eq:PROOF:AuxModel_Stabilizability_Detectability:partial_relations:Kw}
    (\mathbf{A} - \mathbf{\Sigma}^{\sf w} \mathbf{K}^{\sf w}) \Phi_\auxiliary &= \Phi_\auxiliary (\widetilde{A} - \widetilde{\Sigma}^{\sf w} \widetilde{K}^{\sf w}),
\end{align} \end{subequations}
given Claim \ref{CLAIM:Phi_relation} and $\Phi_\auxiliary^\transpose \Phi_\auxiliary^{\dagger \transpose} = I_{n_\xi}$ for $\Phi_\auxiliary$ of full column rank.
We then claim several intermediate results Claim \ref{CLAIM:PROOF:AuxModel_Stabilizability_Detectability:tilde_stable}, Claim \ref{CLAIM:PROOF:AuxModel_Stabilizability_Detectability:bold_K_stable}, and Claim \ref{CLAIM:PROOF:AuxModel_Stabilizability_Detectability:bold_Kw_stable}.

\begin{claim}
\label{CLAIM:PROOF:AuxModel_Stabilizability_Detectability:tilde_stable} 
    For matrices $\widetilde{A}, \widetilde{B}, \widetilde{\Sigma}^{\sf w}$ in \eqref{Eq:TildeMatrices_Definition} and $\widetilde{K}^*, \widetilde{K}^{\sf w}$ in \eqref{Eq:PROOF:AuxModel_Stabilizability_Detectability:tilde_gains},
    both $\widetilde{A} - \widetilde{B} \widetilde{K}^*$ and $\widetilde{A} - \widetilde{\Sigma}^{\sf w} \widetilde{K}^{\sf w}$ are Schur stable.
\end{claim}
\begin{proof}
\renewcommand\qedsymbol{$\blacklozenge$}
Define $\xi_t := \col(u_{[t-L,t)}, x_{t-L}, w_{[t-L,t)}) \in \real^{n_\xi}$ and $\delta_t := \col(0_{(n_\xi - n) \times 1}, w_t) \in \real^{n_\xi}$. We have the relation
\begin{align}
\label{Eq:PROOF:AuxModel_Stabilizability_Detectability:tilde_stable:tilde_system} 
    \xi_{t+1} &= \widetilde{A} \xi_t + \widetilde{B} u_t + \delta_t
\end{align}
which can be verified given the system model \eqref{Eq:LTI:state} and the definition of $\widetilde{A}, \widetilde{B}$ in Claim \ref{CLAIM:AuxModel_TildeModel_relation}.

To show that $\widetilde{A} - \widetilde{B} \widetilde{K}^*$ is stable, consider the following process of system \eqref{Eq:LTI:state} starting at time $t = -L$: the initial state $x_{-L}$, the inputs $u_{[-L,0)}$ and the noises $w_{[-L,0)}$ are arbitrarily chosen (i.e., $\xi_0$ is arbitrary), the noise is $w_t = 0$ for $t \geq 0$, and the inputs $u_t$ for $t \geq 0$ are generated by state feedback $u_t = - K x_t$.
With this process, we have $x_{t+1} = (A-BK) x_t$ for $t \geq 0$, and hence $x_t \to 0$ as $t \to \infty$ because $A-BK$ is Schur stable. We therefore have $u_t, w_t \to 0$ and thus $\xi_t \to 0$ as $t \to \infty$, given the relations $u_t = - K x_t$ and $w_t = 0$ for $t \geq 0$ and the definition of $\xi_t$.
On the other hand, with the process, we have $\delta_t = 0$ since $w_t = 0$ for $t \geq 0$, and the state feedback $u_t = - K x_t$ can be written as $u_t = - K \Phi_\original \xi_t$ given the relation $x_t = \Phi_\original \xi_t$ from Claim \ref{CLAIM:x_xi_relation}, so we have $u_t = - \widetilde{K}^* \xi_t$ with $\widetilde{K}^*$ defined in \eqref{Eq:PROOF:AuxModel_Stabilizability_Detectability:tilde_gains}.
Therefore, the evolution \eqref{Eq:PROOF:AuxModel_Stabilizability_Detectability:tilde_stable:tilde_system} is reduced as $\xi_{t+1} = (\widetilde{A} - \widetilde{B} \widetilde{K}^*) \xi_t$ for $t \geq 0$, which implies that $\xi_t = (\widetilde{A} - \widetilde{B} \widetilde{K}^*)^t \xi_0$ for $t \geq 0$.
Since $\xi_t \to 0$ as $t \to \infty$ and $\xi_0$ is arbitrarily chosen, we conclude that $(\widetilde{A} - \widetilde{B} \widetilde{K}^*)^t \to 0$ as $t \to \infty$, i.e., $\widetilde{A} - \widetilde{B} \widetilde{K}^*$ is Schur stable.

To show that $\widetilde{A} - \widetilde{\Sigma}^{\sf w} \widetilde{K}^{\sf w}$ is stable, consider a similar process of system \eqref{Eq:LTI:state} from initial time $t = -L$: the initial state $x_{-L}$, the inputs $u_{[-L,0)}$ and the noises $w_{[-L,0)}$ are arbitrarily chosen (i.e., $\xi_0$ is arbitrary), the input is $u_t = 0$ for $t \geq 0$, and the disturbances $w_t$ for $t \geq 0$ are realized as $w_t = - \Sigma^{\sf w} K^{\sf w} x_t$. 
With the process, we have $x_{t+1} = (A - \Sigma^{\sf w} K^{\sf w}) x_t$ for $t \geq 0$, and hence $x_t \to 0$ as $t \to \infty$ because $A - \Sigma^{\sf w} K^{\sf w}$ is Schur stable. We therefore have $u_t, w_t \to 0$ and thus $\xi_t \to 0$ as $t \to \infty$, given the relations $u_t = 0$ and $w_t = - \Sigma^{\sf w} K^{\sf w} x_t$ for $t \geq 0$ and the definition of $\xi_t$.
On the other hand, with the process, we have $\delta_t = - \widetilde{\Sigma}^{\sf w} \widetilde{K}^{\sf w} \xi_t$ for $t \geq 0$, given the definition of $\delta_t$, the choice of noise $w_t = - \Sigma^{\sf w} K^{\sf w} x_t$, the relation $x_t = \Phi_\original \xi_t$ from Claim \ref{CLAIM:x_xi_relation} and the definitions of $\widetilde{K}^{\sf w}$ in \eqref{Eq:PROOF:AuxModel_Stabilizability_Detectability:tilde_gains} and $\widetilde{\Sigma}^{\sf w}$ in Claim \ref{CLAIM:AuxModel_TildeModel_relation}. 
Therefore, the evolution \eqref{Eq:PROOF:AuxModel_Stabilizability_Detectability:tilde_stable:tilde_system} is reduced as $\xi_{t+1} = (\widetilde{A} - \widetilde{\Sigma}^{\sf w} \widetilde{K}^{\sf w}) \xi_t$ for $t \geq 0$, which implies that $\xi_t = (\widetilde{A} - \widetilde{\Sigma}^{\sf w} \widetilde{K}^{\sf w})^t \xi_0$ for $t \geq 0$.
Since $\xi_t \to 0$ as $t \to \infty$ and $\xi_0$ is arbitrarily chosen, we conclude that $(\widetilde{A} - \widetilde{\Sigma}^{\sf w} \widetilde{K}^{\sf w})^t \to 0$ as $t \to \infty$, i.e., $\widetilde{A} - \widetilde{\Sigma}^{\sf w} \widetilde{K}^{\sf w}$ is Schur stable.
\end{proof}

\begin{claim} \label{CLAIM:PROOF:AuxModel_Stabilizability_Detectability:bold_K_stable} 
    For matrices $\mathbf{A}, \mathbf{B}$ in \eqref{Eq:DDModel} and $\mathbf{K}^*$ in \eqref{Eq:PROOF:AuxModel_Stabilizability_Detectability:bold_gains},
    if 
\begin{align} \label{Eq:PROOF:AuxModel_Stabilizability_Detectability:bold_K_stable_condition} 
    (\mathbf{A} - \mathbf{B} \mathbf{K}^*)^t \Phi_\auxiliary \to 0 \quad\text{as}\quad t \to \infty,
\end{align}
    then $\mathbf{A} - \mathbf{B} \mathbf{K}^*$ is Schur stable.
\end{claim}
\begin{proof}
\renewcommand\qedsymbol{$\blacklozenge$}
We calculate $\mathbf{A} - \mathbf{B} \mathbf{K}^*$ as
\begin{align*}
    \begin{aligned}[b] & \quad\, \overbrace{\qquad\qquad\qquad\qquad\qquad\qquad\quad}^{\compact{=: \mathcal{A}}} \\[-.6em] 
    & \left[ \compact{\begin{array}{c|c} \left[ \begin{array}{c|c}
        \begin{matrix} \begin{matrix} 0 & I_{m(L-1)} \end{matrix} \\ \hline - K \Phi_{\sf U} \end{matrix} & \begin{matrix} 0 \\ - K \Phi_{\sf Y} \end{matrix} \\ \hline
        \begin{matrix} 0 \\ (C - DK) \Phi_{\sf U} \end{matrix}
        & \begin{matrix} \begin{matrix} 0 & I_{p(L-1)} \end{matrix} \\ \hline (C - DK) \Phi_{\sf Y} \end{matrix}
    \end{array} \right]
    & \begin{matrix} 0 \\ - K \Phi_{\sf P} \\ 0 \\ \mathbf{F} - \mathbf{\Gamma}_{\sf Y} \mathbf{E} \end{matrix} \\ \hline
    0 & \begin{smallmatrix} & I_{pL(L-1)} \\ 0_{pL \times pL} \end{smallmatrix}
    \end{array}} \right], \end{aligned}
\end{align*}
which is Schur stable if, and only if, its sub-matrix $\mathcal{A}$ is Schur stable.
Moreover, since both $\mathbf{A} - \mathbf{B} \mathbf{K}^* = \smallmat{\mathcal{A} & * \\ 0 & *}$ and $\Phi_\auxiliary = \smallmat{\mathcal{S} & * \\ 0 & *}$ are upper block-triangular, $(\mathbf{A} - \mathbf{B} \mathbf{K}^*)^t \, \Phi_\auxiliary = \smallmat{\mathcal{A}^t \mathcal{S} & * \\ 0 & *}$ is also upper block-triangular.
Since $(\mathbf{A} - \mathbf{B} \mathbf{K}^*)^t \Phi_\auxiliary \to 0$ as $t \to \infty$ via \eqref{Eq:PROOF:AuxModel_Stabilizability_Detectability:bold_K_stable_condition}, its sub-matrix yields $\mathcal{A}^t \mathcal{S} \to 0$ as $t \to \infty$.

Let $\mathcal{L} := \lim_{t \to \infty} \mathcal{A}^t$ denote the limiting value.
Given the definition $[\Phi_{\sf U}, \Phi_{\sf Y}] := [\mathcal{C}, A^L] \mathcal{S}^\dagger$ where $\mathcal{S}$ denotes $\smallmat{I_{mL} \\ \mathcal{G} & \mathcal{O}}$, $\mathcal{A}$ can be written as $\mathcal{A} = \mathcal{D} + \mathcal{E} \mathcal{S}^\dagger$ where
\begin{align*}
    \mathcal{D} &:= \Diag \big( \smallmat{& I_{m(L-1)} \\ 0_{m\times m}}, \smallmat{& I_{p(L-1)} \\ 0_{p\times p}} \big), \\
    \mathcal{E} &:= \col(0_{m(L-1) \times n}, -K, 0_{p(L-1) \times n}, C-DK) \, [\mathcal{C}, A^L].
\end{align*}
Define $\mathcal{P} := I - \mathcal{S} \mathcal{S}^\dagger$ as a projection matrix. With the fact $\mathcal{S}^\dagger \mathcal{P} = \mathcal{S}^\dagger (I - \mathcal{S} \mathcal{S}^\dagger) = 0$, it follows that 
\begin{align*}
    \mathcal{A} \mathcal{S} \mathcal{S}^\dagger
    = \mathcal{A} - \mathcal{A} \mathcal{P}
    = \mathcal{A} - (\mathcal{D} + \mathcal{E} \mathcal{S}^\dagger) \mathcal{P} = \mathcal{A} - \mathcal{D} \mathcal{P}.
\end{align*}
Left-multiplying the above by $\mathcal{A}^{t-1}$ and taking the limit as $t\to\infty$, we find that
\begin{align*}
    \lim_{t \to \infty} \mathcal{A}^t \mathcal{S} \mathcal{S}^\dagger
    = \underbrace{\lim_{t \to \infty} \mathcal{A}^t}_{=\mathcal{L}} - \underbrace{\lim_{t \to \infty} \mathcal{A}^{t-1}}_{=\mathcal{L}} \mathcal{D} \mathcal{P}
\end{align*}
Since $\mathcal{A}^t \mathcal{S} \to 0$ as $t \to \infty$, the left-hand side of above is zero, so the above further reduces to $0 = \mathcal{L} (I - \mathcal{D} \mathcal{P})$.
Therefore, to show $\mathcal{L} = 0$, it suffices to show that $I - \mathcal{D} \mathcal{P}$ is non-singular.
Suppose a vector $z$ in $\mathrm{Null}(I - \mathcal{D} \mathcal{P})$.
If $z \notin \mathrm{Range}(\mathcal{P})$, then $\Vert \mathcal{P} z \Vert_2 < \Vert z \Vert_2$ for a projection matrix $\mathcal{P}$, and then we have
\begin{align*}
    \Vert z \Vert_2 = \Vert \mathcal{D} \mathcal{P} z \Vert_2 
    \leq \underbrace{\Vert \mathcal{D} \Vert_2}_{=1} \underbrace{\Vert \mathcal{P} z \Vert_2}_{< \Vert z \Vert_2}
    < \Vert z \Vert_2,
\end{align*}
which is a contradiction.
Hence, we know that $z \in \mathrm{Range}(\mathcal{P})$, which implies that $\mathcal{P} z = z$ because $\mathcal{P}$ is projection. Combining $z = \mathcal{D} \mathcal{P} z$ and $\mathcal{P} z = z$, we have $(I - \mathcal{D}) z = 0$, which implies $z = 0$ since $I - \mathcal{D}$ is non-singular. Therefore, we conclude that $\mathrm{Null}(I - \mathcal{D} \mathcal{P}) = \{0\}$ and $I - \mathcal{D} \mathcal{P}$ is non-singular, so we have $\mathcal{L} = 0$, which implies that $\mathcal{A}$ is Schur stable. Thus, $\mathbf{A} - \mathbf{B} \mathbf{K}^*$ is Schur stable.
\end{proof}
 
\begin{claim} \label{CLAIM:PROOF:AuxModel_Stabilizability_Detectability:bold_Kw_stable} 
    For matrices $\mathbf{A}, \mathbf{B}$ in \eqref{Eq:DDModel}, $\mathbf{\Sigma}^{\sf w}$ in \eqref{Eq:DDModel_noise_variance} and $\mathbf{K}^*, \mathbf{K}^{\sf w}$ in \eqref{Eq:PROOF:AuxModel_Stabilizability_Detectability:bold_gains},
    $\mathbf{A} - \mathbf{B} \mathbf{K}^*$ is Schur stable if, and only if, $\mathbf{A} - \mathbf{\Sigma}^{\sf w} \mathbf{K}^{\sf w}$ is Schur stable.
\end{claim}
\begin{proof}
\renewcommand\qedsymbol{$\blacklozenge$}
Since $\Phi_\auxiliary \in \real^{n_\auxiliary \times n_\xi}$ by definition has full column rank, there exists a matrix $\Phi_\orthogonal \in \real^{n_\auxiliary \times (n_\auxiliary - n_\xi)}$ such that $\mathrm{Range}(\Phi_\orthogonal) = \mathrm{Null}(\Phi_\auxiliary^\transpose)$; it follows that
\begin{align} \label{Eq:PROOF:AuxModel_Stabilizability_Detectability:Phi_orthogonal_property}
    \Phi_\auxiliary \Phi_\auxiliary^\dagger + \Phi_\orthogonal \Phi_\orthogonal^\dagger = I_{n_\auxiliary}.
\end{align}
Define matrices $\mathcal{S}^*, \mathcal{S}^{\sf w}, \mathcal{R}^*, \mathcal{R}^{\sf w}$,
\begin{align*}
    \mathcal{S}^* &:= \Phi_\orthogonal^\dagger (\mathbf{A} \!-\! \mathbf{B} \mathbf{K}^*) \Phi_\orthogonal, \!\!\!&
    \mathcal{S}^{\sf w} &:= \Phi_\orthogonal^\dagger (\mathbf{A} \!-\! \mathbf{\Sigma}^{\sf w} \mathbf{K}^{\sf w}) \Phi_\orthogonal \\
    \mathcal{R}^* &:= \Phi_\auxiliary^\dagger (\mathbf{A} \!-\! \mathbf{B} \mathbf{K}^*) \Phi_\orthogonal, \!\!\!&
    \mathcal{R}^{\sf w} &:= \Phi_\auxiliary^\dagger (\mathbf{A} \!-\! \mathbf{\Sigma}^{\sf w} \mathbf{K}^{\sf w}) \Phi_\orthogonal
\end{align*}
and it follows from \eqref{Eq:PROOF:AuxModel_Stabilizability_Detectability:Phi_orthogonal_property} that
\begin{align} \label{CLAIM:PROOF:AuxModel_Stabilizability_Detectability:residual_relations} \begin{aligned}
    (\mathbf{A} - \mathbf{B} \mathbf{K}^*) \Phi_\orthogonal &= \Phi_\auxiliary \mathcal{R}^* + \Phi_\orthogonal \mathcal{S}^*, \\
    (\mathbf{A} - \mathbf{\Sigma}^{\sf w} \mathbf{K}^{\sf w}) \Phi_\orthogonal &= \Phi_\auxiliary \mathcal{R}^{\sf w} + \Phi_\orthogonal \mathcal{S}^{\sf w}.
\end{aligned} \end{align}
We moreover notice that $\mathcal{S}^* = \mathcal{S}^{\sf w} = \Phi_\orthogonal^\dagger \mathbf{A} \Phi_\orthogonal$ given the definitions of $\mathcal{S}^*, \mathcal{S}^{\sf w}$ and the facts $\Phi_\orthogonal^\dagger \mathbf{B} = 0$ and $\Phi_\orthogonal^\dagger \mathbf{\Sigma}^{\sf w} = 0$ which follow from the fact $\Phi_\orthogonal^\dagger \Phi_\auxiliary = 0$ via \eqref{Eq:PROOF:AuxModel_Stabilizability_Detectability:Phi_orthogonal_property} and the relations
$\mathbf{B} = \Phi_\auxiliary \widetilde{B}$ and $\mathbf{\Sigma}^{\sf w} = \Phi_\auxiliary \widetilde{\Sigma}^{\sf w} \Phi_\auxiliary^\transpose$ from Claim \ref{CLAIM:AuxModel_TildeModel_relation}.

Define $\Phi_{\sf full} := [\Phi_\auxiliary, \Phi_\orthogonal] \in \real^{n_\auxiliary \times n_\auxiliary}$ which is non-singular given \eqref{Eq:PROOF:AuxModel_Stabilizability_Detectability:Phi_orthogonal_property};
the horizontal stack of \eqref{Eq:PROOF:AuxModel_Stabilizability_Detectability:partial_relations} and \eqref{CLAIM:PROOF:AuxModel_Stabilizability_Detectability:residual_relations} yields
\begin{align} \label{CLAIM:PROOF:AuxModel_Stabilizability_Detectability:similarity} \begin{aligned}
    (\mathbf{A} - \mathbf{B} \mathbf{K}^*) \Phi_{\sf full} &= \Phi_{\sf full} \, \compactmat{\widetilde{A} - \widetilde{B} \widetilde{K}^* & \mathcal{R}^* \\ 0 & \mathcal{S}^*}, \\
    (\mathbf{A} - \mathbf{\Sigma}^{\sf w} \mathbf{K}^{\sf w}) \Phi_{\sf full} &= \Phi_{\sf full} \, \compactmat{\widetilde{A} - \widetilde{\Sigma}^{\sf w} \widetilde{K}^{\sf w} & \mathcal{R}^{\sf w} \\ 0 & \mathcal{S}^{\sf w}}.
\end{aligned} \end{align}
Since $\widetilde{A} - \widetilde{B} \widetilde{K}^*$ and $\widetilde{A} - \widetilde{\Sigma}^{\sf w} \widetilde{K}^{\sf w}$ are Schur stable through Claim \ref{CLAIM:PROOF:AuxModel_Stabilizability_Detectability:tilde_stable}, the matrix similarity relations \eqref{CLAIM:PROOF:AuxModel_Stabilizability_Detectability:similarity} imply that $\mathbf{A} - \mathbf{B} \mathbf{K}^*$ (resp. $\mathbf{A} - \mathbf{\Sigma}^{\sf w} \mathbf{K}^{\sf w}$) is Schur stable if, and only if, $\mathcal{S}^*$ (resp. $\mathcal{S}^{\sf w}$) is Schur stable.
Hence, the result follows from the fact $\mathcal{S}^* = \mathcal{S}^{\sf w}$.
\end{proof}

By applying \eqref{Eq:PROOF:AuxModel_Stabilizability_Detectability:partial_relations:K} repeatedly, we have $(\mathbf{A} - \mathbf{B} \mathbf{K}^*)^t \Phi_\auxiliary = \Phi_\auxiliary (\widetilde{A} - \widetilde{B} \widetilde{K}^*)^t$ for all $t \in \natural$.
Combining this relation with the fact $(\widetilde{A} - \widetilde{B} \widetilde{K}^*)^t \to 0$ as $t \to \infty$ via Schur stability in Claim \ref{CLAIM:PROOF:AuxModel_Stabilizability_Detectability:tilde_stable}, we have \eqref{Eq:PROOF:AuxModel_Stabilizability_Detectability:bold_K_stable_condition}, which implies Schur stability of $\mathbf{A} - \mathbf{B} \mathbf{K}^*$ through Claim \ref{CLAIM:PROOF:AuxModel_Stabilizability_Detectability:bold_K_stable}. 
Given Claim \ref{CLAIM:PROOF:AuxModel_Stabilizability_Detectability:bold_Kw_stable}, both $\mathbf{A} - \mathbf{B} \mathbf{K}^*$ and $\mathbf{A} - \mathbf{\Sigma}^{\sf w} \mathbf{K}^{\sf w}$ are Schur stable, which indicates that both pairs $(\mathbf{A}, \mathbf{B})$ and $(\mathbf{A}, \mathbf{\Sigma}^{\sf w})$ are stabilizable.
\end{proof}

\section{Proof of Proposition \ref{PROPOSITION:equivalence_of_optimization_problems}} \label{APPENDIX:PROOF:equivalence_of_optimization_problems}
\setcounter{proposition}{\getrefnumber{PROPOSITION:equivalence_of_optimization_problems}}
\setcounter{claim}{0}

We present preliminary results in Subsection A and prove Proposition \ref{PROPOSITION:equivalence_of_optimization_problems} in Subsection B.  %\Versions{A series of claims are utilized in this section, most of which are accompanied by proofs. The proofs of Claim \ref{CLAIM:equivalence_of_optimization_problems:feedback_gain} and Claim \ref{CLAIM:equivalence_of_optimization_problems:Kalman_gain}(a) are omitted here due to space limits, but can be found in the extended version \cite{EXTENDED}.}{}

\subsection{Preliminary Results}

We begin by establishing useful identities in Claim \ref{CLAIM:OrigModel_AuxModel_relation} that will be leveraged in the remainder of the proof.
Recall the matrices $\Phi_\original \in \real^{n \times n_\xi}$, $\Phi_\auxiliary \in \real^{n_\auxiliary \times n_\xi}$ defined in Claim \ref{CLAIM:x_xi_relation}\Versions{}{ and matrix $\Phi = [\Phi_{\sf U}, \Phi_{\sf Y}, \Phi_{\sf P}] \in \real^{n\times n_\auxiliary}$ defined in Claim \ref{CLAIM:Phi_relation}}, with $n_\auxiliary := mL + pL + pL^2$ and $n_\xi := mL+n(L+1)$. 

\begin{claim} \label{CLAIM:OrigModel_AuxModel_relation}
    %\Versions{\!\!\textup{\cite[Claim \ref{CLAIM:OrigModel_AuxModel_relation}]{EXTENDED}}}{}
    For the system \eqref{Eq:LTI} and auxiliary model \eqref{Eq:DDModel}, it holds for all $t \in \natural_{\geq 0}$ that
\begin{align*}&
    x_t = \Phi \mathbf{x}_t
    &&
    A \Phi \Phi_\auxiliary = \Phi \mathbf{A} \Phi_\auxiliary
    &&
    B = \Phi \mathbf{B}
    \\&
    w_t = \Phi \mathbf{w}_t
    &&
    C \Phi \Phi_\auxiliary = \mathbf{C} \Phi_\auxiliary
    &&
    \Sigma^{\sf w} = \Phi \mathbf{\Sigma}^{\sf w} \Phi^\transpose.
\end{align*}
\end{claim}
\begin{proof}
    The relation $x_t = \Phi \mathbf{x}_t$ follows from Claim \ref{CLAIM:x_xi_relation} and Claim \ref{CLAIM:Phi_relation}. 
    We have $C \Phi \Phi_\auxiliary = \mathbf{C} \Phi_\auxiliary$ from Claim \ref{CLAIM:c_relation}.
    To show $w_t = \Phi \mathbf{w}_t$ and $\Sigma^{\sf w} = \Phi \mathbf{\Sigma}^{\sf w} \Phi^\transpose$, recall from the definition that $\mathbf{w}_t = J_0 w_t$ and $\mathbf{\Sigma}^{\sf w} = J_0 \Sigma^{\sf w} J_0^\transpose$ where $J_0 := \col(0_{(n_\auxiliary - pL) \times n}, \mathcal{O})$.
    By direct calculation one can verify that $\Phi J_0 = I_n$, using which we obtain $w_t = \Phi \mathbf{w}_t$ given $\mathbf{w}_t = J_0 w_t$ and obtain $\Sigma^{\sf w} = \Phi \mathbf{\Sigma}^{\sf w} \Phi^\transpose$ given $\mathbf{\Sigma}^{\sf w} = J_0 \Sigma^{\sf w} J_0^\transpose$. 
    We have $B = \Phi_\original \widetilde{B} = \Phi \Phi_\auxiliary \widetilde{B} = \Phi \mathbf{B}$, using $\Phi_\original = \Phi \Phi_\auxiliary$ as Claim \ref{CLAIM:Phi_relation}, $\Phi_\auxiliary \widetilde{B} = \mathbf{B}$ from Claim \ref{CLAIM:AuxModel_TildeModel_relation} and $B = \Phi_\original \widetilde{B}$ which can be verified by definitions of $\Phi_\original$ and $\widetilde{B}$.
    We finally have $A \Phi \Phi_\auxiliary = A \Phi_\original = \Phi_\original \widetilde{A} = \Phi \Phi_\auxiliary \widetilde{A} = \Phi \mathbf{A} \Phi_\auxiliary$, where we used $\Phi_\original= \Phi \Phi_\auxiliary$ as Claim \ref{CLAIM:Phi_relation}, $\mathbf{A} \Phi_\auxiliary = \Phi_\auxiliary \widetilde{A}$ in Claim \ref{CLAIM:AuxModel_TildeModel_relation} and $A \Phi_\original = \Phi_\original \widetilde{A}$ which can be verified given the definitions of $\Phi_\original$ and $\widetilde{A}$.
\renewcommand\qedsymbol{$\blacklozenge$}
\end{proof}

Next, we relate the LQR feedback gains $K$ and $\mathbf{K}$. %\Versions{The proof is available in the extended version{\tb \cite[Claim 7.2]{EXTENDED}}.}{}

\begin{claim} \label{CLAIM:equivalence_of_optimization_problems:feedback_gain}
    For matrices $K$ in \eqref{Eq:feedback_gain} and $\mathbf{K}$ in \eqref{Eq:DDModel:feedback_gain}, it holds that $K \Phi \Phi_\auxiliary = \mathbf{K} \Phi_\auxiliary$.
\end{claim}
{\tb
\begin{proof}
\renewcommand\qedsymbol{$\blacklozenge$}
    Let $\widetilde C := C \Phi_\original$ and let $\widetilde A, \widetilde B$ be as in \eqref{Eq:TildeMatrices_Definition}. We first show the pair $(\widetilde A, \widetilde C)$ is detectable.
    For $\lambda \in \complex$, define $H_{\sf obs} := \col(\lambda I_{n_\xi} - \widetilde A, \widetilde C)$, which can be permuted into the form
\begin{align}
\label{Eq:PROOF:CLAIM:prerequisite_3:Hautus_matrix}
    \left[\compact{ \begin{array}{c|c|c}
        \lambda I_{mL} - \mathcal{D}_m && \\ \hline & \lambda I_{nL} - \mathcal{D}_n & \\ \hline \begin{matrix} -B & 0_{n\times m(L-1)}\end{matrix} & \begin{matrix} -I_n & 0_{n\times n(L-1)} \end{matrix} & \lambda I_n - A \\ \hline C \mathcal{C} & C \mathcal{C}_{\sf w} & C A^L
    \end{array} }\right],
\end{align}
    wherein $\mathcal{D}_q := \smallmat{& I_{q(L-1)} \\ 0_{q\times q}}$.
    Since the blocks $\lambda I_{mL} - \mathcal{D}_m$ and $\lambda I_{nL} - \mathcal{D}_n$ in \eqref{Eq:PROOF:CLAIM:prerequisite_3:Hautus_matrix} are non-singular for all $\lambda \neq 0$, to show that \eqref{Eq:PROOF:CLAIM:prerequisite_3:Hautus_matrix} has full column rank when $|\lambda| \geq 1$, we only need to verify the rank of the last block column in \eqref{Eq:PROOF:CLAIM:prerequisite_3:Hautus_matrix}.
    Since $(A, C)$ is observable, $\mathcal{O}_n := \col(C, CA, \ldots, CA^{n-1})$ has full column rank, so we have $\mathrm{Null}(\mathcal{O}_n A^L) = \mathrm{Null}(A^L)$ where $\mathrm{Null}$ denotes the null space.
    Note that $\mathcal{O}_n A^L$ is the observability matrix of the pair $(A, C A^L)$, and thus $\mathrm{Null}(\mathcal{O}_n A^L)$ is the unobservable space of the pair $(A, C A^L)$.
    Given $\mathrm{Null}(\mathcal{O}_n A^L) = \mathrm{Null}(A^L)$, all unobservable states $x_{\sf nobs}$ of $(A, C A^L)$ satisfy $A^L x_{\sf nobs} = 0$ and hence are strictly stable, which implies that $(A, C A^L)$ is detectable.
    From the Hautus lemma, $\col(\lambda I_n - A, C A^L)$ has full column rank for all $\lambda$ that $|\lambda| \geq 1$.
    With diagonal blocks $\lambda I_{mL} - \mathcal{D}_m$, $\lambda I_{nL} - \mathcal{D}_n$ and $\col(\lambda I_n - A, C A^L)$ having full column rank, the matrix \eqref{Eq:PROOF:CLAIM:prerequisite_3:Hautus_matrix} has full column rank when $|\lambda| \geq 1$, and so does the pre-permutational matrix $H_{\sf obs}$, which implies that $(\widetilde A, \widetilde C)$ is detectable through Hautus lemma.
    
    Next, we show that $\widetilde P_1 = \widetilde P_2$ with $\widetilde P_1 := \Phi_\original^\transpose P \Phi_\original$ and $\widetilde P_2 := \Phi_\auxiliary^\transpose \mathbf{P} \Phi_\auxiliary$,
    where $P$ is the solution to \eqref{Eq:feedback_DARE} and $\mathbf{P}$ the solution to \eqref{Eq:DDModel:feedback_DARE}.
    By left- and right-multiplying \eqref{Eq:feedback_DARE} by $\Phi$ and $\Phi^\transpose$ respectively, the resulting equation can be written as
\begin{align} \label{Eq:PROOF:CLAIM:prerequisite_3:DARE:tilde_1}
    \widetilde P_1 = \widetilde A\,^\transpose \widetilde P_1 (\widetilde A - \widetilde B \widetilde K_1) + \widetilde C\,^\transpose Q (\widetilde C - D \widetilde K_1)
\end{align}
    wherein $\widetilde K_1 := (R + D^\transpose Q D + \widetilde B\,^\transpose \widetilde P_1 \widetilde B)^{-1} (\widetilde B\,^\transpose \widetilde P_1 \widetilde A + D^\transpose Q \widetilde C)$, 
    provided the definitions $\widetilde C := C \Phi_\original$, $\widetilde P_1 := \Phi_\original^\transpose P \Phi_\original$ and the relations $A \Phi_\original = \Phi_\original \widetilde A$, $B = \Phi_\original \widetilde B$ implied by Claim \ref{CLAIM:Phi_relation}, Claim \ref{CLAIM:AuxModel_TildeModel_relation} and Claim \ref{CLAIM:OrigModel_AuxModel_relation}.
    Similarly, by left- and right-multiplying \eqref{Eq:DDModel:feedback_DARE} by $\Phi_\auxiliary^\transpose$ and $\Phi_\auxiliary$ respectively, the resulting equation can be written in the form
\begin{align} \label{Eq:PROOF:CLAIM:prerequisite_3:DARE:tilde_2}
    \widetilde P_2 = \widetilde A\,^\transpose \widetilde P_2 (\widetilde A - \widetilde B \widetilde K_2) + \widetilde C\,^\transpose Q (\widetilde C - D \widetilde K_2)
\end{align}
    with $\widetilde K_2 := (R + D^\transpose Q D + \widetilde B\,^\transpose \widetilde P_2 \widetilde B)^{-1} (\widetilde B\,^\transpose \widetilde P_2 \widetilde A + D^\transpose Q \widetilde C)$,
    given the definitions $\widetilde C := C \Phi_\original$, $\widetilde P_2 := \Phi_\auxiliary^\transpose \mathbf{P} \Phi_\auxiliary$ and the relations $\mathbf{A} \Phi_\auxiliary = \Phi_\auxiliary \widetilde A$, $\mathbf{B} = \Phi_\auxiliary \widetilde B$, $\mathbf{C} \Phi_\auxiliary = C \Phi_\original$ according to Claim \ref{CLAIM:Phi_relation} and Claim \ref{CLAIM:OrigModel_AuxModel_relation}. 
    Observing \eqref{Eq:PROOF:CLAIM:prerequisite_3:DARE:tilde_1} and \eqref{Eq:PROOF:CLAIM:prerequisite_3:DARE:tilde_2}, we know that both $\widetilde P_1$ and $\widetilde P_2$ are (positive semi-definite) solutions to a similar DARE to \eqref{Eq:feedback_DARE} and \eqref{Eq:DDModel:feedback_DARE}, for dynamical system $(\widetilde A, \widetilde B, \widetilde C, D)$.
    In fact, this DARE has a unique positive semi-definite solution, given stabilizable $(\widetilde A, \widetilde B)$ via Claim \ref{CLAIM:PROOF:AuxModel_Stabilizability_Detectability:tilde_stable}, detectable $(\widetilde A, \widetilde C)$ as proved before and $Q \succ 0$.
    Hence, the solutions $\widetilde P_1, \widetilde P_2$ are equal.

    Finally, we obtain the result by noting the relations $B^\transpose P A \Phi_\original = \mathbf{B}^\transpose \mathbf{P} \mathbf{A} \Phi_\auxiliary$ and $B^\transpose P B = \mathbf{B}^\transpose \mathbf{P} \mathbf{B}$, which can be verified given $\Phi_\original^\transpose P \Phi_\original = \Phi_\auxiliary^\transpose \mathbf{P} \Phi_\auxiliary$ (as $\widetilde P_1 = \widetilde P_2$) and given Claim \ref{CLAIM:Phi_relation}, Claim \ref{CLAIM:AuxModel_TildeModel_relation} and Claim \ref{CLAIM:OrigModel_AuxModel_relation}.
    It follows from the definitions \eqref{Eq:feedback_gain}, \eqref{Eq:DDModel:feedback_gain} of $K$ and $\mathbf{K}$ that $K \Phi_\original = \mathbf{K} \Phi_\auxiliary$, which is the result given $\Phi_\original = \Phi \Phi_\auxiliary$ (Claim \ref{CLAIM:Phi_relation}).
\end{proof}
}

We mention in Claim \ref{CLAIM:equivalence_of_optimization_problems:corollaries} some identities which will be used multiple times in the rest of the proof.

\begin{claim} \label{CLAIM:equivalence_of_optimization_problems:corollaries}
    If $v \in \real^{n}$, $\boldsymbol{v} \in \real^{n_\auxiliary}$ and $\tilde{v} \in \real^{n_\xi}$ are such that $v = \Phi \boldsymbol{v}$ and $\boldsymbol{v} = \Phi_\auxiliary \tilde{v}$, then
\begin{align*}
    C v &= \mathbf{C} \boldsymbol{v}, &
    K v &= \mathbf{K} \boldsymbol{v}, &
    A v &= \Phi \mathbf{A} \boldsymbol{v}, &
    \mathbf{A} \boldsymbol{v} &= \Phi_\auxiliary \widetilde{A} \tilde{v}.
\end{align*}
    If $M \in \symmetric^n_+$, $\mathbf{M} \in \symmetric^{n_\auxiliary}_+$ and $\widetilde{M} \in \symmetric^{n_\xi}_+$ are such that $M = \Phi \mathbf{M} \Phi^\transpose$ and $\mathbf{M} = \Phi_\auxiliary \widetilde{M} \Phi_\auxiliary^\transpose$, then
\begin{align*} 
    & C M = \mathbf{C} \mathbf{M} \Phi^\transpose, \;\;
    C M C^\transpose = \mathbf{C} \mathbf{M} \mathbf{C}^\transpose, \;\;
    C M K^\transpose = \mathbf{C} \mathbf{M} \mathbf{K}^\transpose, \\
    & K M = \mathbf{K} \mathbf{M} \Phi^\transpose, \;\;
    K M K^\transpose = \mathbf{K} \mathbf{M} \mathbf{K}^\transpose.
\end{align*}
\end{claim}
\begin{proof}
\renewcommand\qedsymbol{$\blacklozenge$}
    Using $C \Phi \Phi_\auxiliary = \mathbf{C} \Phi_\auxiliary$ (Claim \ref{CLAIM:OrigModel_AuxModel_relation}), we have $C v = C \Phi \boldsymbol{v} = C \Phi \Phi_\auxiliary \tilde{v} = \mathbf{C} \Phi_\auxiliary \tilde{v} = \mathbf{C} \boldsymbol{v}$, and one can show $C M = \mathbf{C} \mathbf{M} \Phi^\transpose$ and $C M C^\transpose = \mathbf{C} \mathbf{M} \mathbf{C}^\transpose$ given the facts $M = \Phi \mathbf{M} \Phi^\transpose = \Phi \Phi_\auxiliary \widetilde{M} \Phi_\auxiliary^\transpose \Phi^\transpose$ and $\mathbf{M} = \Phi_\auxiliary \widetilde{M} \Phi_\auxiliary^\transpose$.
    Similarly, using $K \Phi \Phi_\auxiliary = \mathbf{K} \Phi_\auxiliary$ (Claim \ref{CLAIM:equivalence_of_optimization_problems:feedback_gain}), we prove $K v = \mathbf{K} \boldsymbol{v}$, $K M = \mathbf{K} \mathbf{M} \Phi^\transpose$, $K M K^\transpose = \mathbf{K} \mathbf{M} \mathbf{K}^\transpose$ and $C M K^\transpose = \mathbf{C} \mathbf{M} \mathbf{K}^\transpose$ in the same way by replacing $(C, \mathbf{C})$ into $(K, \mathbf{K})$.
    Using $A \Phi \Phi_\auxiliary = \Phi \mathbf{A} \Phi_\auxiliary$ (Claim \ref{CLAIM:OrigModel_AuxModel_relation}) and $\mathbf{A} \Phi_\auxiliary = \Phi_\auxiliary \widetilde{A}$ (Claim \ref{CLAIM:AuxModel_TildeModel_relation}), we show that $A v = A \Phi \boldsymbol{v} = A \Phi \Phi_\auxiliary \tilde{v} = \Phi \mathbf{A} \Phi_\auxiliary \tilde{v} = \Phi \mathbf{A} \boldsymbol{v}$ and also $\mathbf{A} \boldsymbol{v} = \mathbf{A} \Phi_\auxiliary \tilde{v} = \Phi_\auxiliary \widetilde{A} \tilde{v}$.
\end{proof}

In the following claim, the state variances $\Sigma^{\sf x}, \mathbf{\Sigma}^{\sf x}$, Kalman gains $L_{\sf K}, \mathbf{L}_{\sf K}$ and Luenberger gains $L_{\sf L}, \mathbf{L}_{\sf L}$ are related.

\begin{claim} \label{CLAIM:equivalence_of_optimization_problems:Kalman_gain}
    For matrices $\Sigma^{\sf x}, L_{\sf K}, L_{\sf L}$ in \eqref{Eq:state_variance} and $\mathbf{\Sigma}^{\sf x}, \mathbf{L}_{\sf K}, \mathbf{L}_{\sf L}$ in \eqref{Eq:DDModel:state_variance}, it holds that 
\begin{enumerate}[(a)]
    \item $\Sigma^{\sf x} = \Phi \mathbf{\Sigma}^{\sf x} \Phi^\transpose$ and $\mathbf{\Sigma}^{\sf x} = \Phi_\auxiliary \widetilde{\Sigma}^{\sf x} \Phi_\auxiliary^\transpose$ for some $\widetilde{\Sigma}^{\sf x} \in \symmetric^{n_\xi}_+$;
    \item $L_{\sf K} = \Phi \mathbf{L}_{\sf K}$ and $\mathbf{L}_{\sf K} = \Phi_\auxiliary \widetilde{L}_{\sf K}$ for some $\widetilde{L}_{\sf K} \in \real^{n_\xi \times p}$;
    \item $L_{\sf L} = \Phi \mathbf{L}_{\sf L}$ and $\mathbf{L}_{\sf L} = \Phi_\auxiliary \widetilde{L}_{\sf L}$ for some $\widetilde{L}_{\sf L} \in \real^{n_\xi \times p}$.
\end{enumerate}
\end{claim}
\begin{proof}
{\tb
    We first show $\mathbf{\Sigma}^{\sf x} = \mathbf{\Sigma}^\prime := \Phi_\auxiliary \widetilde{\Sigma}^{\sf x} \Phi_\auxiliary^\transpose$ in (a). Let $\widetilde C := C \Phi_\original$ and let $\widetilde A, \widetilde B$ be as in Claim \ref{CLAIM:AuxModel_TildeModel_relation}.
    Since $(\widetilde A, \widetilde \Sigma^{\sf w})$ is stabilizable through Claim \ref{CLAIM:PROOF:AuxModel_Stabilizability_Detectability:tilde_stable} and $(\widetilde A, \widetilde C)$ is detectable as shown in the proof of Claim \ref{CLAIM:equivalence_of_optimization_problems:feedback_gain}, the DARE 
\begin{align} \label{Eq:CLAIM:equivalence_of_optimization_problems:Kalman_gain:relation_1}
    \widetilde{\Sigma}^{\sf x} = \widetilde{A} \widetilde{\Sigma}^{\sf x} \!\widetilde{A}\,^\transpose \!+\! \widetilde{\Sigma}^{\sf w} \!-\! \widetilde{A} \widetilde{\Sigma}^{\sf x} \widetilde{C}^\transpose (\widetilde{C} \widetilde{\Sigma}^{\sf x} \widetilde{C}^\transpose \!+\! \Sigma^{\sf v})^{-1} \widetilde{C} \widetilde{\Sigma}^{\sf x} \!\widetilde{A}\,^\transpose
\end{align}
    has a unique positive semi-definite solution $\widetilde{\Sigma}^{\sf x}$.
    Left- and right-multiply \eqref{Eq:CLAIM:equivalence_of_optimization_problems:Kalman_gain:relation_1} by $\Phi_\auxiliary$ and by $\Phi_\auxiliary^\transpose$ respectively, and the resulting equation can be written in the form
\begin{align} 
\label{Eq:CLAIM:equivalence_of_optimization_problems:Kalman_gain:relation_2}
    \mathbf{\Sigma}^\prime \!=\! \mathbf{A}  \mathbf{\Sigma}^\prime \mathbf{A}\!^\transpose  \!+\! \mathbf{\Sigma}^{\sf w} \!-\! \mathbf{A} \mathbf{\Sigma}^\prime \mathbf{C}^\transpose (\mathbf{C} \mathbf{\Sigma}^{\prime} \mathbf{C}^\transpose \!+\! \Sigma^{\sf v})^{-1} \mathbf{C} \mathbf{\Sigma}^\prime \widetilde{A}\,^\transpose
\end{align}
    with substitutions $\Phi_\auxiliary \widetilde{A} = \mathbf{A} \Phi_\auxiliary$, $\mathbf{\Sigma}^{\sf w} = \Phi_\auxiliary \widetilde{\Sigma}^{\sf w} \Phi_\auxiliary^\transpose$ and $\widetilde{C} = C \Phi_\original = C \Phi \Phi_\auxiliary = \mathbf{C} \Phi_\auxiliary$ via Claim \ref{CLAIM:Phi_relation} and Claim \ref{CLAIM:OrigModel_AuxModel_relation}.
    Due to \eqref{Eq:CLAIM:equivalence_of_optimization_problems:Kalman_gain:relation_2}, $\mathbf{\Sigma}^\prime$ is a positive semi-definite solution to the DARE \eqref{Eq:DDModel:state_variance:DARE}.
    Since \eqref{Eq:DDModel:state_variance:DARE} has a unique positive semi-definite solution $\mathbf{\Sigma}^{\sf x}$, we have $\mathbf{\Sigma}^{\sf x} = \mathbf{\Sigma}^\prime$.

    Next, we show $\Sigma^{\sf x} = \Sigma^\prime := \Phi \mathbf{\Sigma}^{\sf x} \Phi^\transpose$ in (a).
    Left- and right-multiply \eqref{Eq:DDModel:state_variance:DARE} by $\Phi$ and by $\Phi^\transpose$ respectively, and the resulting equality can be written as
\begin{align} \label{Eq:CLAIM:equivalence_of_optimization_problems:Kalman_gain:relation_3} \begin{aligned}
    \Sigma^\prime \!=\! A \Sigma^\prime A^\transpose \!+\! \Sigma^{\sf w} \!-\! A \Sigma^\prime C^\transpose (C \Sigma^\prime C^\transpose \!+\! \Sigma^{\sf v})^{-1} C \Sigma^\prime A^\transpose
\end{aligned} \end{align}
    given $\Phi \mathbf{\Sigma}^{\sf w} \Phi^\transpose = \Sigma^{\sf w}$ (Claim \ref{CLAIM:OrigModel_AuxModel_relation}) and the substitutions $\Phi \mathbf{A} \mathbf{\Sigma}^{\sf x} = A \Phi \mathbf{\Sigma}^{\sf x}$ and $\mathbf{C} \mathbf{\Sigma}^{\sf x} = C \Phi \mathbf{\Sigma}^{\sf x}$, which are implied by $\Phi \mathbf{A} \Phi_\auxiliary = A \Phi \Phi_\auxiliary$ and $\mathbf{C} \Phi_\auxiliary = C \Phi \Phi_\auxiliary$ (Claim \ref{CLAIM:OrigModel_AuxModel_relation}) respectively, provided $\mathbf{\Sigma}^{\sf x} = \Phi_\auxiliary \widetilde{\Sigma}^{\sf x} \Phi_\auxiliary^\transpose$.
    Due to \eqref{Eq:CLAIM:equivalence_of_optimization_problems:Kalman_gain:relation_3}, $\Sigma^\prime$ is a positive semi-definite solution to the DARE \eqref{Eq:state_variance:DARE}. Since \eqref{Eq:state_variance:DARE} has a unique positive definite solution $\Sigma^{\sf x}$, we have $\Sigma^{\sf x} = \Sigma^\prime$.
}

    We finally show (b) and (c). Given the definitions \eqref{Eq:DDModel:state_variance:gain} of $\mathbf{L}_{\sf K}$ and $\mathbf{L}_{\sf L}$, we obtain $\mathbf{L}_{\sf K} = \Phi_\auxiliary \widetilde{L}_{\sf K}$ and $\mathbf{L}_{\sf L} = \Phi_\auxiliary \widetilde{L}_{\sf L}$ as
\begin{align*}
    \mathbf{L}_{\sf K} &:= \mathbf{\Sigma}^{\sf x} \mathbf{C}^\transpose (\mathbf{C} \mathbf{\Sigma}^{\sf x} \mathbf{C}^\transpose + \Sigma^{\sf v})^{-1} \\
    &= \Phi_\auxiliary \widetilde{\Sigma}^{\sf x} \Phi_\auxiliary^\transpose \mathbf{C}^\transpose (\mathbf{C} \mathbf{\Sigma}^{\sf x} \mathbf{C}^\transpose + \Sigma^{\sf v})^{-1} = \Phi_\auxiliary \widetilde{L}_{\sf K} \\
    \mathbf{L}_{\sf L} &:= \mathbf{A} \mathbf{L}_{\sf K} = \mathbf{A} \Phi_\auxiliary \widetilde{L}_{\sf K} = \Phi_\auxiliary \widetilde{A} \widetilde{L}_{\sf K} = \Phi_\auxiliary \widetilde{L}_{\sf L}
\end{align*}
    with $\widetilde{L}_{\sf K} := \widetilde{\Sigma}^{\sf x} \Phi_\auxiliary^\transpose \mathbf{C}^\transpose (\mathbf{C} \mathbf{\Sigma}^{\sf x} \mathbf{C}^\transpose \!\!+\! \Sigma^{\sf v})^{-1}$ and $\widetilde{L}_{\sf L} := \widetilde{A} \widetilde{L}_{\sf K}$, where we used $\mathbf{\Sigma}^{\sf x} = \Phi_\auxiliary \widetilde{\Sigma}^{\sf x} \Phi_\auxiliary^\transpose$ in (a) and $\mathbf{A} \Phi_\auxiliary = \Phi_\auxiliary \widetilde{A}$ in Claim \ref{CLAIM:AuxModel_TildeModel_relation}.
    With definitions \eqref{Eq:state_variance:gain}, \eqref{Eq:DDModel:state_variance:gain} of $L_{\sf K}, L_{\sf L}, \mathbf{L}_{\sf K}, \mathbf{L}_{\sf L}$, we have
\begin{align*}
    L_{\sf K} &:= \Sigma^{\sf x} C^\transpose (C \Sigma^{\sf x} C^\transpose + \Sigma^{\sf v})^{-1} \\
    &= \Phi \mathbf{\Sigma}^{\sf x} \mathbf{C}^\transpose (\mathbf{C} \mathbf{\Sigma}^{\sf x} \mathbf{C}^\transpose + \Sigma^{\sf v})^{-1} = \Phi \mathbf{L}_{\sf K} \\
    L_{\sf L} &:= A L_{\sf K} = \Phi \mathbf{A} \mathbf{L}_{\sf K} = \Phi \mathbf{L}_{\sf L},
\end{align*}
    where we used $C \Sigma^{\sf x} = \mathbf{C} \mathbf{\Sigma}^{\sf x} \Phi^\transpose$ and $C \Sigma^{\sf x} C^\transpose = \mathbf{C} \mathbf{\Sigma}^{\sf x} \mathbf{C}^\transpose$ through Claim \ref{CLAIM:equivalence_of_optimization_problems:corollaries} with selection $(M, \mathbf{M}, \widetilde{M}) \gets (\Sigma^{\sf x}, \mathbf{\Sigma}^{\sf x}, \widetilde{\Sigma}^{\sf x})$ given (a), and used $A L_{\sf K} = \Phi \mathbf{A} \mathbf{L}_{\sf K}$ implied by $A v = \Phi \mathbf{A} \boldsymbol{v}$ from Claim \ref{CLAIM:equivalence_of_optimization_problems:corollaries} where $v, \boldsymbol{v}, \tilde{v}$ are chosen as the $i$-th columns of $L_{\sf K}, \mathbf{L}_{\sf K}, \widetilde{L}_{\sf K}$, respectively, for $i \in \{1,\ldots,p\}$.
\renewcommand\qedsymbol{$\blacklozenge$}
\end{proof} 

\subsection{Proof of Proposition \ref{PROPOSITION:equivalence_of_optimization_problems}}

\begin{proof}
    We first show in Claim \ref{CLAIM:equivalence_of_optimization_problems:uy_variance_relation} that the matrices $\Delta_s$ and $\mathbf{\Delta}_s$ are identical, and then in Claim \ref{CLAIM:equivalence_of_optimization_problems:nominal_state}(c) that the nominal outputs $\thickbar y_t$ and $\thickbar{\mathbf{y}}_t$ are equal.

\begin{claim}
\label{CLAIM:equivalence_of_optimization_problems:uy_variance_relation}
    For matrices $\Delta_s$ in \eqref{Eq:input_output_variance:Delta} and $\mathbf{\Delta}_s$ in \eqref{Eq:DDModel:input_output_variance:Delta}, we have $\Delta_s = \mathbf{\Delta}_s$ for $s \in \integer_{[0,N)}$.
\end{claim}
\begin{proof}
\renewcommand\qedsymbol{$\blacklozenge$}
    We first show $\Lambda_s = \Phi \mathbf{\Lambda}_s \Phi^\transpose$ and $\mathbf{\Lambda}_s = \Phi_\auxiliary \widetilde{\Lambda}_s \Phi_\auxiliary^\transpose$ for $s \in \integer_{[0,N)}$, where $\Lambda_s$ and $\mathbf{\Lambda}_s$ are defined in \eqref{Eq:input_output_variance:Lambda} and \eqref{Eq:DDModel:input_output_variance:Lambda}, and let $\widetilde{\Lambda}_s := \textstyle{\sum_{r=0}^s} (\widetilde{A} - \widetilde{B} \widetilde{K})^r \,\widetilde{L}_{\sf L} (\mathbf{C} \mathbf{\Sigma}^{\sf x} \mathbf{C}^\transpose \!+\! \Sigma^{\sf v}) \widetilde{L}_{\sf L} (\widetilde{A} - \widetilde{B} \widetilde{K})^{r \transpose}$, with $\widetilde{L}_{\sf L}$ in Claim \ref{CLAIM:equivalence_of_optimization_problems:Kalman_gain} and $\widetilde{K} := \mathbf{K} \Phi_\auxiliary$.
    Given Claim \ref{CLAIM:equivalence_of_optimization_problems:Kalman_gain}(c), the definitions of $\Lambda_s, \mathbf{\Lambda}_s$, and the identity $C \Sigma^{\sf x} C^\transpose = \mathbf{C} \mathbf{\Sigma}^{\sf x} \mathbf{C}^\transpose$ shown in the proof Claim \ref{CLAIM:equivalence_of_optimization_problems:Kalman_gain}(b-c), it suffices to show
\begin{align*}
    & (A - B K)^r \, \Phi \Phi_\auxiliary = \Phi \, (\mathbf{A} - \mathbf{B} \mathbf{K})^r \, \Phi_\auxiliary \\
    & (\mathbf{A} - \mathbf{B} \mathbf{K})^r \, \Phi_\auxiliary = \Phi_\auxiliary \, (\widetilde{A} - \widetilde{B} \widetilde{K})^r
\end{align*}
    for all $r \in \natural_{\geq 0}$, which can be obtained by repeatedly applying $(A - B K) \Phi \Phi_\auxiliary = \Phi (\mathbf{A} - \mathbf{B} \mathbf{K}) \Phi_\auxiliary$ and $(\mathbf{A} - \mathbf{B} \mathbf{K}) \Phi_\auxiliary = \Phi_\auxiliary (\widetilde{A} - \widetilde{B} \widetilde{K})$ respectively, which follow from $\mathbf{A} = \Phi_\auxiliary \widetilde{A}$ and $\mathbf{B} = \Phi_\auxiliary \widetilde{B}$ in Claim \ref{CLAIM:AuxModel_TildeModel_relation}, $A \Phi \Phi_\auxiliary = \Phi \mathbf{A} \Phi_\auxiliary$ and $B = \Phi \mathbf{B}$ in Claim \ref{CLAIM:OrigModel_AuxModel_relation}, and $K \Phi \Phi_\auxiliary = \mathbf{K} \Phi_\auxiliary$ in Claim \ref{CLAIM:equivalence_of_optimization_problems:feedback_gain}.

    We finally show $\Delta_s = \mathbf{\Delta}_s$ for $s \in \integer_{[0,N)}$. Given the definitions of $\Delta_s$ and $\mathbf{\Delta}_s$ and the relation $C \Sigma^{\sf x} C^\transpose = \mathbf{C} \mathbf{\Sigma}^{\sf x} \mathbf{C}^\transpose$, it suffices to show the relations $C \Lambda_s C^\transpose = \mathbf{C} \mathbf{\Lambda}_s \mathbf{C}^\transpose$, $K \Lambda_s K^\transpose = \mathbf{K} \mathbf{\Lambda}_s \mathbf{K}^\transpose$ and $C \Lambda_s K^\transpose = \mathbf{C} \mathbf{\Lambda}_s \mathbf{K}^\transpose$, which are obtained through Claim \ref{CLAIM:equivalence_of_optimization_problems:corollaries} with selection $(M, \mathbf{M}, \widetilde{M}) \gets (\Lambda_s, \mathbf{\Lambda}_s, \widetilde{\Lambda}_s)$ given $\Lambda_s = \Phi \mathbf{\Lambda}_s \Phi^\transpose$ and $\mathbf{\Lambda}_s = \Phi_\auxiliary \widetilde{\Lambda}_s \Phi_\auxiliary^\transpose$ as proved.
\end{proof}

\begin{claim} \label{CLAIM:equivalence_of_optimization_problems:nominal_state}
    If parameters $\mu^{\sf \hat x}_k, \mu^{\sf \bar x}_k, \boldsymbol{\mu}^{\sf \bar x}_k, \boldsymbol{\mu}^{\sf \hat x}_k$ satisfy \eqref{Eq:initial_condition_relation} for some $\widetilde{\mu}^{\sf \,\hat x}_k, \widetilde{\mu}^{\sf \,\bar x}_k \in \real^{n_\xi}$, then, for all $\thickbar u_{[k,k+N)}$ and $\theta$, we have 
\begin{enumerate}[(a)]
    \item $\mu^{\sf x}_k = \Phi \boldsymbol{\mu}^{\sf x}_k$ and $\boldsymbol{\mu}^{\sf x}_k = \Phi_\auxiliary \widetilde{\mu}^{\sf \,x}_k$ with some $\widetilde{\mu}^{\sf \,x}_k \in \real^{n_\xi}$,
\end{enumerate}
    and, for all $t \in \integer_{[k,k+N)}$, we have
\begin{enumerate}[(a)]
\setcounter{enumi}{1}
    \item $\thickbar x_t = \Phi \, \thickbar{\mathbf{x}}_t$ and $\thickbar{\mathbf{x}}_t = \Phi_\auxiliary \, \widetilde{\thickbar{x}}_t$ with some $\widetilde{\thickbar{x}}_t \in \real^{n_\xi}$,
    \item  $\thickbar y_t = \thickbar{\mathbf{y}}_t$.
\end{enumerate}
\end{claim}
\begin{proof}
\renewcommand\qedsymbol{$\blacklozenge$}
    To prove (a), we obtain $\mu^{\sf x}_k = \Phi_\original \widetilde{\mu}^{\sf \,x}_k$ by combining \eqref{Eq:interpolating_initial_condition} and \eqref{Eq:initial_condition_relation}, and obtain $\boldsymbol{\mu}^{\sf x}_k = \Phi_\auxiliary \widetilde{\mu}^{\sf \,x}_k$ by combining \eqref{Eq:DDModel:interpolating_initial_condition} and \eqref{Eq:initial_condition_relation}, where we let $\widetilde{\mu}^{\sf \,x}_k := (1 - \theta) \widetilde{\mu}^{\sf \,\hat x}_k + \theta \widetilde{\mu}^{\sf \,\bar x}_k$.
    Then, $\mu^{\sf x}_k = \Phi \boldsymbol{\mu}^{\sf x}_k$ follows from $\Phi_\original = \Phi \Phi_\auxiliary$.

    (b) is proved by induction.
    \textbf{Base Case.} Select $\widetilde{\thickbar x}_k := \widetilde{\mu}^{\sf \,x}_k$. The $t=k$ case of (b) follows from (a) and relations $\thickbar x_k := \mu^{\sf x}_k$ as \eqref{Eq:nominal_model:initial} and $\thickbar{\mathbf{x}}_k := \boldsymbol{\mu}^{\sf x}_k$ as \eqref{Eq:DDModel:nominal_model:initial}.
    \textbf{Inductive Step.} Assume the $t = \tau$ case of (b) for some $\tau \in \integer_{[k,k+N-2]}$, and thus we have
\begin{align*}
    \thickbar x_{\tau+1} 
    \overset{\text{via \eqref{Eq:nominal_model:state}}}{=} A \thickbar x_\tau + B \thickbar u_\tau
    = \Phi \mathbf{A} \thickbar{\mathbf{x}}_\tau + \Phi \mathbf{B} \thickbar u_\tau
    \overset{\text{via \eqref{Eq:DDModel:nominal_model:state}}}{=} \Phi \thickbar{\mathbf{x}}_{\tau+1},
\end{align*}
    where the second equality used $B = \Phi \mathbf{B}$ in Claim \ref{CLAIM:OrigModel_AuxModel_relation} and $A \thickbar x_\tau = \Phi \mathbf{A} \thickbar{\mathbf{x}}_\tau$ through Claim \ref{CLAIM:equivalence_of_optimization_problems:corollaries} with selection $(v, \boldsymbol{v}, \tilde{v}) \gets (\thickbar{x}_\tau, \thickbar{\mathbf{x}}_\tau, \widetilde{\thickbar x}_\tau)$ given (b) of $t=\tau$.
    Moreover, we have
\begin{align*}
    \thickbar{\mathbf{x}}_{\tau+1} 
    \!\overset{\text{via \eqref{Eq:DDModel:nominal_model:state}}}{=}\! 
    \mathbf{A} \thickbar{\mathbf{x}}_\tau \!+\! \mathbf{B} \thickbar u_\tau
    = \Phi_\auxiliary \widetilde{A} \,\widetilde{\thickbar x}_\tau \!+\! \Phi_\auxiliary \widetilde{B} \thickbar u_\tau 
    = \Phi_\auxiliary \widetilde{\thickbar x}_{\tau+1}
\end{align*}
    by choosing $\widetilde{\thickbar x}_{\tau+1} := \widetilde{A} \,\widetilde{\thickbar x}_\tau + \widetilde{B} \thickbar u_\tau$,
    where the second equality used $\mathbf{B} = \Phi_\auxiliary \widetilde{B}$ in Claim \ref{CLAIM:AuxModel_TildeModel_relation} and $\mathbf{A} \thickbar{\mathbf{x}}_\tau = \Phi_\auxiliary \widetilde{A} \,\widetilde{\thickbar x}_\tau$ through Claim \ref{CLAIM:equivalence_of_optimization_problems:corollaries} with $(v, \boldsymbol{v}, \tilde{v}) \gets (\thickbar{x}_\tau, \thickbar{\mathbf{x}}_\tau, \widetilde{\thickbar x}_\tau)$ given (b) of $t=\tau$. Thus, we have the $t = \tau + 1$ case of (b). This shows (b).

    Last, we have (c) $\thickbar y_t \overset{\text{via \eqref{Eq:nominal_model:output}}}= C \thickbar x_t + D \thickbar u_t = \mathbf{C} \thickbar{\mathbf{x}}_t + D \thickbar u_t \overset{\text{via \eqref{Eq:DDModel:nominal_model:output}}}= \thickbar{\mathbf{y}}_t$ using $C \thickbar x_t = \mathbf{C} \thickbar{\mathbf{x}}_t$ through Claim \ref{CLAIM:equivalence_of_optimization_problems:corollaries} with selection $(v, \boldsymbol{v}, \tilde{v}) \gets (\thickbar{x}_t, \thickbar{\mathbf{x}}_t, \widetilde{\thickbar x}_t)$ given (b).
\end{proof}

    With Claim \ref{CLAIM:equivalence_of_optimization_problems:uy_variance_relation} and Claim \ref{CLAIM:equivalence_of_optimization_problems:nominal_state}(c), the objective functions of problems \eqref{Eq:SMPC_reduced} and \eqref{Eq:DDSMPC_reduced} are equal, and the constraint \eqref{Eq:safety_constraint_reduced} in problem \eqref{Eq:SMPC_reduced} and the constraint \eqref{Eq:DDModel:safety_constraint_reduced} in problem \eqref{Eq:DDSMPC_reduced} are equivalent.
    Thus, problems \eqref{Eq:SMPC_reduced} and \eqref{Eq:DDSMPC_reduced} have the same objective function and constraints, and the result follows.
\end{proof}

\section{Proof of Theorem \ref{PROPOSITION:equivalence_of_control_algorithms}}
\label{APPENDIX:PROOF:equivalence_of_control_algorithms}
\setcounter{proposition}{\getrefnumber{PROPOSITION:equivalence_of_control_algorithms}}
\setcounter{claim}{0}

\begin{proof}

Let $\{x^{\sf a}_t, u^{\sf a}_t, y^{\sf a}_t\}$ denote the trajectory produced by process a), and $\{x^{\sf b}_t, u^{\sf b}_t, y^{\sf b}_t\}$ the trajectory from process b). 
We have the following intermediate result.

\begin{claim} \label{CLAIM:PROPOSITION:equivalence_of_control_algorithms}
    Consider a control step $k = \kappa$ in both processes a) and b). Assume that
\begin{enumerate}[i)]
    \item the states $x^{\sf a}_\kappa, x^{\sf b}_\kappa$ in processes a) and b) are equal, and
    \item the parameters $\mu^{\sf \hat x}_\kappa, \mu^{\sf \bar x}_\kappa$ in process a) and parameters $\boldsymbol{\mu}^{\sf \hat x}_\kappa, \boldsymbol{\mu}^{\sf \bar x}_\kappa$ in process b) satisfy \eqref{Eq:initial_condition_relation} with $k = \kappa$.
\end{enumerate}
    Let $\kappa^\plus := \kappa +N_{\rm c}$. Then, for $t \in \integer_{[\kappa, \kappa^\plus]}$, we have
\begin{enumerate}[(a)]
    \item the states $x^{\sf a}_t, x^{\sf b}_t$ in processes a) and b) are equal,
    \item the variable $\hat x^\minus_t$ in process a) and variable $\hat{\mathbf{x}}^\minus_t$ in process b) satisfy $\hat x^\minus_t = \Phi \hat{\mathbf{x}}^\minus_t$ and $\hat{\mathbf{x}}^\minus_t = \Phi_\auxiliary \widetilde{\hat x^\minus_t}$ for some $\widetilde{\hat x^\minus_t} \in \real^{n_\xi}$,
\end{enumerate}
    and, for $t \in \integer_{[\kappa, \kappa^\plus)}$, we have
\begin{enumerate}[(a)]
\setcounter{enumi}{2}
    \item the inputs $u^{\sf a}_t, u^{\sf b}_t$ in processes a) and b) are equal,
    \item the outputs $y^{\sf a}_t, y^{\sf b}_t$ in processes a) and b) are equal.
\end{enumerate}
    Moreover, at the next control step $k = \kappa^\plus$, we have
\begin{enumerate}[(a)]
\setcounter{enumi}{4}
    \item the parameters $\mu^{\sf \hat x}_{\kappa^\plus}, \mu^{\sf \bar x}_{\kappa^\plus}$ in process a) and parameters $\boldsymbol{\mu}^{\sf \hat x}_{\kappa^\plus}, \boldsymbol{\mu}^{\sf \bar x}_{\kappa^\plus}$ in process b) satisfy \eqref{Eq:initial_condition_relation}  with $k = \kappa^\plus$.
\end{enumerate}
\end{claim}
\begin{proof}
\renewcommand\qedsymbol{$\blacklozenge$}
    We prove (a)-(d) by induction. 
    \textbf{Base Case:} we show (a) and (b) for $t=\kappa$. Result (a) of $t=\kappa$ is exactly as condition i).
    Through Proposition \ref{PROPOSITION:equivalence_of_optimization_problems} and the fact that both problems \eqref{Eq:SMPC_reduced} and \eqref{Eq:DDSMPC_reduced} produce unique optimal $\theta$, the values of $\theta$ are the same in processes a) and b).
    Given condition ii), $\mu^{\sf x}_\kappa$ in process a) and $\boldsymbol{\mu}^{\sf x}_\kappa$ in process b) satisfy $\mu^{\sf x}_\kappa = \Phi \boldsymbol{\mu}^{\sf x}_k$ and $\boldsymbol{\mu}^{\sf x}_\kappa = \Phi_\auxiliary \widetilde{\mu}^{\sf \,x}_\kappa$ for some $\widetilde{\mu}^{\sf \,x}_\kappa$ according to Claim \ref{CLAIM:equivalence_of_optimization_problems:nominal_state}.
    Combining these relations with $\hat x^\minus_\kappa := \mu^{\sf x}_\kappa$ as \eqref{Eq:Kalman_filter:initial} and $\hat{\mathbf{x}}^\minus_\kappa := \boldsymbol{\mu}^{\sf x}_\kappa$ as \eqref{Eq:DDModel:Kalman_filter:initial}, we obtain (b) of $t=\kappa$ by choosing $\widetilde{\hat x^\minus_\kappa} := \widetilde{\mu}^{\sf \,x}_\kappa$, as
\begin{align*}
    \hat x^\minus_\kappa &= \mu^{\sf x}_\kappa = \Phi \boldsymbol{\mu}^{\sf x}_\kappa = \Phi \hat{\mathbf{x}}^\minus_\kappa, &
    \hat{\mathbf{x}}^\minus_\kappa &= \boldsymbol{\mu}^{\sf x}_\kappa = \Phi_\auxiliary \widetilde{\mu}^{\sf \,x}_\kappa = \Phi_\auxiliary \widetilde{\hat x^\minus_\kappa}.
\end{align*}
    \textbf{Inductive Step:} we assume (a) and (b) for $t=\tau \in \integer_{[\kappa, \kappa^\plus)}$, and then prove (c), (d) for $t=\tau$ and (a), (b) for $t=\tau+1$.
    The control inputs $u^{\sf a}_\tau$, $u^{\sf b}_\tau$ are obtained through \eqref{Eq:feedback_policy} and \eqref{Eq:DDModel:feedback_policy} respectively, where the nominal inputs $\thickbar u_\tau$ are the same according to Proposition \ref{PROPOSITION:equivalence_of_optimization_problems} and the fact that both problems \eqref{Eq:SMPC_reduced}, \eqref{Eq:DDSMPC_reduced} produce a unique optimal $\thickbar u$, i.e.,
\begin{align*}
    u^{\sf a}_\tau =
    \thickbar u_\tau - K (\hat x_\tau - \thickbar x_\tau), \quad
    u^{\sf b}_\tau = \thickbar u_\tau - \mathbf{K} (\hat{\mathbf{x}}_\tau - \thickbar{\mathbf{x}}_\tau).
\end{align*}
    Thus, we have (c) $u^{\sf a}_\tau = u^{\sf b}_\tau$ of $t = \tau$, because of $K \hat x_\tau = \mathbf{K} \hat{\mathbf{x}}_\tau$ and $K \thickbar x_\tau = \mathbf{K} \thickbar{\mathbf{x}}_\tau$ through Claim \ref{CLAIM:equivalence_of_optimization_problems:corollaries} where we choose $(v, \boldsymbol{v}, \tilde{v})$ as $(\hat x^\minus_\tau, \hat{\mathbf{x}}^\minus_\tau, \widetilde{\hat x^\minus_\tau})$ and $(\thickbar{x}_\tau, \thickbar{\mathbf{x}}_\tau, \widetilde{\thickbar{x}}_\tau)$, given (b) of $t=\tau$ and Claim \ref{CLAIM:equivalence_of_optimization_problems:nominal_state}(b) of $t=\tau$.
    We then have (d) $y^{\sf a}_\tau = y^{\sf b}_\tau$ for $t = \tau$ and (a) $x^{\sf a}_{\tau+1} = x^{\sf b}_{\tau+1}$ for $t = \tau+1$, given the system model $y^{\sf z}_\tau = C x^{\sf z}_\tau + D u^{\sf z}_\tau + v_t$ as \eqref{Eq:LTI:output} and $x^{\sf z}_{\tau+1} = A x^{\sf z}_\tau + B u^{\sf z}_\tau + w_t$ as \eqref{Eq:LTI:state}, for ${\sf z} \in \{{\sf a}, {\sf b}\}$.
    Finally, we prove (b) for $t = \tau+1$ as
\begin{align*}
    & \hat x^\minus_{\tau+1} \overset{\text{via \eqref{Eq:Kalman_filter}}}{=}
    A \hat x^\minus_\tau + B u^{\sf a}_\tau + L_{\sf L}(y^{\sf a}_\tau - C \hat x^\minus_\tau)
    \\&
    = \Phi \mathbf{A} \hat{\mathbf{x}}_\tau + \Phi \mathbf{B} u^{\sf b}_\tau + \Phi \mathbf{L}_{\sf L}(y^{\sf b}_\tau - \mathbf{C} \hat{\mathbf{x}}^\minus_\tau)
    \overset{\text{via \eqref{Eq:DDModel:Kalman_filter}}}{=} \Phi \hat{\mathbf{x}}^\minus_{\tau+1} 
    \\&
    \hat{\mathbf{x}}^\minus_{\tau+1} \overset{\text{via \eqref{Eq:DDModel:Kalman_filter}}} =
    \mathbf{A} \hat{\mathbf{x}}_\tau + \mathbf{B} u^{\sf b}_\tau + \mathbf{L}_{\sf L} (y^{\sf b}_\tau - \mathbf{C} \hat{\mathbf{x}}^\minus_\tau)
    \\&
    = \Phi_\auxiliary \widetilde{A} \, \widetilde{\hat x_\tau} + \Phi_\auxiliary \widetilde{B} u^{\sf b}_\tau + \Phi_\auxiliary \widetilde{L}_{\sf L}(y^{\sf b}_\tau - \mathbf{C} \hat{\mathbf{x}}^\minus_\tau)
    = \Phi_\auxiliary \widetilde{\hat x^\minus_\tau}_{+1}
\end{align*}
    by choosing $\widetilde{\hat x^\minus_\tau}_{+1} := \widetilde{A} \, \widetilde{\hat x_\tau} + \widetilde{B} u^{\sf b}_\tau + \widetilde{L}_{\sf L}(y^{\sf b}_\tau - \mathbf{C} \hat{\mathbf{x}}^\minus_\tau)$,
    where we used $B = \Phi \mathbf{B}$ in Claim \ref{CLAIM:OrigModel_AuxModel_relation}, $\mathbf{B} = \Phi_\auxiliary \widetilde{B}$ in Claim \ref{CLAIM:AuxModel_TildeModel_relation}, $L_{\sf L} = \Phi \mathbf{L}_{\sf L}$ and $\mathbf{L}_{\sf L} = \Phi \widetilde{L}_{\sf L}$ in Claim \ref{CLAIM:equivalence_of_optimization_problems:Kalman_gain}, and $A \hat x^\minus_\tau = \Phi \mathbf{A} \hat{\mathbf{x}}_\tau$ and $\mathbf{A} \hat{\mathbf{x}}^\minus_\tau = \Phi_\auxiliary \widetilde{A} \, \widetilde{\hat x^\minus_\tau}$ by applying Claim \ref{CLAIM:equivalence_of_optimization_problems:corollaries} with $(v, \boldsymbol{v}, \tilde{v}) \gets (\hat x^\minus_\tau, \hat{\mathbf{x}}^\minus_\tau, \widetilde{\hat x^\minus_\tau})$ given (b) of $t=\tau$.
    By induction on $t$, (a) and (b) hold for $t \in \integer_{[\kappa, \kappa^\plus]}$, and (c) and (d) hold for $t \in \integer_{[\kappa, \kappa^\plus)}$.

    We finally show (e). Notice the following relations, 
\begin{subequations}
\label{Eq:CLAIM:equivalence_of_control_algorithms:relation_12}
\begin{align}
\label{Eq:CLAIM:equivalence_of_control_algorithms:relation_1}
    \hat x^\minus_{\kappa^\plus} &= \Phi \Phi_\auxiliary \widetilde{\hat x^\minus_\kappa}_{^\plus}, &
    \hat{\mathbf{x}}^\minus_{\kappa^\plus} &= \Phi_\auxiliary \widetilde{\hat x^\minus_\kappa}_{^\plus} \\
\label{Eq:CLAIM:equivalence_of_control_algorithms:relation_2}
    \thickbar{x}_{\kappa^\plus} &= \Phi \Phi_\auxiliary \widetilde{\thickbar{x}_\kappa}_{^\plus}, &
    \thickbar{\mathbf{x}}_{\kappa^\plus} &= \Phi_\auxiliary \widetilde{\thickbar{x}}_{\kappa^\plus}
\end{align}\end{subequations}
    where \eqref{Eq:CLAIM:equivalence_of_control_algorithms:relation_1} is due to (b) with $t = \kappa^\plus$, and \eqref{Eq:CLAIM:equivalence_of_control_algorithms:relation_2} follows from Claim \ref{CLAIM:equivalence_of_optimization_problems:nominal_state} with $k = \kappa$ and $t = \kappa^\plus$.
According to \eqref{Eq:initial_condition_iteration} applied in Algorithm \ref{ALGO:SMPC} and \eqref{Eq:DDModel:initial_condition_iteration} applied in Algorithm \ref{ALGO:DDSMPC}, we have
\begin{align}
\label{Eq:CLAIM:equivalence_of_control_algorithms:relation_3}
    \mu^{\sf \hat x}_{\kappa^\plus} = \hat x^\minus_{\kappa^\plus}, \;\;
    \mu^{\sf \bar x}_{\kappa^\plus} = \thickbar{x}_{\kappa^\plus}, \;\;
    \boldsymbol{\mu}^{\sf \hat x}_{\kappa^\plus} = \hat{\mathbf{x}}^\minus_{\kappa^\plus}, \;\;
    \boldsymbol{\mu}^{\sf \bar x}_{\kappa^\plus} = \thickbar{\mathbf{x}}_{\kappa^\plus}.
\end{align}
Combining \eqref{Eq:CLAIM:equivalence_of_control_algorithms:relation_12} and \eqref{Eq:CLAIM:equivalence_of_control_algorithms:relation_3}, with $\Phi_\original = \Phi \Phi_\auxiliary$ via Claim \ref{CLAIM:Phi_relation}, we obtain \eqref{Eq:initial_condition_relation} with $k = \kappa^\plus$ where we select $\widetilde{\mu}^{\sf \,\hat x}_{\kappa^\plus} := \widetilde{\hat x^\minus_\kappa}_{^\plus}$ and $\widetilde{\mu}^{\sf \,\bar x}_{\kappa^\plus} := \widetilde{\thickbar{x}}_{\kappa^\plus}$. This shows (e).
\end{proof}

We finish the proof by showing that the results (a)-(e) in Claim \ref{CLAIM:PROPOSITION:equivalence_of_control_algorithms} are true for all control steps $\kappa \in \{0, N_{\rm c}, 2N_{\rm c}, \ldots\}$, by induction on $\kappa$.
\textbf{Base Case.} For $\kappa=0$, condition i) of Claim \ref{CLAIM:PROPOSITION:equivalence_of_control_algorithms} holds given that both processes start with a common initial state $x_0$, and condition ii) of Claim \ref{CLAIM:PROPOSITION:equivalence_of_control_algorithms} holds due to Assumption \ref{ASSUMPTION:parameter_choice_of_DDSMPC_algorithm}(d) and due to the selections $(\mu^{\sf \hat x}_0, \mu^{\sf \bar x}_0) \gets (\mu^{\sf x}_\initial, \mu^{\sf x}_\initial)$ in Algorithm \ref{ALGO:SMPC} and $(\boldsymbol{\mu}^{\sf \hat x}_0, \boldsymbol{\mu}^{\sf \bar x}_0) \gets (\boldsymbol{\mu}^{\sf x}_\initial, \boldsymbol{\mu}^{\sf x}_\initial)$ in Algorithm \ref{ALGO:DDSMPC}. With both conditions i) and ii) satisfied, the results (a)-(e) of Claim \ref{CLAIM:PROPOSITION:equivalence_of_control_algorithms} are true for $\kappa = 0$.
\textbf{Inductive Step.} Assume for $\kappa = \overline \kappa$ that results (a)-(e) of Claim \ref{CLAIM:PROPOSITION:equivalence_of_control_algorithms} are true. Due to (a) and (e) of Claim \ref{CLAIM:PROPOSITION:equivalence_of_control_algorithms} for $\kappa = \overline \kappa$, the assumptions i) and ii) in Claim \ref{CLAIM:PROPOSITION:equivalence_of_control_algorithms} for $\kappa = \overline \kappa + N_{\rm c}$ are satisfied, thereby ensuring that the results (a)-(e) of Claim \ref{CLAIM:PROPOSITION:equivalence_of_control_algorithms} with $\kappa = \overline \kappa + N_{\rm c}$ are true.
By induction on $\kappa$, we have the results (a)-(e) of Claim \ref{CLAIM:PROPOSITION:equivalence_of_control_algorithms} for all control steps $\kappa \in \{0, N_{\rm c}, 2N_{\rm c}, \ldots\}$. The results (a), (c), (d) for all $\kappa$ suffices to prove the theorem.
\end{proof}

\Versions{\bibliography{References/brevalias, References/r_MPC, References/r_DDPC}}{% Generated by IEEEtran.bst, version: 1.14 (2015/08/26)

\begin{thebibliography}{10}
\providecommand{\url}[1]{#1}
\csname url@samestyle\endcsname
\providecommand{\newblock}{\relax}
\providecommand{\bibinfo}[2]{#2}
\providecommand{\BIBentrySTDinterwordspacing}{\spaceskip=0pt\relax}
\providecommand{\BIBentryALTinterwordstretchfactor}{4}
\providecommand{\BIBentryALTinterwordspacing}{\spaceskip=\fontdimen2\font plus
\BIBentryALTinterwordstretchfactor\fontdimen3\font minus \fontdimen4\font\relax}
\providecommand{\BIBforeignlanguage}[2]{{%
\expandafter\ifx\csname l@#1\endcsname\relax
\typeout{** WARNING: IEEEtran.bst: No hyphenation pattern has been}%
\typeout{** loaded for the language `#1'. Using the pattern for}%
\typeout{** the default language instead.}%
\else
\language=\csname l@#1\endcsname
\fi
#2}}
\providecommand{\BIBdecl}{\relax}
\BIBdecl

\bibitem{MPC:mayne2014}
D.~Q. Mayne, ``Model predictive control: Recent developments and future promise,'' \emph{Automatica}, vol.~50, no.~12, pp. 2967--2986, 2014.

\bibitem{DDC:markovsky2021}
I.~Markovsky and F.~D{\"o}rfler, ``Behavioral systems theory in data-driven analysis, signal processing, and control,'' \emph{Annu. Rev. Control}, vol.~52, pp. 42--64, 2021.

\bibitem{DDC:markovsky2022}
I.~Markovsky, L.~Huang, and F.~D{\"o}rfler, ``Data-driven control based on the behavioral approach: From theory to applications in power systems,'' \emph{IEEE Control Syst.}, vol.~43, no.~5, pp. 28--68, 2023.

\bibitem{DDC:dorfler2023}
F.~D{\"o}rfler, ``Data-driven control: Part two of two: Hot take: Why not go with models?'' \emph{IEEE Control Syst. Mag.}, vol.~43, no.~6, pp. 27--31, 2023.

\bibitem{DDC:Hou2013}
Z.-S. Hou and Z.~Wang, ``From model-based control to data-driven control: Survey, classification and perspective,'' \emph{Inf. Sci.}, vol. 235, pp. 3--35, 2013.

\bibitem{SMPC:mesbah2016}
A.~Mesbah, ``Stochastic model predictive control: An overview and perspectives for future research,'' \emph{IEEE Control Syst. Mag.}, vol.~36, no.~6, pp. 30--44, 2016.

\bibitem{SMPC:heirung2018}
T.~A.~N. Heirung, J.~A. Paulson, J.~O'Leary, and A.~Mesbah, ``Stochastic model predictive control--how does it work?'' \emph{Comput Chem Eng}, vol. 114, pp. 158--170, 2018.

\bibitem{SMPC:farina2016}
M.~Farina, L.~Giulioni, and R.~Scattolini, ``Stochastic linear model predictive control with chance constraints--a review,'' \emph{J. Process Control}, vol.~44, pp. 53--67, 2016.

\bibitem{SMPC_MPC:kumar2019}
R.~Kumar, J.~Jalving, M.~J. Wenzel, M.~J. Ellis, M.~N. ElBsat, K.~H. Drees, and V.~M. Zavala, ``Benchmarking stochastic and deterministic mpc: A case study in stationary battery systems,'' \emph{AIChE Journal}, vol.~65, no.~7, p. e16551, 2019.

\bibitem{RMPC:bemporad2007}
A.~Bemporad and M.~Morari, ``Robust model predictive control: A survey,'' in \emph{Robustness in identification and control}.\hskip 1em plus 0.5em minus 0.4em\relax Springer, 2007, pp. 207--226.

\bibitem{SPC:huang2008}
B.~Huang and R.~Kadali, \emph{Dynamic modeling, predictive control and performance monitoring: a data-driven subspace approach}.\hskip 1em plus 0.5em minus 0.4em\relax Springer, 2008.

\bibitem{DeePC:coulson2019a}
J.~Coulson, J.~Lygeros, and F.~D{\"o}rfler, ``Data-enabled predictive control: In the shallows of the {DeePC},'' in \emph{Proc. {ECC}}.\hskip 1em plus 0.5em minus 0.4em\relax IEEE, 2019, pp. 307--312.

\bibitem{DeePC:coulson2019b}
------, ``Regularized and distributionally robust data-enabled predictive control,'' in \emph{Proc. IEEE CDC}.\hskip 1em plus 0.5em minus 0.4em\relax IEEE, 2019, pp. 2696--2701.

\bibitem{DeePC:coulson2021}
------, ``Distributionally robust chance constrained data-enabled predictive control,'' \emph{IEEE Trans. Autom. Control}, vol.~67, no.~7, pp. 3289--3304, 2021.

\bibitem{DeePCApp:elokda2021quadcopters}
E.~Elokda, J.~Coulson, P.~N. Beuchat, J.~Lygeros, and F.~D{\"o}rfler, ``Data-enabled predictive control for quadcopters,'' \emph{Int. J. Robust \& Nonlinear Control}, vol.~31, no.~18, pp. 8916--8936, 2021.

\bibitem{DeePCApp:carlet2020motorDrives}
P.~G. Carlet, A.~Favato, S.~Bolognani, and F.~D{\"o}rfler, ``Data-driven predictive current control for synchronous motor drives,'' in \emph{ECCE}.\hskip 1em plus 0.5em minus 0.4em\relax IEEE, 2020, pp. 5148--5154.

\bibitem{DeePCApp:mahdavipour2022combinedCycle}
P.~Mahdavipour, C.~Wieland, and H.~Spliethoff, ``Optimal control of combined-cycle power plants: A data-enabled predictive control perspective,'' \emph{IFAC-PapersOnLine}, vol.~55, no.~13, pp. 91--96, 2022.

\bibitem{DeePCApp:huang2019gridConnected}
L.~Huang, J.~Coulson, J.~Lygeros, and F.~D{\"o}rfler, ``Data-enabled predictive control for grid-connected power converters,'' in \emph{Proc. IEEE CDC}.\hskip 1em plus 0.5em minus 0.4em\relax IEEE, 2019, pp. 8130--8135.

\bibitem{DeePCApp:huang2021oscillationDamping}
------, ``Decentralized data-enabled predictive control for power system oscillation damping,'' \emph{IEEE Trans. Control Syst. Tech.}, vol.~30, no.~3, pp. 1065--1077, 2021.

\bibitem{DeePCApp:zhao2021frequencyRegulation}
Y.~Zhao, T.~Liu, and D.~J. Hill, ``A data-enabled predictive control method for frequency regulation of power systems,'' in \emph{ISGT Europe}.\hskip 1em plus 0.5em minus 0.4em\relax IEEE, 2021, pp. 01--06.

\bibitem{DeePC:huang2021}
L.~Huang, J.~Zhen, J.~Lygeros, and F.~D{\"o}rfler, ``Quadratic regularization of data-enabled predictive control: Theory and application to power converter experiments,'' \emph{IFAC-PapersOnLine}, vol.~54, no.~7, pp. 192--197, 2021.

\bibitem{DeePC:huang2023}
------, ``Robust data-enabled predictive control: Tractable formulations and performance guarantees,'' \emph{IEEE Trans. Autom. Control}, vol.~68, no.~5, pp. 3163--3170, 2023.

\bibitem{DDMPC:berberich2020a}
J.~Berberich, J.~K{\"o}hler, M.~A. M{\"u}ller, and F.~Allg{\"o}wer, ``Data-driven model predictive control with stability and robustness guarantees,'' \emph{IEEE Trans. Autom. Control}, vol.~66, no.~4, pp. 1702--1717, 2020.

\bibitem{DDMPC:berberich2020b}
------, ``Robust constraint satisfaction in data-driven {MPC},'' in \emph{Proc. IEEE CDC}.\hskip 1em plus 0.5em minus 0.4em\relax IEEE, 2020, pp. 1260--1267.

\bibitem{DDMPC:berberich2020c}
------, ``Data-driven tracking {MPC} for changing setpoints,'' \emph{IFAC-PapersOnLine}, vol.~53, no.~2, pp. 6923--6930, 2020.

\bibitem{DDMPC:berberich2021}
------, ``On the design of terminal ingredients for data-driven {MPC},'' \emph{IFAC-PapersOnLine}, vol.~54, no.~6, pp. 257--263, 2021.

\bibitem{DeePC:huang2023RoK}
L.~Huang, J.~Lygeros, and F.~D{\"o}rfler, ``Robust and kernelized data-enabled predictive control for nonlinear systems,'' \emph{IEEE Trans. Control Syst. Tech.}, vol.~32, no.~2, pp. 611--624, 2023.

\bibitem{kerz2023}
S.~Kerz, J.~Teutsch, T.~Br{\"u}digam, M.~Leibold, and D.~Wollherr, ``Data-driven tube-based stochastic predictive control,'' \emph{IEEE Open J. Control Syst.}, vol.~2, pp. 185--199, 2023.

\bibitem{teutsch2024}
J.~Teutsch, S.~Kerz, D.~Wollherr, and M.~Leibold, ``Sampling-based stochastic data-driven predictive control under data uncertainty,'' \emph{IEEE Trans. Autom. Control}, 2025.

\bibitem{wang2025}
Y.~Wang, K.~You, D.~Huang, and C.~Shang, ``Data-driven output prediction and control of stochastic systems: An innovation-based approach,'' \emph{Automatica}, vol. 171, p. 111897, 2025.

\bibitem{yin2024}
M.~Yin, A.~Iannelli, and R.~S. Smith, ``Stochastic data-driven predictive control: Regularization, estimation, and constraint tightening,'' \emph{IFAC-PapersOnLine}, vol.~58, no.~15, pp. 79--84, 2024.

\bibitem{PCE:pan2022a}
G.~Pan, R.~Ou, and T.~Faulwasser, ``On a stochastic fundamental lemma and its use for data-driven optimal control,'' \emph{IEEE Trans. Autom. Control}, vol.~68, no.~10, pp. 5922--5937, 2022.

\bibitem{PCE:pan2022b}
------, ``Towards data-driven stochastic predictive control,'' \emph{Int. J. Robust Nonlinear Control}, vol.~35, no.~7, pp. 2588--2610, 2025.

\bibitem{PCE:pan2023a}
------, ``On data-driven stochastic output-feedback predictive control,'' \emph{IEEE Trans. Autom. Control}, vol.~70, no.~5, pp. 2948 -- 2962, 2024.

\bibitem{PCE:pan2023b}
------, ``Data-driven stochastic output-feedback predictive control: Recursive feasibility through interpolated initial conditions,'' in \emph{L4DC}.\hskip 1em plus 0.5em minus 0.4em\relax PMLR, 2023, pp. 980--992.

\bibitem{kuvcera1972}
V.~Ku{\v{c}}era, ``The discrete {R}iccati equation of optimal control,'' \emph{Kybernetika}, vol.~8, no.~5, pp. 430--447, 1972.

\bibitem{OFSMPC:cannon2012}
M.~Cannon, Q.~Cheng, B.~Kouvaritakis, and S.~V. Rakovi{\'c}, ``Stochastic tube {MPC} with state estimation,'' \emph{Automatica}, vol.~48, no.~3, pp. 536--541, 2012.

\bibitem{OFSMPC:farina2015}
M.~Farina, L.~Giulioni, L.~Magni, and R.~Scattolini, ``An approach to output-feedback {MPC} of stochastic linear discrete-time systems,'' \emph{Automatica}, vol.~55, pp. 140--149, 2015.

\bibitem{OFSMPC:joa2023}
E.~Joa, M.~Bujarbaruah, and F.~Borrelli, ``Output feedback stochastic {MPC} with hard input constraints,'' in \emph{Proc. {ACC}}.\hskip 1em plus 0.5em minus 0.4em\relax IEEE, 2023, pp. 2034--2039.

\bibitem{OFSMPC:ridderhof2020}
J.~Ridderhof, K.~Okamoto, and P.~Tsiotras, ``Chance constrained covariance control for linear stochastic systems with output feedback,'' in \emph{Proc. IEEE CDC}.\hskip 1em plus 0.5em minus 0.4em\relax IEEE, 2020, pp. 1758--1763.

\bibitem{OFSMPC:hokayem2012}
P.~Hokayem, E.~Cinquemani, D.~Chatterjee, F.~Ramponi, and J.~Lygeros, ``Stochastic receding horizon control with output feedback and bounded controls,'' \emph{Automatica}, vol.~48, no.~1, pp. 77--88, 2012.

\bibitem{Kohler2022}
J.~K{\"o}hler and M.~N. Zeilinger, ``Recursively feasible stochastic predictive control using an interpolating initial state constraint,'' \emph{IEEE Control Syst. Let.}, vol.~6, pp. 2743--2748, 2022.

\bibitem{Schluter2022}
H.~Schl{\"u}ter and F.~Allg{\"o}wer, ``Stochastic model predictive control using initial state optimization,'' \emph{IFAC-PapersOnLine}, vol.~55, no.~30, pp. 454--459, 2022.

\bibitem{Ono2008}
M.~Ono and B.~C. Williams, ``Iterative risk allocation: A new approach to robust model predictive control with a joint chance constraint,'' in \emph{Proc. IEEE CDC}.\hskip 1em plus 0.5em minus 0.4em\relax IEEE, 2008, pp. 3427--3432.

\bibitem{goulart2006}
P.~J. Goulart, E.~C. Kerrigan, and J.~M. Maciejowski, ``Optimization over state feedback policies for robust control with constraints,'' \emph{Automatica}, vol.~42, no.~4, pp. 523--533, 2006.

\bibitem{EXTENDED}
R.~Li, J.~W. Simpson-Porco, and S.~L. Smith, ``Stochastic data-driven predictive control with equivalence to stochastic {MPC},'' \emph{arXiv preprint arXiv:2312.15177}, 2023.

\bibitem{Cannon2009}
M.~Cannon, B.~Kouvaritakis, and X.~Wu, ``Model predictive control for systems with stochastic multiplicative uncertainty and probabilistic constraints,'' \emph{Automatica}, vol.~45, no.~1, pp. 167--172, 2009.

\bibitem{Hewing2020}
L.~Hewing, K.~P. Wabersich, and M.~N. Zeilinger, ``Recursively feasible stochastic model predictive control using indirect feedback,'' \emph{Automatica}, vol. 119, p. 109095, 2020.

\bibitem{EKFDeePC:alpago2020}
D.~Alpago, F.~D{\"o}rfler, and J.~Lygeros, ``An extended {K}alman filter for data-enabled predictive control,'' \emph{IEEE Control Syst. Let.}, vol.~4, no.~4, pp. 994--999, 2020.

\bibitem{n4sid:van1994}
P.~Van~Overschee and B.~De~Moor, ``{N4SID}: Subspace algorithms for the identification of combined deterministic-stochastic systems,'' \emph{Automatica}, vol.~30, no.~1, pp. 75--93, 1994.

\bibitem{Willems2005}
J.~C. Willems, P.~Rapisarda, I.~Markovsky, and B.~L. De~Moor, ``A note on persistency of excitation,'' \emph{IFAC Syst \& Control L}, vol.~54, no.~4, pp. 325--329, 2005.

\bibitem{greville1966}
T.~N.~E. Greville, ``Note on the generalized inverse of a matrix product,'' \emph{Siam Review}, vol.~8, no.~4, pp. 518--521, 1966.

\end{thebibliography}

}

%\vskip -2.5\baselineskip plus -1fil

\begin{IEEEbiography}[{\includegraphics[width=1in,height=1.25in,trim={.4in 0 .4in 0},clip,keepaspectratio]{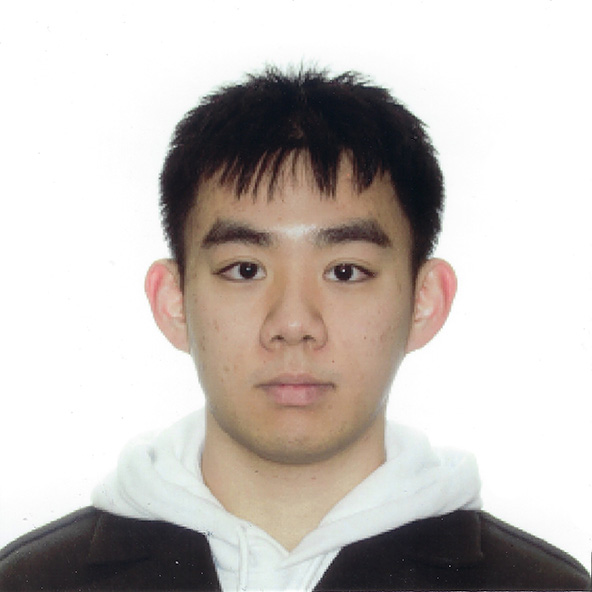}}]{Ruiqi Li} (S'22) received the B.Sc. degree in Honours Physics from the University of Waterloo, ON, Canada in 2019 and the B.Sc. degree in physics from Beijing Institute of Technology, Beijing, China in 2019. He is currently working towards the Ph.D. degree in Electrical and Computer Engineering at the University of Waterloo, ON, Canada.
His research interest includes data-driven control, model predictive control and optimization.
\end{IEEEbiography}
%\vskip -2.5\baselineskip plus -1fil

\begin{IEEEbiography}[{\includegraphics[width=1in,height=1.25in,clip,keepaspectratio]{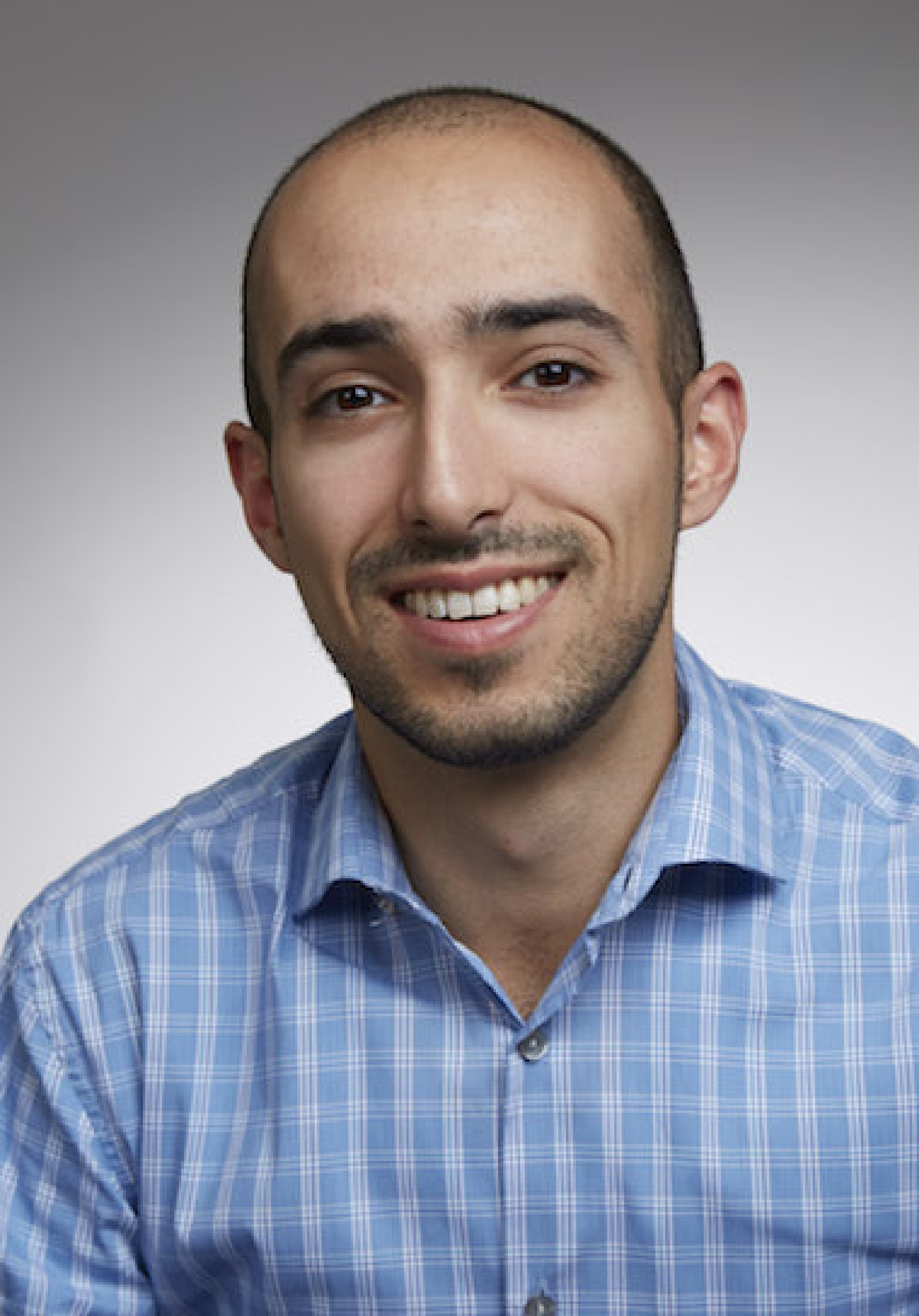}}]{John W. Simpson-Porco} (S'10--M'15--SM'22) received the B.Sc. degree in engineering physics from Queen's University, Kingston, ON, Canada in 2010, and the Ph.D. degree in mechanical engineering from the University of California at Santa Barbara, Santa Barbara, CA, USA in 2015.
He is currently an Associate Professor of Electrical and Computer Engineering at the University of Toronto, Toronto, ON, Canada. He was previously an Assistant Professor at the University of Waterloo, Waterloo, ON, Canada and a visiting scientist with the Automatic Control Laboratory at ETH Z\"{u}rich, Z\"{u}rich, Switzerland. His research focuses on feedback control theory and applications of control in modernized power grids.

Prof. Simpson-Porco is a recipient of the Automatica Paper Prize, the Center for Control, Dynamical Systems and Computation Best Thesis Award, and the IEEE PES Technical Committee Working Group Recognition Award for Outstanding Technical Report. He received the Early Researcher Award from the Province of Ontario in 2024.
\end{IEEEbiography}
%\vskip -2.5\baselineskip plus -1fil

\begin{IEEEbiography}[{\includegraphics[width=1in,height=1.25in,clip,keepaspectratio]{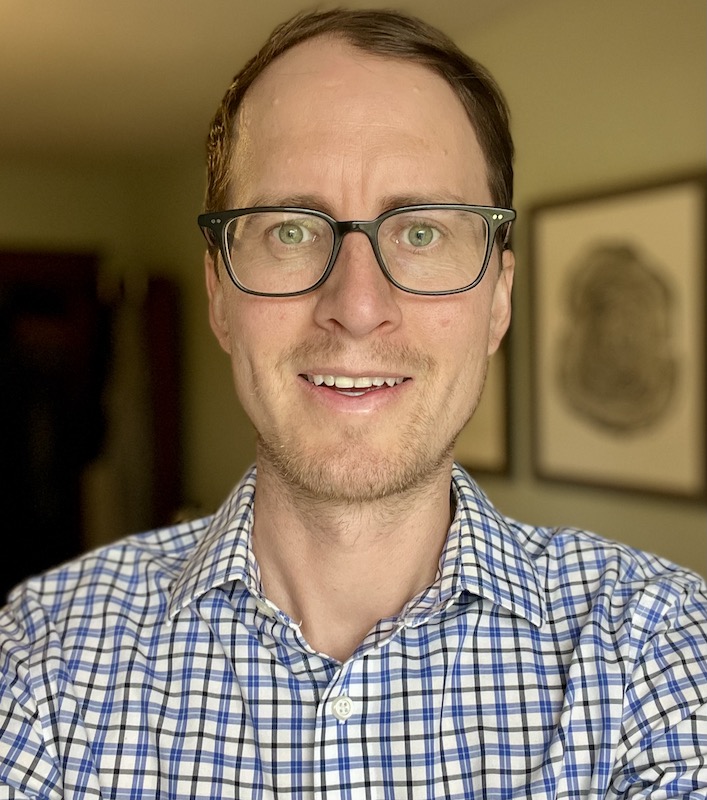}}]{Stephen L.\ Smith} (S'05--M'09--SM'15) received the B.Sc. degree in engineering physics from Queen’s University, Canada, in 2003, the
M.A.Sc. degree in electrical and computer engineering from the University of Toronto, Canada, in 2005, and the Ph.D. degree in mechanical engineering from the University of California, Santa Barbara, USA, in 2009. 
He is currently a Professor in the Department of Electrical and Computer Engineering at the University of Waterloo, Canada, where he holds a Canada Research Chair in Autonomous Systems. He is also Co-Director of the Waterloo Artificial Intelligence Institute. From 2009 to 2011 he was a Postdoctoral Associate in the Computer Science and Artificial Intelligence Laboratory at MIT. He received the Early Researcher Award from the the Province of Ontario in 2016, the NSERC Discovery Accelerator Supplement Award in 2015, and Outstanding Performance Awards from the University of Waterloo in 2016 and 2019.

He is a licensed Professional Engineer (PEng), an Associate Editor of the IEEE Transactions on Robotics and the IEEE Open Journal of Control Systems.  He was previously Associate Editor for the IEEE Transactions on Control of Network Systems (2017 - 2022), and was a General Chair of the 2021 30th IEEE International Conference on Robot and Human Interactive Communication (RO-MAN). His main research interests lie in control and optimization for autonomous systems, with a particular emphasis on robotic motion planning and coordination.
\end{IEEEbiography}

\end{document}